\theoremstyle{definition}
\newtheorem{theorem}{Theorem}
\newtheorem{definition}{Definition}
\newtheorem{corollary}{Corollary}
\newtheorem{lemma}{Lemma}
\newtheorem{remark}{Remark}
\newtheorem{claim}{Claim}
\newtheorem{prop}{Proposition}
\newcommand{\QMA}{\textup{QMA}}
\newcommand{\DQC}{\textup{DQC$_1$}}
\newcommand{\nats}{{\mathbb N}}
\begin{document}

\title{Quantum Walks on Simplicial Complexes and Harmonic Homology: Application to Topological Data Analysis with Superpolynomial Speedups}

\author{Ryu Hayakawa}
\affiliation{The Hakubi Center/Yukawa Institute for Theoretical Physics, Kyoto University, Japan}
\author{Kuo-Chin Chen}
\affiliation{Foxconn Research, Taipei, Taiwan}
\author{Min-Hsiu Hsieh}
\affiliation{Foxconn Research, Taipei, Taiwan}
\maketitle

\begin{abstract}
Incorporating higher-order interactions in information processing enables us to build more accurate models, gain deeper insights into complex systems, and address real-world challenges more effectively. However, existing methods, such as random walks on oriented simplices and homology, which capture these interactions, are not known to be efficient. This work investigates whether quantum walks on simplicial complexes exhibit quantum advantages. 
We introduce a novel quantum walk that encodes the combinatorial Laplacian, a key mathematical object whose spectral properties reflect the topology of the underlying simplicial complex. Furthermore, we construct a unitary encoding that projects onto the kernel of the Laplacian, representing the space of harmonic cycles in the complex's homology. Combined with the efficient construction of quantum walk unitaries for clique complexes that we present, this paves the way for utilizing quantum walks to explore higher-order interactions within topological structures. Specifically, our construction requires $\mathcal{O}(n^3\log(1/\epsilon)/\lambda_k)$ gates, where $n$ is the number of vertices, $\lambda_k$ is the smallest non-zero eigenvalue of the combinatorial Laplacian, and $\epsilon$ is the error of the projection, for the implementation of the projector onto the harmonic cycles. 
{Thus, our results indicate apparent superpolynomial quantum speedup with quantum 
walks without relying on quantum oracles for large datasets, 
provided that the spectral gap of the combinatorial Laplacian 
is inverse-polynomially bounded and efficient sampling of simplices
is available.}

Crucially, the walk operates on a state space encompassing both positively and negatively oriented simplices, effectively doubling its size compared to unoriented approaches. Through coherent interference of these paired simplices, we are able to successfully encode the combinatorial Laplacian, which would otherwise be impossible. This observation constitutes our major technical contribution. 
We also extend the framework by constructing variant quantum walks. These variants enable us to: (1) estimate the normalized persistent Betti numbers, capturing topological information throughout a deformation process, (2) verify a specific QMA$_1$-hard problem related to the homology of clique complexes, showcasing potential applications in computational complexity theory, {and (3) solve the high-dimensional discrete 
Dirichlet problem (HDDP), a generalization of the classical 
discrete Dirichlet problem on graphs to the setting of 
simplicial complexes, with an apparent superpolynomial speedup over the best
known classical algorithm.}
\end{abstract}

\section{Introduction}

\label{sec:randomwalks}

Graph-based information processing has seen extensive success over several decades. However, a fundamental limitation of graphs is their inability to model beyond pairwise interactions. Recent evidence has highlighted the importance of higher-order interactions, where links can connect more than two nodes. Consequently, higher-order networks have emerged as a crucial frontier in complex network science~\cite{bick2023higher, battiston2020networks}. A prominent model in this domain is simplicial complexes, which are natural generalizations of graphs to higher-order structures. 

Just as random walks on graphs have been important tools to solve graph problems, random walks on simplicial complexes can play {an important} role in analyzing higher-order networks. Graph Laplacian dynamics and homology are significant concepts in graph-based information processing. Therefore, it is natural to explore the theoretical connection between higher-order random walks and the topological properties of underlying simplicial complexes. Recent breakthroughs have revealed close relationships between the topological properties of simplicial complexes and random walks on them~\cite{parzanchevski2017simplicial, mukherjee2016random, schaub2020random, rosenthal2014simplicial, eidi2023irreducibility}.

The concept of orientation becomes crucial when relating random walks on simplicial complexes to their homology. While $0$-simplices are merely vertices and lack a natural orientation, for $k\geq 1$, there exists a clear distinction between positive and negative orientations of simplices. For instance, in the case of $1$-simplices (edges), positive and negative orientations (directions) can be discerned. Analogously, similar notions of positive and negative orientation can be extended to higher-dimensional simplices. Random walks described in~\cite{parzanchevski2017simplicial, mukherjee2016random, schaub2020random} are conducted on such oriented simplicial complexes. In essence, these approaches treat identical simplices with different orientations as distinct states. By exploring random walks on such oriented simplicial complexes, one can examine the {\it difference} between the probabilities of being in a positive simplex and its corresponding negative simplex. This process, defined through the difference of probabilities, is termed the ``expectation process"~\cite{parzanchevski2017simplicial, mukherjee2016random}. It has been demonstrated in~\cite{mukherjee2016random} that the normalized version of the expectation process exhibits a rigorous connection with the projector onto the harmonic homology group of the underlying simplicial complex. 
It is shown in~\cite{mukherjee2016random}
that performing random walks from some initial simplex and executing the normalized expectation process is equivalent to approximately applying the projector onto  the harmonic homology group to that initial state. 
It is shown as long as the spectral gap of the combinatorial Laplacian operator is inverse-polynomially large, the normalized expectation process can implement the projector with high precision with polynomial time steps. 
However, since the normalized expectation process differs from the conventional random walk itself, it remains open whether the process can be simulated efficiently, especially in (extremely) high-dimensional cases.

Analysis of higher-order networks has recently gained attention in the quantum computing literature. One significant focus has been on efficiently performing topological data analysis (TDA)~\cite{edelsbrunner2022computational, dey2022computational} in the high-dimensional regime using quantum computing.
In TDA, input data is transformed into a family of simplicial complexes (referred to as filtration), and the goal is to analyze how topological invariants vary across the filtered simplicial complexes. The primary topological invariants of interest in TDA are known as persistent Betti numbers. While classical TDA has become a popular field of data analysis, classical algorithms face limitations in analyzing high-dimensional persistent Betti numbers. 
This is because the number of simplices can grow superpolynomially in the number of vertices.

As exact computation is even intractable for quantum computers \footnote{Indeed, exactly calculating the Betti numbers is known to be \#P-hard for clique complexes~\cite{schmidhuber2023complexity} and more precisely, it is \#\QMA$_1$-hard~\cite{crichigno2022clique}.}, in quantum TDA (QTDA), a relaxation of the condition of estimating the persistent Betti numbers is usually considered, namely the estimation of the normalized Betti numbers with inverse-polynomial additive precision. 
The first QTDA algorithm was proposed by Lloyd, Garnerone, and Zanardi~\cite{lloyd2016quantum}. 
The estimation of the normalized Betti number has been extended to the estimation of the normalized persistent Betti numbers~\cite{hayakawa2022quantum, mcardle2022streamlined}. QTDA algorithms exhibit superpolynomial speedup compared to the best-known classical algorithms, because there is no known efficient classical algorithm for solving the NBE problem with inverse polynomial precision in $n$.

Utilizing Quantum Walks (QW) based algorithms has shown significant speed enhancements compared to classical algorithms across various scenarios, and demonstrated polynomial speedup over classical algorithms \cite{aaronson2004quantum,szegedy2004quantum,buhrman2006quantum,magniez2007quantum,somma2008quantum,apers2019quantum,ambainis2020quadratic}.
However, there are a few problems that exhibit exponential speedup.
The first QW algorithm exhibiting exponential speedup was proposed in \cite{childs2003exponential}, which utilized continuous-time quantum walks to traverse the welded-tree graph with height $n$ by using $O(n^6)$ queries, demonstrating that no classical algorithm can match its efficiency. 
This problem can be solved by a multidimensional QW algorithm with a better complexity\cite{jeffery2023multidimensional}, and can be extended to a broad class of hierarchical graphs\cite{balasubramanian2023exponential}.
Although these results show an exponential advantage over the classical algorithms, the advantage holds in the {\it query complexity} with oracle access to exponentially large data. 
Therefore, it does not mean that these QW algorithms demonstrate superpolynomial speedup in the realistic computation time.

There are a few existing quantum walk results considering higher dimensional space than the graphs. 
The first quantum walk on simplicial complexes is proposed by Matsue, Ogurisu, and Segawa in \cite{matsue2016quantum}. They extend Szegedy’s QW on simplicial complexes, and these quantum walks possess linear spreading and localization as the Grover walk. 
Later on, they proposed a QW algorithm that can find a marked simplex polynomially faster than the classical counterpart \cite{matsue2018quantum}. 
However, these results do not show a superpolynomial quantum advantage.
{Another line of work introduces Grover-type quantum walks constructed using coin operators together with shift operators that encode the orientation of each simplex \cite{luo2018up}. The corresponding discriminant operators are closely related to the up and down combinatorial Laplacians, revealing a deep connection between the quantum walk dynamics and the underlying geometry of the simplicial complex. However, these results are primarily structural and do not directly address dynamical aspects such as mixing behavior.}
In addition, it has been numerically observed that quantum walks on simplicial complexes reflect the topology of simplicial complexes~\cite{matsue2016quantum}. However, the connection between the quantum walks and topology of simplicial complexes has not been rigorously proven.

Even though there is no known efficient way to simulate the normalized expectation process in arbitrary dimensions, 
it indeed paves new applications for signal processing and learning tasks on simplicial complexes in constant dimensions such as the random walks on edges or triangles. 
For example, random walks on oriented edges have been applied to the task of label propagation on edges in~\cite{mukherjee2016random}. 
In~\cite{schaub2020random}, similar higher-order random walks are applied to the task of edge flow estimation and Pagerank vector on simplicial complexes. For a review, see~\cite{schaub2021signal,battiston2020networks}. 

\subsection{Main results}

In this paper, we show a rigorous relationship between quantum walks and homology for the first time. 
Our main result concerns the implementation of projectors onto subspaces that are related to the topology of simplicial complexes with quantum walks. 
Let us focus on the $k$-th {\it harmonic homology group}, denoted by $\mathcal{H}_k$.
For a trivial $\mathcal{H}_k$, the simplicial complex has no ``$k$-dimensional holes''. Conversely, the dimension of $\mathcal{H}_k$ reflects the number of such holes, making it highly relevant to the complex's topology.
We can relate $\mathcal{H}_k$ to a positive semi-definite hermitian operator called the combinatorial Laplacian $\Delta_k$. 
It is known that 
$\mathcal{H}_k=\ker (\Delta_k)$.  
Our main result can be stated as follows. 

{
\begin{theorem}[Quantum walk based projection onto harmonics (informal statement of Theorem
\ref{thm:main})]
\label{thm:informal_projectors}
Let $X$ be a simplicial complex over $n$ vertices such that 
$\lambda_k>1/poly(n)$, where $\lambda_k$ is the smallest non-zero eigenvalue of $\Delta_k$. 
Then, we can implement a unitary encoding of an operator that is inverse-exponentially close to the projector onto $\mathcal{H}_k$ with $poly(n)$ use of $U^{h}$ and other elementary quantum gates. 
Here, $U^h$ is a harmonic quantum walk unitary operator in which a Markov transition matrix $P$ on oriented $k$-simplices of eq.~\eqref{eq:Markov_chain} is encoded using $\alpha$-ancilla qubits i.e., $U^h$ is a unitary s.t.
$$  \left(\Pi^\pm_{k}\otimes\bra{0^{\alpha}}\right)U^{h} \left(\Pi^\pm_{k}\otimes \ket{0^{\alpha}}\right)= \Pi^\pm_{k}P\Pi^\pm_{k},
$$
where $\Pi^\pm_{k}$ is a projector onto the space spanned by positively and negatively oriented $k$-simplices.
\end{theorem}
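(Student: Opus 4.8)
The plan is to reduce the construction of the harmonic projector to a spectral transformation of the block-encoded operator and to realize that transformation with the quantum singular value (equivalently, eigenvalue) transformation. Concretely, I would proceed in three stages: first pin down the spectral dictionary between $P$ and $\Delta_k$ on the oriented subspace, then choose a polynomial that approximates the projector onto the relevant eigenspace, and finally implement that polynomial through repeated use of $U^{h}$.

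First I would analyze the Hermitian operator $A := \Pi^\pm_{k}P\Pi^\pm_{k}$ that $U^{h}$ block-encodes. The essential point---and the place where the two-sided (positive/negative) orientation structure does the real work---is that although $P$ is merely a row-stochastic Markov matrix and hence not self-adjoint on the full oriented state space, its restriction to the antisymmetric sector singled out by $\Pi^\pm_{k}$ is self-adjoint and is an affine image of the combinatorial Laplacian, $A = I - c\,\widehat{\Delta}_k$, for an explicit normalization constant $c$ and a normalized Laplacian $\widehat{\Delta}_k$ sharing the kernel of $\Delta_k$. I would verify this by writing $P$ through the boundary and coboundary maps $\partial_{k+1},\partial_k$ and checking that the coherent cancellation between a simplex and its oppositely oriented partner converts the unsigned stochastic off-diagonal entries into the signed entries of $\Delta_k$; this is precisely the ``coherent interference of paired simplices'' flagged as the main technical phenomenon. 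Under this dictionary $\mathcal{H}_k=\ker(\Delta_k)$ corresponds to the top eigenspace of $A$ at eigenvalue $1$, and the hypothesis $\lambda_k>1/\mathrm{poly}(n)$ forces the remaining spectrum of $A$ below $1-c'\lambda_k$ for a polynomially-bounded $c'$.

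Next I would select a real polynomial $p$ approximating the indicator of the top eigenvalue: $p(1)=1$ while $|p(x)|\le\epsilon$ for $x\in\left[\lambda_{\min}(A),\,1-c'\lambda_k\right]$. Standard constructions---shifted Chebyshev or error-function approximations of a step located at the spectral gap---achieve this with $\deg p = O\!\left(\lambda_k^{-1}\log(1/\epsilon)\right)$, so that $\deg p$ is polynomial in $n$ whenever $\epsilon$ is inverse-exponential. The final step applies the eigenvalue transformation to the Hermitian block $A$ using the block-encoding $U^{h}$: a sequence of $\deg p$ interleaved calls to $U^{h}$, its inverse, and single-qubit phase rotations yields a unitary encoding $p(A)$, and by the spectral dictionary $p(A)$ is within $\epsilon$ in operator norm of the exact projector onto $\mathcal{H}_k$. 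Counting gates, each invocation of $U^{h}$ costs the $O(n^{3})$ of the walk construction and there are $O\!\left(\lambda_k^{-1}\log(1/\epsilon)\right)$ of them, matching the advertised $O\!\left(n^{3}\log(1/\epsilon)/\lambda_k\right)$ bound.

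The step I expect to be the main obstacle is the first: proving cleanly that $\Pi^\pm_{k}P\Pi^\pm_{k}$ is genuinely self-adjoint and equals an affine image of $\Delta_k$ with exactly the right kernel. A naive unoriented walk only ever exposes $|P|$-type, unsigned data and therefore cannot see the signed Laplacian at all; it is only after doubling the state space and antisymmetrizing that the relevant cancellations occur, and one must also check that the normalization $c$ keeps the entire spectrum of $A$ inside $[-1,1]$ so that the eigenvalue transformation is valid. Once Hermiticity and the affine dictionary are in hand, the polynomial-approximation and transformation steps are standard.
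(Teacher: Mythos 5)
Your overall strategy matches the paper's: double the state space with orientations, use interference between a simplex and its oppositely oriented partner to turn the unsigned stochastic entries of $P$ into the signed entries of $\Delta_k$, then apply a QSVT step/rectangle polynomial of degree $O(\lambda_k^{-1}\log(1/\epsilon))$ located at the spectral gap. Your choice of spectral dictionary ($A=I-c\widehat{\Delta}_k$ with $\mathcal{H}_k$ as the eigenvalue-$1$ eigenspace) differs from the paper's, which designs $P$ so that the antisymmetrized block is $+\Delta_k/(K\sqrt{2})$ and $\mathcal{H}_k$ is the kernel; either convention is workable since the theorem lets you choose $P$, so this is not by itself an error.

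The genuine gap is in how you propose to access the signed operator. You write that the antisymmetric sector is ``singled out by $\Pi^\pm_k$,'' but $\Pi^\pm_k$ projects onto the span of \emph{all} oriented $k$-simplices, both orientations included. The operator $A=\Pi^\pm_k P\Pi^\pm_k$ that $U^h$ block-encodes is block-diagonal with respect to the symmetric/antisymmetric decomposition under orientation reversal: on the antisymmetric sector it is indeed (an affine image of) $\Delta_k$, but on the symmetric sector it encodes the \emph{unsigned} walk, whose spectrum has nothing to do with homology. If you run the eigenvalue transformation on $U^h$ with the projector $\Pi^\pm_k\otimes\ket{0^\alpha}\bra{0^\alpha}$, you obtain $p(A)$ on both sectors, and since a positively oriented basis state $\ket{\sigma_+}$ has equal weight on the two sectors, the final $\Pi_k$-projected block is roughly $\frac{1}{2}\bigl(p(A_{\mathrm{sym}})+\mathrm{Proj}(\mathcal{H}_k)\bigr)$, which is not close to $\mathrm{Proj}(\mathcal{H}_k)$. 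You need an explicit mechanism that removes the symmetric component before transforming. The paper's fix is cheap but essential: apply $HZ$ to the orientation qubit after a walk step, so that the output projection $\Pi_k\otimes\bra{0^\alpha}$ composed with $HZ$ becomes the antisymmetric combination $(\bra{\sigma_+}-\bra{\sigma_-})/\sqrt{2}$ while the input remains $\ket{\sigma_+}$; the resulting unitary $V^h$ is a $\Pi_k$-projected encoding of exactly $\Delta_k/(K\sqrt{2})$ (Proposition~\ref{prop:unitaryencoding}), and QSVT is then applied to $V^h$ with the projector $\Pi_k$, not to $U^h$ with $\Pi^\pm_k$. Once this step is inserted, the rest of your argument (Hermiticity of the block, the rectangle polynomial of Lemma~\ref{lemma:rectangle}, the degree count) goes through as in Section~\ref{sec:proof_projectors}.
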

}

In addition to the implementation of the projector onto $\mathcal{H}_k$, we implement quantum encodings of projectors onto other subspaces related to the topology 
based on variants of quantum walks on simplicial complexes ({See the formal statement of Theorem~\ref{thm:main}}).
These projectors play essential roles in the application of quantum walks to the estimation of the normalized {\it persistent} Betti numbers, which is an indispensable topological invariant in TDA. 
In total, we construct three quantum walk algorithms and they are referred to as up, down, and harmonic quantum walks.

Our second contribution is that we provide an explicit construction of quantum walk unitaries in Section~\ref{sec:clique_construction}. Especially, we show that we can efficiently construct quantum walks unitaries for {\it clique complexes} given access to the graph adjacency matrix, which is (exponentially) succinct compared to the size of the simplicial complex. 

\begin{theorem}[Informal statement of Theorem~\ref{theorem:clique}]
\label{thm:informal_clique}
Let $X$ be a vertex-weighted clique complex over $n$ vertices. 
Then, we can construct unitaries $U^{up}$, $U^{down}$, and $U^{h}$ 
that are unitary encodings of up, down, and harmonic Markov transition matrices with error $\epsilon$ using $\alpha=\mathcal{O}(n)$ ancilla qubits
with ${\mathcal{O}}(n^2\log(1/\epsilon))$-number of gates, respectively. 
\end{theorem}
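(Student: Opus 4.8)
The plan is to reduce the construction of each walk unitary to a handful of coherent subroutines whose cost is governed by the graph adjacency data rather than by the (superpolynomially large) number of simplices. First I would fix an encoding of oriented $k$-simplices as computational basis states: represent a $k$-simplex by the $n$-bit indicator string of its vertex set, together with a sign qubit recording the orientation relative to the canonical (sorted) vertex ordering. This representation uses $\mathcal{O}(n)$ qubits, which accounts for the $\alpha=\mathcal{O}(n)$ ancilla budget, and it lets membership in the clique complex be decided locally from the graph: a set is a simplex precisely when its vertices are pairwise adjacent.

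Next, I would exploit the decomposition $\Delta_k=\partial_{k+1}\partial_{k+1}^\dagger+\partial_k^\dagger\partial_k$, so that the up, down, and harmonic transition matrices are built, respectively, from the coface map $\partial_{k+1}^\dagger$, the face map $\partial_k$, and their combination. Accordingly, I would realize each of $U^{up}$, $U^{down}$, $U^{h}$ as a Szegedy-type block encoding assembled from a single ``neighbor state-preparation'' isometry $T$ that sends $\ket{\sigma}$ to a normalized superposition over its neighbors carrying the correct signed, weighted amplitudes. For the down walk, $T$ coherently superposes over the $k+1$ present vertices to delete; for the up walk, it superposes over the (at most $n$) candidate vertices to insert. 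The block-encoding identity $\left(\Pi^\pm_{k}\otimes\bra{0^{\alpha}}\right)U^{h}\left(\Pi^\pm_{k}\otimes\ket{0^{\alpha}}\right)=\Pi^\pm_{k}P\Pi^\pm_{k}$ then follows from the standard fact that $T^\dagger(\mathrm{SWAP})\,T$ reproduces the associated discriminant, which here coincides with $\Pi^\pm_{k}P\Pi^\pm_{k}$ because the orientation structure makes the encoded operator symmetric.

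The heart of the construction is implementing $T$ coherently, which breaks into three pieces. First, a \emph{cliqueness test}: to generate cofaces we must verify that the candidate vertex is adjacent to every vertex already present, i.e.\ AND the relevant adjacency-matrix row against the indicator string; because this circuit must act on a superposed candidate vertex, it has to incorporate the full $n\times n$ adjacency data, which is what produces the dominant $\mathcal{O}(n^2)$ gate cost. Second, a \emph{coherent sign computation}: the boundary signs $(-1)^i$ are obtained by reversibly counting how many present vertices precede the inserted or deleted vertex, an $\mathcal{O}(n)$ arithmetic operation that is exactly the orientation bookkeeping on which the rest of the paper depends. Third, \emph{amplitude synthesis}: loading the vertex weights, forming the appropriate simplex weight, and synthesizing its square root via controlled rotations to precision $\epsilon$, which contributes the $\log(1/\epsilon)$ factor.

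The main obstacle I anticipate is not any single subroutine but guaranteeing that the pieces compose into an \emph{exact} block encoding on $\Pi^\pm_{k}$ with clean ancillas. The cliqueness and sign registers must be uncomputed so that no garbage leaks outside the $\ket{0^{\alpha}}$ block, and the face- and coface-branch amplitudes must be normalized consistently so that $U^{h}$ encodes $\partial_{k+1}\partial_{k+1}^\dagger+\partial_k^\dagger\partial_k$ rather than an unbalanced mixture. I would resolve this by checking the block-encoding equation branch-by-branch on basis states $\ket{\sigma}$, verifying that the coherent interference between a simplex and its oppositely oriented partner reproduces the Laplacian entries exactly, and then propagating the $\epsilon$-error from the rotation synthesis through the usual additive block-encoding error bounds to obtain the stated accuracy.
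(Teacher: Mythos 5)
Your overall strategy is the one the paper actually uses: the indicator-string-plus-orientation-qubit encoding, a Szegedy-type state-preparation isometry $T$ with $T^\dagger\cdot\mathrm{SWAP}\cdot T$ recovering the (symmetric) transition matrix, a pairwise-adjacency cliqueness test accounting for the $\mathcal{O}(n^2)$ gate cost, an $\mathcal{O}(n)$ reversible parity computation for the induced orientations, and amplitude synthesis contributing the $\log(1/\epsilon)$ factor. However, two steps as you describe them would fail.

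First, your neighbor enumeration is only half of what is needed. For the up walk you say $T$ ``superposes over the (at most $n$) candidate vertices to insert,'' which sends a $k$-simplex to a superposition over its \emph{cofaces}, i.e.\ $(k{+}1)$-simplices. The up walk's neighbors are other $k$-simplices sharing a coface, so the second register of $T\ket{\sigma}\ket{0}$ must contain $k$-simplices; otherwise the overlap $\bra{\sigma'}\bra{0}T^\dagger\,\mathrm{SWAP}\,T\ket{\sigma}\ket{0}$ pairs a $(k{+}1)$-simplex against a $k$-simplex and vanishes identically. The paper's construction is a two-step ``up-down'' enumeration (insert a vertex $i$, then delete a vertex $j$ of the resulting coface, summing over both indices $i\in[n-k-1]$, $j\in[k+2]$, with the $j$ that returns to $\sigma$ automatically encoding the laziness), and symmetrically ``down-up'' for the down walk. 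Second, you have no mechanism for candidate cofaces that fail the cliqueness test. If you simply drop them, the ``normalized superposition over its neighbors'' acquires a $\sigma$-dependent normalization, and $T^\dagger\,\mathrm{SWAP}\,T$ no longer encodes a single Markov matrix $P$ with a uniform constant $K^{up}$; the paper resolves this by routing the rejected branches to an absorbing state $\Theta$ (an extra flag qubit in the simplex register), so that $T$ remains an isometry with the global normalization $K^{up}=\max_\sigma\sum_{i,j}w(i(\sigma))w(\tilde{j}(\sigma^{\uparrow}(i)))$ and $P$ is sub-stochastic on $X^\pm_k$ with the leftover mass on $\Theta$. A smaller point: the theorem asks for an encoding of the nonnegative transition matrices themselves; the signed Laplacian entries are recovered only afterwards by interfering the two orientations with an $HZ$ gate, so the amplitudes of $T$ should be the unsigned $\sqrt{P_{\sigma,\sigma'}}$ with the sign information carried solely in the orientation qubit, exactly as your parity subroutine provides.
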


Our results on the encoding of Laplacians, implementation of projectors, and construction of quantum walk unitaries are summarized in Table~\ref{tab:results}.

\begin{table}
    \centering
    \begin{tabular}{ccccc} 
    \multirow{2}{*}{QW}&\multirow{2}{*}{Quantization of}&\multirow{2}{*}{Encoding}&$\epsilon$-approx. projector (Thm.~\ref{thm:main})& \multirow{2}{*}{Efficient construction}\\
         &   &  & \#-of use of QW unitaries & \\ \hline
         \multirow{2}{*}{$U^{up}$}&  $P^{up}$ &  $\Delta^{up}_k$ &  $Z^k$, $B_k$& 
         \\
 & (Def.~\ref{def:down_random_walk})& (Prop.~\ref{prop:unitaryencoding_up})& $\mathcal{O}(K^{up}\log(1/\epsilon)/\lambda_k^{up})$ & Clique complexes\\ \cline{1-4}
         \multirow{2}{*}{$U^{down}$}&  $P^{down} $ &  $\Delta^{down}_k$ &  $Z_k$, $B^k$& $\mathcal{O}(n^2)$-gates\\
 & (Def.~\ref{def:up_random_walk})& (Prop.~\ref{prop:unitaryencoding_down})&$\mathcal{O}(K^{down}\log(1/\epsilon)/\lambda_k^{down})$ & (Thm.~\ref{theorem:clique})\\ \cline{1-4}
         \multirow{2}{*}{$U^h$}&  $P$ &  $\Delta_k$ &  $\mathcal{H}_k$&  \\
 & (Def.~\ref{def:harmonic_random_walk})& (Prop~\ref{prop:unitaryencoding})& $\mathcal{O}(K\log(1/\epsilon)/\lambda_k)$  &\\ \hline
    \end{tabular}
    \caption{Our results on the encoding of Laplacians, implementation of projectors, and construction of quantum walk unitaries. Here, $n$ is the number of vertices. $Z^k$, $B_k$, $Z_k$, $B^k$, $\mathcal{H}_k$ are $k$-cocycle, $k$-boundary, $k$-cycle, $k$-coboundary and $k$-harmonics. $\lambda_k^{up}$, $\lambda_k^{down}$, $\lambda_k$ are the smallest non-zero eigenvalue of $\Delta_k^{up}$, $\Delta_k^{down}$ and $\Delta_k$. 
    $K^{up}$, $K^{down}$, $K$ are normalization factors of $\mathcal{O}(\|\Delta_k^{up}\|)$, $\mathcal{O}(\|\Delta_k^{down}\|)$ and $\mathcal{O}(\|\Delta_k\|)$.
    }
    \label{tab:results}
\end{table}

The presented Theorems~\ref{thm:informal_projectors} and~\ref{thm:informal_clique}
lead to new applications of quantum walks to higher-order networks because they clarify the conditions for the efficient implementations of the projectors and the efficient implementations of quantum walk unitaries for clique complexes. 
We demonstrate the usefulness of our quantum walks through {four} applications:

\begin{itemize}
    \item \textbf{Normalized Betti number estimation:} First, we show that we can efficiently estimate the normalized $k$-Betti numbers of clique complexes of arbitrary dimensions based on harmonic quantum walks if we can sample from the uniform distribution of $k$-simplices and the combinatorial Laplacian has an inverse-polynomial spectral gap (Theorem~\ref{thm:Betti_numbers}).
    \item \textbf{Normalized persistent Betti number estimation:} Second, we show that we can efficiently estimate the normalized $k$-persistent Betti numbers of clique complexes combining up quantum walks and down quantum walks (Theorem~\ref{thm:persisntent_Betti_numbers}). 
\item 
    {\textbf{High-Dimensional Discrete Dirichlet Problem (HDDP):} 
    Third, we apply our quantum walk framework to the HDDP~\cite{rosenthal2014simplicial}, 
    a higher-dimensional generalization of the discrete Dirichlet problem from graphs to 
    simplicial complexes. Given a partition of $k$-simplices into boundary and interior, 
    and a boundary cochain, we construct a quantum algorithm that outputs 
    a $k$-cochain satisfying the Dirichlet condition 
    (Theorem~\ref{thm:dirichlet}).
    }
    \item \textbf{Verification of the promise clique homology problem:} Fourth, we show we can efficiently verify the promise clique homology problem of~\cite{king2023promise} i.e., we can efficiently verify if a state sent from the prover is in the harmonic homology group or not (Theorem~\ref{thm:promise_clique_homology}). 
\end{itemize}
{The first two applications are motivated by TDA.
{The HDDP arises naturally in the study of harmonic 
$k$-forms with prescribed boundary conditions, connecting spectral geometry, Hodge theory, and combinatorial topology, and has many potential applications in higher-order networks.  As far as we know, our algorithm is the first quantum algorithm for HDDP.}
The fourth application is motivated by potentially providing new interactive protocols based on quantum walks such as the verification of quantum computing.}
These results also open the possibility of demonstrating superpolynomial speedup using quantum walks for practical problems on simplicial complexes. 
We discuss the computational complexity of the problems in our applications in Section~\ref{sec:related_works} and discuss why these problems might not be tractable for classical algorithms. 
Notably, unlike the previous superpolynomial speedup results of quantum walks for graph problems~\cite{childs2003exponential,balasubramanian2023exponential,jeffery2023multidimensional}, our superpolynomial speedup result does not rely on the query access to exponentially large matrices. 
This is because clique complexes can be described with graphs and we can efficiently construct quantum walk unitaries clique complexes from the graph adjacency matrices. 
As far as we know, this is the first example of superpolynomial speedup for graph or topological problems in the computational time rather than the query complexity based on quantum walks. 
{We expect that our quantum walks on simplicial complexes will find more applications in high-dimensional signal processing and machine learning tasks as is in the case for the random walks on simplicial complexes in constant dimensions.}

{
In summary, our main contribution is the discovery of a novel connection between quantum walks and the topology of simplicial complexes, inspired by the classical expectation process. 
This opens new applications of quantum walks 
for higher-order networks 
with potential superpolynomial speedup without relying on quantum oracles to exponentially large datasets, unlike previous claims of exponential speedups based on quantum walks~\cite{childs2003exponential,balasubramanian2023exponential,jeffery2023multidimensional}.
Compared to previous general quantum algorithms for topological data analysis~\cite{lloyd2016quantum,hayakawa2022quantum,mcardle2022streamlined,berry2024analyzing}, 
it is not that our algorithm provides improvement over them. 
However, our work provides a clearer picture of dynamics with quantum walks and connects the research of quantum TDA to the larger field of random walks and quantum walks. More discussions with previous literature can be found in Section~\ref{sec:related_works}.
}

\subsection{Technical Contributions}

No previous works have shown a relation between the homology of simplicial complexes and quantum walks. 
We identified that this obscurity originated from the lack of the proper treatment of orientations of simplicial complexes in quantum walks. 
It is the treatment of orientations of simplicial complexes that enabled us to encode the topological Laplacian operators into quantum walks and moreover, enabled us to implement the projectors onto the topological subspaces. Let us start by first illustrating how we represent the oriented simplicial complexes with quantum states.

\paragraph{Quantum representation of oriented simplices.}

{As we perform quantum walks on oriented simplicial complexes, how we represent such objects as quantum states comes up as the first technical challenge.}
The oriented $k$-simplices can be represented as $[v_0,v_1,...,v_k]$. 
The oriented $k$-simplices obtained by even (odd) permutation of vertices are said to have the same (opposite) orientation.
Throughout the paper, we fix the positively oriented simplices by the ordering $[v_0,v_1,...,v_k]$ with $v_0<v_1<...<v_k$. 
The unoriented simplices can be described by $n$-bit strings with Hamming weight $k+1$ where the positions of 1's correspond to the vertices of simplices. 
We represent each of oriented $k$-simplices $\sigma$ with $n+1$-qubit computational basis states where the first $n$ qubits correspond to the membership of the vertices and the last $1$ qubit represents the orientation i.e.,  
$$\ket{\sigma}=
\begin{cases}
    \ket{x_\sigma}\otimes \ket{0} & \text{if\ } \sigma\ \text{is\ positively\ oriented,}  \\
    \ket{x_\sigma}\otimes \ket{1} & \text{if\ } \sigma\ \text{is\ negatively\ oriented,} \\ 
\end{cases}$$
where $x_\sigma$ is the $n$-bit string with Hamming weight $k+1$.\footnote{We actually use $n+2$-qubit state to represent an additional special state called the absorbing state. Here, we describe without the absorbing state for simplicity.}
In Figure~\ref{fig:representation}, we show an example of quantum representation of a positively oriented $2$-simplex over $5$ vertices. 
\begin{figure}
    \centering
    \includegraphics[width=0.6\linewidth]{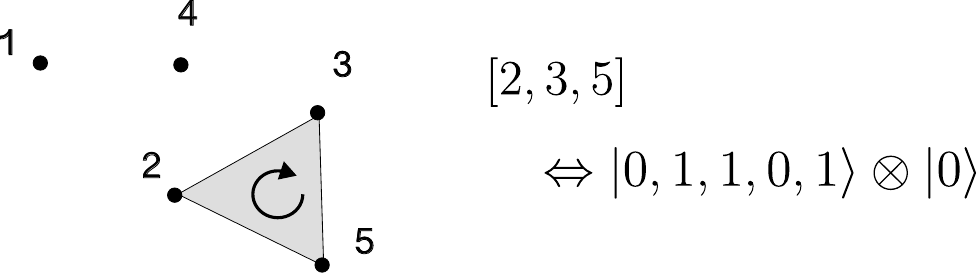}
    \caption{An example of quantum representation of oriented $2$-simplex (triangle) over $5$ vertices.}
    \label{fig:representation}
\end{figure}

\paragraph{Encoding of up, down, and full Laplacians into quantum walks.}

{The next technical challenge is how we can define quantum walks on simplicial complexes in which we can encode the Laplacians.} Unlike graphs, there are two different notions for adjacency for simplicial complexes: up-adjacency and down-adjacency. 
Based on the up-adjacency and down-adjacency, we can define three kinds of random walks on simplicial complexes (up, down, and harmonic random walks). Formal definitions can be seen in Definitions~\ref{def:up_random_walk},~\ref{def:down_random_walk},~\ref{def:harmonic_random_walk} in Section~\ref{sec:main}.  
Examples of up, down, and harmonic walks on simplicial complexes are shown in Figure~\ref{fig:up_walk}. 
For simplicity, we consider unweighted simplicial complexes here.

\begin{figure}
    \centering
    \includegraphics[width=13cm]{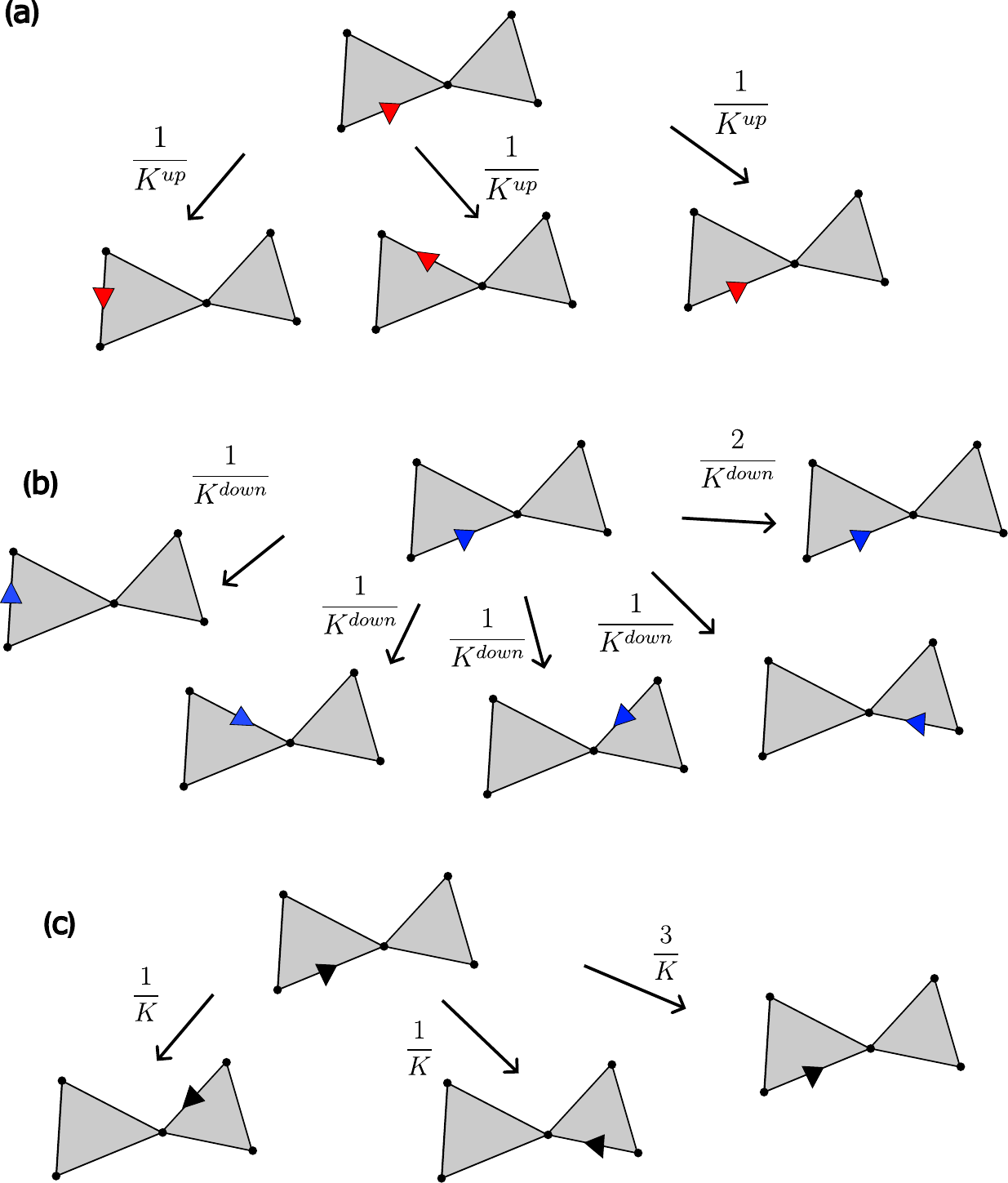}
    \caption{
    Examples of random walks on unweighted $1$-simplices (edges). 
    (a) An example of a step of up random walk on $1$-simplices. 
    $K^{up}$ is some normalization factor. 
    An edge transits to other edges that share common upper simplices (triangles) with equal probabilities. 
    The orientation is shown by the arrow. 
    The laziness of the up walk is proportional to the up-degree (the number of triangles that an edge has).
    (b) An example of a step of down random walk on $1$-simplices. $K^{down}$ is some normalization factor. 
    A state moves to other edges that share common lower-simplices (vertices) and stays on the same edge with probability proportional to the down degree (the number of vertices of edges i.e., 2). 
    It can be seen that the ``orientation'' of the vertices (i.e., whether a vertex is a starting point or end point of the arrows) is preserved in the transition. 
    Notably, the orientation of the transition is {\it opposite} compared to that of up random walk. 
    (c) An example of harmonic random walk on $1$-simplices that moves to lower-adjacent simplices that are {\it not} upper-adjacent. 
    $K$ is some global normalization factor. 
    The laziness is proportional to the sum of the lower degree and the upper degree. }
    \label{fig:up_walk}
\end{figure}

Let us take the relationship among up random walk, up quantum walk, and up Laplacian for example. 
It should be emphasized that although the random walk is defined on the oriented simplicial complexes, the up Laplacian is defined for the simplices with a fixed orientation.  
In the up random walk, a $k$-simplex $\sigma$ transit to simplices $\sigma'$ that share a common upper simplex and the orientation of the $k+1$ induced by $\sigma$ and $\sigma'$ must be the same. 
In Figure~\ref{fig:up_walk}(a), it can be seen that the orientation of the triangle induced by the edges are same. 
This means that for $\sigma\neq \sigma'$,
$$
P^{up}_{\sigma,\sigma'}=\mathrm{Prob}(\sigma \rightarrow \sigma') 
= \frac{1}{K^{up}}
$$
if $\sigma$ and $\sigma'$ are up-adjacent (including the orientation) where $K^{up}$ is some normalization factor of probabilities. 
Meanwhile, the matrix element of the up Laplacian satisfies for $\sigma\neq \sigma'$ s.t. $\sigma$ and $\sigma'$ are up-adjacent, 
$$
(\Delta^{up}_k)_{\sigma_+,\sigma'_+}=
\begin{cases}
    1 & \text{if\ } \sigma_+\ \text{and}\ \sigma'_+\ \text{are\ up\ adjacent}, \\
    -1 & \text{if\ } \sigma_+\ \text{and}\ \sigma'_-\ \text{are\ up\ adjacent},
\end{cases}
$$
where $\sigma_+$ and $\sigma'_+$ (resp. $\sigma'_-$) are positively (resp. negatively) oriented simplices correspond to $\sigma$ and $\sigma'$, respectively. 
These relationships show that if a state transits to a state with the same (resp. different) orientation, the corresponding sign of the matrix element of the up Laplacian is $+$ (resp. $-$). 
Suppose we have a unitary encoding $U^{up}$ of $P^{up}$ using $\alpha$-ancilla qubits i.e., 
\begin{equation}
    \label{eq:Uup}\left(\Pi^\pm_{k}\otimes\bra{0^{\alpha}}\right)
    U^{up} \left(\Pi^\pm_{k}\otimes \ket{0^{\alpha}}\right)= \Pi^\pm_{k}P^{up}\Pi^\pm_{k},
\end{equation}
where $\Pi^\pm_k$ is the projector onto the space spanned by positively and negatively oriented $k$-simplices. 
{This is a unitary encoding of a Markov transition matrix $P^{up}$, which is implementable with the  classical circuit that simulates the transition $\ket{\sigma}\rightarrow P^{up}\ket{\sigma}$ ( See  Remark~\ref{remark:Pup} in Section~\ref{sec:def_up_random_walk}). 
Moreover, for the specific case of clique complexes, we show efficient construction of such unitary encoding in Theorem~\ref{theorem:clique} in Section~\ref{sec:clique_construction}. } 

{Then, we apply {$HZ$} gates to the qubit that represents the orientation to obtain a unitary $V^{up}:= \left((I_n\otimes HZ)\otimes I_{\alpha}\right)U^{up}$ using the $U^{up}$ in eq.~\eqref{eq:Uup}.}
Then, we apply {$HZ$} gates to the qubit that represents the orientation to obtain a unitary $V^{up}= \left((I_n\otimes HZ)\otimes I_{\alpha}\right)U^{up}$.
By this construction, we can add a relative phase factor of $-1$ if a simplex transit to other simplices with a different orientation. 
The laziness of the random walk (i.e., the probability of staying in the same simplex after a transition) is proportional to the up-degree (the number of $k+1$-simplices that a $k$-simplex has as upper simplex). 
The diagonal elements of the up Laplacian are given exactly by the up-degree of simplices. 
Therefore, we can conclude that $V^{up}$ is a unitary encoding of the up Laplacian. A formal statement and proof can be found in Proposition~\ref{prop:unitaryencoding_up} in Section~\ref{sec:main}. 
For the down and harmonic quantum walks, we can encode the down and full Laplacians with analogous arguments.

\paragraph{Implementing the unitary encoding of projectors.}

The implementation of the projectors is based on the quantum singular value transformations~\cite{gilyen2019quantum} of quantum walk unitaries. 
The $k$-chain space $C_k$, which is a space spanned by positively oriented $k$-simplices, can be decomposed as follows:
$$
C_k= B_k \oplus \mathcal{H}_k \oplus B^k,
$$
where $B_k$, $B^k$, $\mathcal{H}_k$ are $k$-boundary, $k$-coboundary, and $k$-harmonic homology group. 
For formal definitions, see Section~\ref{sec:preliminaries}.
These subspaces can be represented with Laplacian operators as follows:
\begin{itemize}
    \item $B_k=\mathrm{Im}(\Delta_k^{up})$, \ $B^k=\mathrm{Im}(\Delta_k^{down})$.
    \item $B_k\oplus \mathcal{H}_k = \ker({\Delta_k^{down}})$,  \ $B^k\oplus \mathcal{H}_k = \ker({\Delta_k^{up}})$.
\end{itemize}
Therefore, projectors onto these subspaces can be realized by the projectors onto the low/high energy subspace of $\Delta_k$, $\Delta_k^{up}$ and $\Delta_k^{down}$ with threshold $\lambda_k$, $\lambda_k^{up}$, and $\lambda_k^{down}$ where $\lambda_k$, $\lambda_k^{up}$, and $\lambda_k^{down}$ are the minimal non-zero eigenvalues (spectral gaps) of $\Delta_k$, $\Delta_k^{up}$, and $\Delta_k^{down}$, respectively. 
We perform quantum singular value transformations with polynomial approximation of rectangle functions with threshold $\lambda_k$, $\lambda_k^{up}$, and $\lambda_k^{down}$ with the quantum walk unitaries. 
The number of quantum walk unitaries used in the procedures is bounded by polynomial in $n$ if $\lambda_k$, $\lambda_k^{up}$, and $\lambda_k^{down}$ are lower bounded by $1/poly(n)$.

\paragraph{Construction of up/down quantum walk unitaries.}

We explain how to construct the quantum walk unitaries for clique complexes using the {membership function $$
f(\sigma)=\begin{cases}
    1 & \text{ if } \sigma \in X\\
    0 & \text{ if } \sigma \notin X.
\end{cases}
$$} 
For simplicity, we consider the unweighted case. 
The up quantum walk unitary is constructed with ``up-down'' construction and the down quantum walk unitary is constructed with ``down-up'' construction. 
In the ``up-down'' construction, $k$-simplices first go up to $k+1$-simplices and then go back to $k$-simplices. 
In the ``down-up'' construction, $k$-simplices first go down to $k-1$-simplices and then go back to $k$-simplices. 

When we go up from $k$-simplex $\sigma$ to $k+1$-simplices, we can change one of the 0's of the $n$-bit string $x_\sigma$ to 1.  
As the number of 0's in $x_\sigma$ is $n-k-1$, we use the superposition of $\sum_{i\in [n-k-1]} \frac{1}{\sqrt{n-k-1}}\ket{i}$ in an ancilla register. 
Of course, not all of the Hamming weight $k+2$ strings are included in the $k+1$-simplices. Therefore, we can apply the membership function to verify if they are included in the $k+1$-simplices. 
The induced orientation of the $k+1$-simplices depends on the position of $0$ that is converted to $1$. 
If there are odd numbers of 1 before the position of $0$, the relative orientation changes from the original simplex $\sigma$. 
If the number of the sum of 1 is even, the relative orientation does not change. 
After we transit to $k+1$-simplices, we similarly go down to $k$-simplices by changing one of the positions of $1$ to $0$ while tracking the orientation.
When we go down, we use the superposition of $\sum_{j\in [k+2]} \frac{1}{\sqrt{k+2}}\ket{j}$ prepared in another ancilla register. 

The ``down-up'' construction can be similarly done. Note that when we go down from $k$-simplices, the states must be included in the elements of $k-1$-simplices by the definition of simplicial complexes. Therefore, we apply the membership function after we go back to $k$-simplices. 
Note that in the ``up-down'' and ``down-up'' constructions, the self-loops (laziness) are automatically encoded. 

\paragraph{Construction of the harmonic walk unitary.}

For the construction of harmonic walk unitary, we use the following superpositions in an ancilla register:
$$
\sum_{i\in [n-k-1], j\in [k+1]} \ket{i,j}\ket{0} 
+
\sum_{l \in [n-k-1]} \ket{l,l}\ket{1}
+
\sum_{m \in [k+1]} \ket{m,m}\ket{1},
$$
where we hide a global normalization factor. 
The first term is for the adjacency, the second term is for the upper degree, and the third term is for the lower degree. 
Based on this superposition, we perform the following computation for each of the $\sigma$:
$$
\underbrace{\sum_{i\in [n-k-1], j\in [k+1]} \ket{i,j}\ket{0} }_{\substack{\text{Flip the\ } i\text{-th\ } 0\ \text{and\ } j\text{-th\ } 1\ \text{of}\ x_\sigma,\\ \text{then, check the membership.}}}
+
\underbrace{\sum_{l \in [n-k-1]} \ket{l,l}\ket{1}}_{\substack{\text{Check the membership of the simplex}\\ \text{obtained by flipping the}\ l\text{-th}\ 0\ \text{of}\ x_\sigma}}
+
\underbrace{\sum_{m \in [k+1]} \ket{m,m}\ket{1}.}_{\text{Do nothing.}}
$$
By Claim~\ref{claim:parity} in Section~\ref{sec:applications}, it follows that the relative orientation of the simplex obtained by the transition of the first term is equal to the sum of the bit elements of the positions between the indices of $i$-th 0 and $j$-th 1 of $x_\sigma$.
In the harmonic walk, simplices transit to other simplices that are lower-adjacent but not up-adjacent. This condition can also be computed using the membership function. 
Unlike the ``up-down'' and ``down-up'' construction, this construction separates the self-loops and adjacency. 

\paragraph{Applications.}

The constructed projectors onto topological subspace can be applied to problems of practical interest. 
The first two applications of our quantum walks are the tasks of estimating the normalized Betti numbers and the normalized persistent Betti numbers, that are related to the topological data analysis. 
First, the estimation of the normalized Betti numbers, denote by $\beta_k/n_kk$, can be done in the following steps:
\begin{itemize}
    \item Prepare the uniform mixture over $k$-simplices
    $
    \rho_k = \frac{1}{n_k} \sum_{\sigma\in X^+_k} \ket{\sigma}\bra{\sigma},
    $
    where $n_k$ is the number of $k$-simplices of a simplicial complex $X$.
    \item Estimate the energy w.r.t the combinatorial Laplacian $\Delta_k$ and measure the energy.
    \item Sample from the above quantum circuit and output: 
    $$
    \frac{\text{The}\ \#\ \text{of\ outcomes\ of\ energy\ below\ }\lambda_k }{\text{The}\ \#\ \text{of\ samples}}
    $$
    as the estimate, where $\lambda_k$ is the minimal non-zero eigenvalue of $\Delta_k$.
\end{itemize}
{We can estimate $\beta_k/n_k$ with the above procedures. This is because 
$$\rho_k=\frac{1}{n_k} \sum_{\sigma\in X^+_k} \ket{\sigma}\bra{\sigma}=\frac{1}{n_k} \sum_{i\in n_k} \ket{\psi_i}\bra{\psi_i}$$ where $\{\ket{\psi_i}\}_i$ are eigenvectors of $\Delta_k$ with eigenvalue $\lambda_i$ and the probability of observing $0$ energy is $\beta_k/n_k$.}
In the second step, we measure the energy w.r.t. the combinatorial Laplacian.
To be more precise, we actually construct a quantum circuit that can conditionally project onto the low-energy subspace of the combinatorial Laplacian. This means that in the third step, if the measurement outcome is $1$, the state is successfully projected onto the low-energy subspace below the energy threshold. 
The complexity of the procedure is governed by the complexity of preparing the initial mixed state, the complexity of energy measurement, and the number of necessary samples. 
We can efficiently estimate the normalized Betti numbers with inverse-polynomial additive precision
under the conditions that
we can efficiently sample from the uniform distribution over $k$-simplices and $\lambda_k> 1/poly(n)$. 

Next, we explain how to estimate the normalized persistent Betti numbers. 
We use the following relationship:
$
\beta^{i,j}_k = \dim (\ker (\Delta_k^{i,down}))- \dim (\ker (\Delta_k^{i,down}) \cap (\mathrm{Im}(\Delta_k^{j,up})),
$
where $i$ and $j$ represents the two simplicial complexes $X^i \subseteq X^j$. 
We can estimate $\beta^{i,j}_k/n_k$ by estimating
${\dim (\ker (\Delta_k^{i,down}))}/{n_k}\ \ \text{and}\ \ 
{\dim (\ker (\Delta_k^{i,down}) \cap (\mathrm{Im}(\Delta_k^{j,up}))}/{n_k},
$
respectively, 
where $n_k$ is the number of $k$-simplices of $X^i$. 
The first quantity can be estimated using the down quantum walk on $K_i$ and the second quantity can be estimated by combining the down quantum walk on $K_i$ and the up quantum walk on $K_j$ in a similar way with the case of Betti numbers. 
This application illustrates that down quantum walk and up quantum walks are useful for tasks related to persistence and TDA. 

Finally, in the case of the verification of the promise clique homology problem, the task is to verify if the homology group of the given clique complex is trivial or not. 
If the homology group is not trivial (YES case), we can receive a witness state from the prover which is supposed to be in the kernel of $\Delta_k$. 
In the NO case, there is a promise that $\Delta_k$ has some inverse-polynomial spectral gap. 
Therefore, we can show that the verifier can decide which is the case based on the implementation of the projector onto the low-energy subspace of $\Delta_k$ with desirable soundness and completeness. 

\subsection{Outlook and Related works}
\label{sec:related_works}

Our work opens avenues for applying quantum walks to higher-order networks. 
In this subsection, we discuss related works and open problems. 
{We first compare existing classical results and our results in detail.}
Second, we discuss the computational complexity of the problems related to clique complexes to clarify how hard the problems we are solving are and to highlight the evidence for the classical difficulty of the tasks. Next, we compare our results to another classical algorithm based on the path integral Monte Carlo method and existing QTDA algorithms. Finally, we present two open problems: one regarding the potential for further application of our construction to graph problems, and the other concerning the extension of our quantum walks to persistent homology groups.

\paragraph{Comparison with classical expectation process.}
We give a comparison between our results and previous results on classical random walks on simplicial complexes.
We give a summary of classical results in Table~\ref{tab:classical}. 
Although~\cite{schaub2020random} also provides conceptually similar results, we do not include their result because the rate of convergence of the result of~\cite{schaub2020random} is not known.  The formal definitions of up and down random walks can be found in Appendix~\ref{app:classical}. 
\begin{table}
    \centering
    \begin{tabular}{lcccc}
       Random walks   & NE Process $\tilde{\mathcal{E}}^{\sigma_0}_t$ & update  &$t\rightarrow \infty$  &  distance at $t$   \\ \hline
         \begin{tabular}{c} Up walk \\ \cite{parzanchevski2017simplicial} ~\eqref{NE:eq:up}\end{tabular} & $\left(\frac{k+1}{pk+1}\right)^t {\mathcal{E}}^{\sigma_0}_t$ & $I - \frac{1-p}{pk+1} \Delta^{up}_k$ & $Z^k$  & $\left(1-\frac{1-p}{pk+1}\lambda_k^{up}\right)^t$ \\ \hline
         \begin{tabular}{c}Down walk\\ \cite{mukherjee2016random} ~\eqref{NE:eq:down}\end{tabular}& $\left(\frac{M-1}{p(M-2)+1}\right)^t {\mathcal{E}}^{\sigma_0}_t$& $I - \frac{(1-p)\Delta_k^{down}}{(p(M-2)+1)(k+1)}$ & $Z_k$  & $\left(1 - \frac{(1-p)\lambda_k^{down}}{(p(M-2)+1)(k+1)}\right)^t$  \\ \hline
         \begin{tabular}{c}Harmonic walk\\ \cite{mukherjee2016random}   \end{tabular} & $\left(\frac{K}{p(K-1)+1}\right)^t {\mathcal{E}}^{\sigma_0}_t$& $I-\frac{(1-p)\Delta_k D^{-1}}{p(K-1)+1}$ & $\mathcal{H}_k$  & $\left(1-\frac{1-p}{p(K-1)+1}\lambda_k\right)^t$ \\\hline
    \end{tabular}
    \caption{
    Classical results on simplicial complexes and homology. 
    NE process referees to normalized expectation process. The expectation process is ${\mathcal{E}}^{\sigma_0}_t = p^{\sigma_0}_t(\sigma) - p^{\sigma_0}_t(\overline{\sigma})$, where $p^{\sigma_0}_t(\sigma)$ is a probability of being in $\sigma$ after $t$-steps of random walks starting from $\sigma_0$.
    The ``update'' means operators $A$ such that  $\tilde{\mathcal{E}}^{\sigma_0}_{t+1}=A\tilde{\mathcal{E}}^{\sigma_0}_t$ for each of the random walks. 
    The column of ``$t\rightarrow \infty$'' shows the final subspaces to that every initial state $\sigma_0$ will reach with the normalized expectation processes. 
    The ``distance'' means orders of distance between $\tilde{\mathcal{E}}^{\sigma_0}_\infty$ and $\tilde{\mathcal{E}}^{\sigma_0}_t$ in the operator norm. 
    The number of time steps $t$ should be comparable to the number of quantum walk unitaries in our results. 
    $M$ is the maximal (up) degree of $k-1$ simplices. 
    $D^{-1}$ is a diagonal matrix such that $(D^{-1})_{\sigma,\sigma}=((\Delta_k)_{\sigma,\sigma})^{-1}$ if $(\Delta_{k})_{\sigma,\sigma'}\neq 0$ and $(D^{-1})_{\sigma,\sigma}=0$ otherwise. $K = \mathcal{O}(\|\Delta_k D^{-1}\|)$. 
     }
    \label{tab:classical}
\end{table}
The number of required time steps $t$ should be comparable to the number of uses of quantum walk unitaries in our results. 
As it can be seen from the table, the rate of the convergence to the target subspaces is similar to our results in the sense that the distance can be made to be inverse polynomially small in polynomial time steps $t$ with proper choice of the laziness if the spectral gaps of the Laplacians are at least inverse-polynomially large. 
However, there are two obstacles to {\it classically} simulating the normalized expectation process. 
One obstacle is that in order to perform the process, one needs to characterize the behavior of two different probabilities at the same time to take their differences. Moreover, the difference of two probabilities must be multiplied by an exponentially large normalization factor. 
These are highly challenging obstacles to classically simulate the normalized expectation process. 

\paragraph{Classical random walk without the orientation problem}
{
Although the problem of orientation appears as an obstacle for classical random walks on simplicial complexes, there are cases where such an obstacle can be overcome. For example, if the underlying simplicial complex is a triangulation of an orientable manifold, there are classical random walks that do not suffer from the orientation problem~\cite{eidi2023irreducibility}. 
Furthermore, it has been recently shown that when the underlying simplicial complex satisfies certain conditions regarding the orientation, there is an MA-complete problem related to the homology of simplicial complexes~\cite{hayakawa2025computational}. The MA-completeness indicates that the problem can be verified with a classical random walk without suffering from the orientation problem. These results indicate that there are subclasses of problems related to high-dimensional topology that are feasible for classical random walks, although they are not efficient for general simplicial complexes.}

{
\paragraph{Outlook on quantum advantage.}

Our prospects of superpolynomial speedup based on quantum walks come from the application for the problems related to the topological data analysis and the homology of clique complexes. 
Here, we discuss the evidence for classical hardness, best-known classical algorithms, and regimes for potential superpolynomial speedup. 

\begin{itemize}
\item {\bf Computational complexity of estimating the normalized Betti numbers:} 
We have shown an application of estimating the normalized Betti numbers (Theorem~\ref{thm:Betti_numbers}). 
First, the normalized Betti number estimation (NBE) problem for inverse-polynomial precision for {\it general} simplicial complexes is \DQC-hard~\cite{cade2021complexity}. 
Similarly, the Low-lying Energy Density Problem, a generalization of the NBE problem, is also \DQC -hard~\cite{gyurik2022towards}. 
These results can be seen as evidence that there are no efficient classical algorithms for such generalized problems assuming that \DQC -hard problem cannot be solved efficiently classically. 
Note that it is still open whether the estimation problem of normalized Betti numbers in inverse polynomial precision is \DQC-hard for clique complexes for which we have efficient quantum algorithms. 
\item {\bf Known classical algorithms for estimating the normalized Betti numbers:}
As we have discussed, a general classical algorithm for NBE problem in inverse-polynomial additive error would lead to a classical algorithm for \DQC-hard problems, which is very surprising. 
It has been shown that this is not the case for the estimation in the constant additive error. 
In~\cite{apers2023simple}, the authors proposed an efficient path integral Monte Carlo algorithm NBE problem in a constant additive error. 
The algorithm of~\cite{apers2023simple} first samples uniformly from the $k$-th simplices and then performs some random walk related to the combinatorial Laplacian.
Finally, it computes and outputs the estimate for the normalized Betti numbers. 
Instead of sampling from the space with positive and negative orientations, they perform random walks on unoriented $k$-simplices and use the information of orientation in the post-processing. 
\item {\bf Regimes for superpolynomial speedup:} 
For the application in NBE problem, three conditions should be satisfied so that the quantum algorithm works efficiently. First, we should be able to sample from the uniform distribution over $k$-simplices. Second, the combinatorial Laplacian should have an inverse-polynomial spectral gap. Third, the $k$-Betti number should be exponentially large (because the normalized quantity may be exponentially large and we can only estimate the normalized Betti number in the inverse-polynomial additive error.) 
In~\cite{berry2024analyzing}, it is shown that there exists a family of simplicial complexes that satisfy all these criteria and therefore can be a regime of simplicial complexes for superpolynomial quantum speedup. 
At the same time, it is also shown that for random clique complexes (clique complexes for Erdős–Rényi), only approximately a quartic speedup can be achieved~\cite{berry2024analyzing}. 

\item {\bf Estimating the (non-normalized) Betti numbers:} 
We have seen there can be superpolynomial speedup for the estimation of normalized (persistent) Betti numbers. 
What can be said for the estimation of the (non-normalized) Betti numbers, which is more practically motivated? 
For this problem, it has been discussed in~\cite{mcardle2022streamlined} that there is at most polynomial quantum speedup with the currently known quantum algorithms over the known classical algorithms~\cite{mischaikow2013morse,milosavljevic2011zigzag}. 
Therefore, it is still open if a quantum algorithm can achieve any practical task related to TDA with superpolynomial speedup.
\end{itemize}
}

\paragraph{Comparison between the existing QTDA algorithms.}
We give a comparison between our algorithm and the prior works on QTDA~\cite{lloyd2016quantum, hayakawa2022quantum, gilyen2019quantum}. 
Compared to the prior works, we {\it double} the state space due to the distinction of orientations. Note that, however, this doubling only requires an additional single qubit. 
It can be said that it is due to this addition of a single qubit that we have been able to ease the ``sign problem'' of combinatorial Laplacians\footnote{ Here, the ``sign problem'' referees to the fact that there are both positive and negative entries in the off-diagonal elements of the combinatorial Laplacian.} and relate QTDA to Markov chains. 
Therefore, although it is not that our algorithm drastically improves the time complexity for QTDA compared to the prior works on QTDA, we have been able to find a new way to perform topological information processing through the experimental realization of quantum walks \cite{tang2018experimental, acasiete2020implementation, dadras2019experimental, qiang2024review}, and in a broader sense, our work opens a way to unify the study of quantum walks and QTDA. 

\paragraph{Lackadaisical quantum walks.}

Our random walks on simplicial complexes are {\it lazy} random walks i.e., there are self-loops. 
A lazy version of quantum walks is called the lackadaisical quantum walks~\cite{wong2015grover}~\cite{hoyer2020analysis}. 
Therefore, our quantum walks can be seen as high-dimensional lackadaisical quantum walks. 
If we consider our quantum walks on $0$-simplices i.e., vertices, we can straightforwardly obtain the encoding of graph Laplacians into quantum walk unitaries. 
As graph Laplacian is an important tool in graph-based information processing, can the encoding of graph Laplacians into quantum walks find more applications? 

\paragraph{Persistent harmonic homology group.}

In this paper, we show the implementation of several projectors onto subspaces related to topology. 
Recently, persistent Laplacian operators are introduced~\cite{memoli2022persistent}~\cite{wang2020persistent}~\cite{wei2023persistent}. 
It has been known that the dimension of the kernel of the persistent Laplacian is equivalent to the persistent Betti numbers. 
Although we show how to estimate the normalized persistent Betti numbers in Section~\ref{sec:app:persistent_Betti_numbers}, it is based on the combination of several projectors and not based on the persistent Laplacian. 
Is it possible to construct quantum walks with which we can directly implement the projector onto the {\it persistent harmonic homology group} (i.e., the kernel of the persistent Laplacian)?

{
\paragraph{High-dimensional expanders.}
Random walks on simplicial complexes are actively studied in the context of high-dimensional expanders~\cite{kaufman2020high}. Although they do not typically consider the orientation and therefore the motivation is different, it would be interesting to explore potential connections between the study of high-dimensional expanders and quantum walks on simplicial complexes. 
}

\paragraph{Quantum walk algorithm without QSVT.} 
Our quantum algorithms first construct quantum walk unitaries that encode a single step of random walks on simplicial complexes. Then we utilize the quantum singular value transformation. This is because if we try to quantize a {\it sequence} of random walk steps rather than a single step, we would also suffer from the problem of an exponentially large normalization factor, as is the case in classically simulating the normalized expectation process. 
We leave it for future work to construct quantum walk algorithms on simplicial complexes that avoid the problem of exponentially large normalization without QSVT.  

\paragraph{Subsequent works.}
{
Since the appearance of this work, quantum algorithms for higher-order 
networks and TDA have been further developed. 
Leditto et al.~\cite{leditto2025quantum} 
proposed Quantum HodgeRank as an application of quantum topological 
signal processing on simplicial complexes~\cite{leditto2025topological}. 
Gyurik et al.~\cite{gyurik2024quantum} introduced the harmonic 
persistence problem and established its computational complexity. 
Extending our quantum walk framework to these problems is an 
interesting direction for future work. 
More broadly, finding further applications of quantum walks on 
higher-order networks, as we have done for the HDDP in this work, 
remains an important open problem.
}

\subsection{Organization of the paper}

The remainder of this paper is organized as follows. In Section
~\ref{sec:preliminaries}, we introduce several preliminaries on simplicial complexes, homology, and quantum computing. 
In Section~\ref{sec:main}, we present the implementation of the projectors based on the quantum walks. This section includes the formal definitions of up, down, and harmonic random walks on simplicial complexes.  
In Section~\ref{sec:clique_construction}, we show efficient constructions of up, down, and harmonic quantum walk unitaries for clique complexes.
In Section~\ref{sec:applications}, we show two applications of our quantum walks on simplicial complexes: estimation of the normalized Betti numbers, estimation of the normalized persistent Betti numbers.
{Finally, in Section~\ref{sec:dirichlet}, we show application of our quantum walk framework on HDDP.}

\section{Preliminary}
\label{sec:preliminaries}
In this section, we introduce several preliminaries. 
In Section~\ref{sec:pre:simplicialcomplexes}, we introduce simplicial complexes and their orientations. 
In Section~\ref{sec:pre:adjacency}, we introduce up-adjacency and down-adjacency of simplicial complexes. 
In Section~\ref{sec:pre:homology} we introduce boundary operators and homology group. 
In Section~\ref{sec:pre:hodge}, we introduce the discrete Hodge theory, which relates the homology and the Laplacian operators. 
In Section~\ref{sec:pre:matrix_representation} and ~\ref{sec:pre:weighting}, we introduce the matrix representation of Laplacians and the weighting of simplicial complexes. 
In Section~\ref{sec:pre:quantum}, we introduce some preliminaries on quantum computing. 
Finally, in Sectioin~\ref{sec:pre:notations}, we further introduce some notations. 

\subsection{Oriented simplicial complexes} 
\label{sec:pre:simplicialcomplexes}
An abstract simplicial complex $X$ over a finite set of vertices $V\subset \nats$ is a set of vertices of $V$ that is closed under taking subsets. 
Let $n\coloneqq |V|$. A simplex is called a $k$-simplex if it is composed of $(k+1)$-vertices. 
We denote the set of $k$-simplices of $X$ as $X_k$.
We denote the number of $k$-simplices of $X$ by $n_k$.
An oriented $k$-simplex $\sigma$ is a simplex $\sigma$ with orientation inherited from the ordering of vertices $\sigma=[v_0,v_1,...,v_k]$. Two oriented $k$-simplices $\sigma=[v_0,v_1,...,v_k]$ and $\sigma'=[v'_0,v'_1,...,v'_k]$ are said to have the same orientation if $[v_{\pi(0)},v_{\pi(1)},...,v_{\pi(k)}]=[v'_0,v'_1,...,v'_k]$ for an even permutation $\pi$. 
Such $\sigma_k$ and $\sigma_k'$ are regarded as the same oriented simplex. 

Throughout the paper, we say $\sigma_k=[v_0,v_1,...,v_k]$ is positively (negatively) oriented if an ordering $[v_{\pi(0)},v_{\pi(1)},...,v_{\pi(k)}]$ such that $v_{\pi(0)}<v_{\pi(1)}<\cdots<v_{\pi(k)}$ is obtained with an even (odd) permutation $\pi$. 
We denote the set of positively oriented $k$-simplices by $X^+_k$ and negatively oriented $k$-simplices by $X^-_k$. It follows that $|X^+_k|=|X^-_k|=n_k$.
For any oriented simplex $\sigma\in X^\pm_k $, $\overline{\sigma}$ represents its counterpart with the opposite orientation.
For any $\sigma\in X^\pm_k $, we can identify its corresponding unoriented simplex with a Hamming weight $k+1$ bit string $x_\sigma\in\{0,1\}^n$, where the positions of $1$s correspond to the membership of the vertices of the simplex. 
With a slight abuse of notation, 
we denote the set of all Hamming weight $k+1$ bit strings  associated with elements of $X^\pm_k$ by $X_k\subseteq \{0,1\}^n$. 

\subsection{Adjacency of simplicial complexes}
\label{sec:pre:adjacency}
For an unoriented simplex $\sigma=\{v_0,v_1,...,v_k\}$, its faces are the subsets obtained by removing a single vertex: $\sigma\backslash v_0,\sigma\backslash v_1,...,\sigma\backslash v_k$.  By definition,  any $k$-simplex has exactly $k+1$ faces.   
An oriented simplex $\sigma=[v_0,v_1,...,v_k]$ induces an orientation on its faces as $(-1)^j[v_0,...,v_{j-1},v_{j+1},...,v_k]$
where $(-1)^j$ means to take the opposite orientation if $(-1)^j=-1$. 
(We take the opposite orientation if we remove the $j$-th vertex and $j$ is odd.)
A $(k+1)$-simplex $\sigma'_{k+1}$ is called a coface of $\sigma_k$ if $\sigma_k$ is a face of $\sigma'_{k+1}$.
For $\sigma, \sigma' \in X^\pm_k$, we use the following notations:
\begin{itemize}
        \item  (Up-adjacency:) We denote $\sigma \sim_{\uparrow} \sigma'$ if they share a common $k+1$-dimensional coface and the orientation of the coface induced by $\sigma$ and $\sigma'$ are same. If $\sigma \sim_{\uparrow} \sigma'$, $\sigma$ and $\sigma'$ are said to be oriented similarly. 
        If $\sigma \sim_{\uparrow} \overline{\sigma'}$, $\sigma$ and $\sigma'$ are said to be oriented dissimilarly. 
        \item  (Down-adjacency:) We denote $\sigma\sim_{\downarrow}\sigma'$ if they share a common $k-1$-dimensional face and the orientation of the face induced by $\sigma$ and $\sigma'$ are same. 
        If $\sigma \sim_{\downarrow} \sigma'$, $\sigma$ and $\sigma'$ are said to have a similar common lower simplex. 
         If $\sigma \sim_{\downarrow} \overline{\sigma'}$, $\sigma$ and $\sigma'$ are said to have a dissimilar common lower simplex.
        \item $\sigma\nsim_{\uparrow}\sigma'$: if $\sigma$ and $\sigma'$ are not up-adjacent. 
        \item $\deg(\sigma)$: number of cofaces of $\sigma$. 
\end{itemize}
The quantity $\deg(\sigma)$ should also be called as {\it up}-degree as it is the number of cofaces. 
We simply refer to it as ``degree" because the number of faces of any $k$-simplex (down-degree) is always $k+1$.
Note that for any $k$-simplices $\sigma$ and $\sigma'$, the common face or coface is always unique if they exist.

\subsection{Chain group and homology}
\label{sec:pre:homology}
We denote a $k$-th chain group $C_k$ defined as $\mathrm{Span}\{\ket{\sigma}\}_{\sigma\in X_k^+}$ with real coefficients. 
We can define a $k$-th boundary operator $\partial_k:C_k\rightarrow C_{k-1}$ as an operator that acts for any $k$-simplex in $[v_0,v_1,...,v_{k}]\in X^+_k$ as
$$
    \partial_k [v_0,v_1,...,v_{k}]
    =
    \sum_{j=0}^k (-1)^j [v_0,...,v_{j-1},v_{j+1},...,v_k].
$$
The important property of boundary operators is that $\partial_{k-1}\partial_k=0$. 
The $k$-th homology group is defined as $$H_k\coloneqq \ker(\partial_{k})/\mathrm{Im}(\partial_{k+1}), $$ and the $k$-th Betti number is defined as $\beta_k\coloneqq \dim(H_k)$. 
$B_k\coloneqq \mathrm{Im}(\partial_{k+1})$ is called the $k$-boundary and $Z_k\coloneqq \ker(\partial_k)$ is called the $k$-cycle. 

Given two simplicial complexes $X^i \subseteq X^j$ over the same vertices, we can define the persistent homology groups and the persistent Betti numbers. 
The $k$-th persistent homology group is defined as follows: 
$$
H^{i,j}_k \coloneqq \ker(\partial_k^i)/(\ker(\partial_k^i)\cap \mathrm{Im}(\partial_{k+1}^j)).
$$
The $k$-th persistent Betti number is defined as $\beta^{i,j}_k\coloneqq \dim(H^{i,j}_k)$.

\subsection{Discrete Hodge theory}
\label{sec:pre:hodge}
We define the inner product such that $\{\ket{\sigma}\}_{\sigma\in X^+_k}$ forms an orthonormal basis for $C_k$. This means that $\braket{\sigma'|\sigma}= \delta_{\sigma,\sigma'}$ for all $\sigma,\sigma'\in X^+_k$. 
Then, let $\partial^*_k$ be an adjoint of $\partial_k$. 
With our choice of inner product, $\partial^*_k=\partial^\dagger_k$. 
The $k$-th combinatorial Laplacian (Hodge Laplacian) $\Delta_k:C_k\rightarrow C_k$ is defined as
$$\Delta_k\coloneqq \partial_{k+1}\partial_{k+1}^*
+\partial_k^*\partial_k.$$
We also define
$$\Delta_k^{up}\coloneqq \partial_{k+1}\partial_{k+1}^*,$$
$$\Delta_k^{down}\coloneqq \partial_k^*\partial_k.$$
It is known that 
    $$
    \ker(\Delta_k)\cong H_k
    $$
and 
    $$
    \beta_k=\mathrm{Nullity}(\Delta_k)\coloneqq \dim(\ker(\Delta_k)).$$
We denote the smallest non-zero eigenvalues of $\Delta_k$, $\Delta_k^{up}$ and $\Delta_k^{down}$ as $\lambda_k$, $\lambda_k^{up}$ and $\lambda_k^{down}$, respectively. 
{
It holds that $\lambda_k= \min \{\lambda_k^{up}, \lambda_k^{down}\}$ because the non-zero energy spectrum of $\Delta_k$ is the union of the non-zero spectrums of $\Delta_k^{up}$ and $\Delta_k^{down}$. 
}
Note that the spectrum of the Laplacians does not depend on the choice of the orientation of simplices.

We can decompose $C_k$ as 
$$
C_k = \mathrm{Im}(\partial_{k+1})\oplus \ker(\Delta_k) \oplus \mathrm{Im}(\partial_k^*).
$$
This decomposition is known as the Hodge decomposition. 
It holds that $\mathrm{Im}(\partial_{k+1})=\mathrm{Im}(\Delta_k^{up})$ and $\mathrm{Im}(\partial_k^*)=\mathrm{Im}(\Delta_k^{down})$. 
Therefore,
we can decompose $C_{k}$ in terms of up, down Laplacians as  
$\mathrm{Im}(\Delta_k^{up})\oplus
\ker(\Delta_k)\oplus
\mathrm{Im}(\Delta_k^{down})$ 
. 

\subsection{Matrix representation of the unweighted combinatorial Laplacian}
\label{sec:pre:matrix_representation}

The combinatorial Laplacian can be described using the degree and the adjacency relationships. 
It is known that 
$\Delta^{up}$, $\Delta^{down}$ and $\Delta_k$ satisfy for all $\sigma,\sigma' \in X^+_k$,
    \begin{equation}
    (\Delta^{up}_k)_{\sigma,\sigma'}= \begin{cases}
        \deg(\sigma), & if\ \sigma=\sigma' \\
        1, & if\ \sigma\sim_{\uparrow}{\sigma'}\\
        -1, & if\ \sigma\sim_{\uparrow}\overline{\sigma'}\\
        0, & otherwise.
    \end{cases}
    \end{equation}
    \begin{equation}
    (\Delta^{down}_k)_{\sigma,\sigma'}=
    \begin{cases}
        k+1, & if\ \sigma=\sigma' \\
        1, & if\ \sigma\sim_{\downarrow}{\sigma'}\\
        -1, & if\ \sigma\sim_{\downarrow}\overline{\sigma'}\\
        0, & otherwise.
    \end{cases}
    \end{equation}
    \begin{equation}
    (\Delta_k)_{\sigma,\sigma'}=
    \begin{cases}
        \deg(\sigma)+k+1, & if\ \sigma=\sigma' \\
        1, &if\ \sigma\sim_{\downarrow}{\sigma'}\ and\  \sigma\nsim_{\uparrow}\overline{\sigma'}
        \\
        -1, &if\ \sigma\sim_{\downarrow}\overline{\sigma'}\ and\ \sigma\nsim_{\uparrow}\sigma'\\
        0, & otherwise.\ 
    \end{cases}
    \end{equation}
Note that $\sigma\sim_{\uparrow}\sigma'$ also implies $\sigma\sim_{\downarrow}\overline{\sigma'}$. 
Therefore, in the full Laplacian, the matrix elements of $\sigma, \sigma'$ that are only lower adjacent (regardless of orientations) remain to be non-zero.

\subsection{Vertex-weighted simplicial complexes}
\label{sec:pre:weighting}

We generalize the simplicial complexes with a vertex-weighted graph $G$, which is characterized by a function {$w:V\rightarrow (0,\infty]$} assigning each vertex $v$ a weight $w(v)$ following~\cite{king2023promise}.
A vertex-weighted simplicial complex $X$ over the vertices is defined in the same manner as the unweighted case. 
The main difference between them is that we assign 
each simplex $\sigma$ in $X$ a weight $w(\sigma)$, defined as the product over weights of all vertices in $\sigma$: 
$$
w(\sigma)=\prod_{v\in \sigma}w(v).
$$
Furthermore, the inner product on the $k^{\text{th}}$ vertex-weighted simplicial complexes chain space, $C_{k}(X,\mathbb{R})$, is defined by 
$$
\langle\sigma \mid \tau\rangle= \begin{cases}w(\sigma)^2 & if\ \sigma=\tau ,\\ 0 & { otherwise, }\end{cases}
$$
for $\sigma,\tau \in X_{k}$ are $k$-simplices of $X$. 
This inner product implies that the set $\{ \ket{\sigma'}:=\frac{1}{w(\sigma)}\ket{\sigma}: \sigma \in C_{k} \}$ forms an orthonormal basis for $C_{k}$.

To define the combinatorial Laplacian $\Delta_{k}$ of the $k^{\text{th}}$ vertex-weighted simplicial complexes chain, it is necessary to define the boundary operator $\partial_{k}$: 
$$
\partial_k\left|\sigma^{\prime}\right\rangle=\sum_{v \in \sigma} w(v)\ket{(\sigma \backslash\{v\})^{\prime}}.
$$

With the boundary operator, the combinatorial Laplacian $\Delta_{k}=\partial_{k+1}\partial_{k+1}^*+\partial_k^*\partial_k$ can be expressed as follows: $\Delta^{up}$, $\Delta^{down}$ and $\Delta_k$ satisfy for all $\sigma,\sigma' \in X^+_k$,
    \begin{equation}
    \label{eq:up_laplacian}
    (\Delta^{up}_k)_{\sigma,\sigma'}= \begin{cases}
        \sum_{u \in up(\sigma)} w(u)^2, & if\ \sigma=\sigma' \\
        w(v_{\sigma})w(v_{\sigma'}), & if\ \sigma\sim_{\uparrow}{\sigma'}\\
        -w(v_{\sigma})w(v_{\sigma'}), & if\ \sigma\sim_{\uparrow}\overline{\sigma'}\\
        0, & otherwise.
    \end{cases}
    \end{equation}
    \begin{equation}
    \label{eq:down_laplacian}
    (\Delta^{down}_k)_{\sigma,\sigma'}=
    \begin{cases}
        \sum_{v \in \sigma} w(v)^2, & if\ \sigma=\sigma' \\
        w(v_{\sigma})w(v_{\sigma'}), & if\ \sigma\sim_{\downarrow}{\sigma'}\\
        -w(v_{\sigma})w(v_{\sigma'}), & if\ \sigma\sim_{\downarrow}\overline{\sigma'}\\
        0, & otherwise.
    \end{cases}
    \end{equation}
    \begin{equation}
    \label{eq:laplacian}
    (\Delta_k)_{\sigma,\sigma'}=
    \begin{cases}
        \left(\sum_{u \in up(\sigma)} w(u)^2\right)+\left(\sum_{v \in \sigma} w(v)^2\right) , & if\ \sigma=\sigma' \\
        w(v_{\sigma})w(v_{\sigma'}), &if\ \sigma\sim_{\downarrow}{\sigma'}\ and\  \sigma\nsim_{\uparrow}\overline{\sigma'}
        \\
        -w(v_{\sigma})w(v_{\sigma'}), &if\ \sigma\sim_{\downarrow}\overline{\sigma'}\ and\ \sigma\nsim_{\uparrow}\sigma'\\
        0, & otherwise.\ 
    \end{cases}
    \end{equation}
Here, { $\operatorname{up}(\sigma) \subset V$} is the subset of vertices $v$ such that $\sigma \cup\{v\} \in X_{k+1}$ is a $k+1$-simplex, and $v_{\sigma}$ and $v_{\sigma'}$ are the vertices removed from $\sigma$ and $\sigma'$, respectively, to obtain the common lower simplex.

\subsection{Unitary encoding, transformation, and measurement}
\label{sec:pre:quantum}

We introduce some preliminaries on quantum computing.
The block-encoding of an operator is defined as follows:
\begin{definition}[\cite{gilyen2019quantum}]
\label{def:unitary_encoding}
    Suppose that $A$ is an $s$-qubit operator, $a,\epsilon \in \mathbb{R}_+$, and $\alpha\in \mathbb{N}$. We say that the $(s+\alpha)$-qubit unitary $U$ is an $(a,\alpha,\epsilon)$-block-encoding of $A$, if 
    $$
    \|
    A-a(I_s\otimes \bra{0^\alpha}) U (I_s\otimes \ket{0^\alpha} )
    \|
    \leq \epsilon.
    $$
    More generally, we say that $U$ is a {$\Pi_s$-projected $(a,\alpha,\epsilon)$-unitary encoding of $A$} if 
    $$
    \|
    A-a (\Pi_s \otimes \bra{0^\alpha}) U (\Pi_s \otimes \ket{0^\alpha} )
    \|
    \leq \epsilon,
    $$
where $\Pi_s$ is an $s$-qubit projector.
\end{definition}
Using the block encoding or the unitary encoding, we can perform quantum singular value transformation (QSVT)~\cite{gilyen2019quantum}. 
For a Hamiltonian $H$ with eigendecomposition $H=\sum_i \lambda_i \ket{\psi_i} \bra{\psi_i}$, the singular value transformation of $H$ with polynomial $f(x)$ is 
$$
f(H)=\sum_i f(\lambda_i) \ket{\psi_i} \bra{\psi_i}.
$$

We introduce a $\mathrm{C}_\Pi \mathrm{NOT}$-gate which applies $X$ gate conditioned that a state is in the image of a projector $\Pi$ as:
$$
\mathrm{C}_\Pi \mathrm{NOT}\coloneqq X\otimes \Pi + I\otimes (I-\Pi).
$$

\paragraph{Block-measurement.}
Suppose we have $U_\Pi$ that is a block encoding of a projector $\Pi$, i.e., 
$$
\|
\Pi- (I_s \otimes \bra{0^a}) U_\Pi (I_s \otimes \ket{0^a} )
\|
\leq \epsilon. 
$$
Then, we can perform a measurement procedure called ``block-measurement''~\cite{rall2021faster}. 
The circuit for block-measurement is defined as 
$$
U_{\mathcal{M}}= 
 (U_\Pi^\dagger\otimes I_\alpha)\cdot \mathrm{C_{\Pi_a}NOT} \cdot (U_\Pi \otimes I_\alpha),
$$
where $\alpha$ denotes the single ancilla qubit and $\Pi_a = I_s \otimes \ket{0^a}\bra{0^a}$.
Let $\tilde{\Lambda}_\Pi$ be a quantum channel by tracing out the $a$ qubits of $U_{\mathcal{M}}$. Then, the following lemma is shown:
\begin{lemma}[\cite{rall2021faster}]
\label{lemma:block_measurement}
    Let $\Lambda_\Pi$ be a quantum channel that maps as 
    $$
    \ket{0}_\alpha\otimes \ket{\psi} \rightarrow
    \ket{1}_\alpha \otimes \Pi \ket{\psi} + \ket{0}_\alpha  \otimes (I_s-\Pi)\ket{\psi}.
    $$
    Then 
    $$
    \| \Lambda_\Pi - \tilde{\Lambda}_\Pi \|_\diamond = O(\epsilon),
    $$
    where $\|\cdot\|_\diamond$ is the diamond norm. 
\end{lemma}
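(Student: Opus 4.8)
The plan is to prove the lemma by bounding the trace distance of the two channels' outputs on an arbitrary, possibly reference-entangled, input, which is exactly the diamond-norm distance. Write $\tilde\Pi \coloneqq (I_s\otimes\bra{0^a})U_\Pi(I_s\otimes\ket{0^a})$, so the hypothesis reads $\|\Pi-\tilde\Pi\|\le\epsilon$. The channel $\tilde\Lambda_\Pi$ is realized by the isometry $V\colon\ket{\psi}\mapsto U_{\mathcal M}\,\ket{0}_\alpha\ket{0^a}\ket{\psi}$ followed by tracing out the $a$-register, so $V$ is a Stinespring dilation with the $a$-qubits as environment; the target $\Lambda_\Pi$ is itself the isometry $\mathrm{C}_\Pi\mathrm{NOT}$ restricted to $\alpha$-input $\ket{0}$, whose environment is trivial. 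First I would record the exact case: when $\tilde\Pi=\Pi$, expanding $U_\Pi\ket{0^a}\ket{\psi}=\ket{0^a}\Pi\ket{\psi}+\xi_\psi$ with $\Pi_a\xi_\psi=0$, the gate $\mathrm{C}_{\Pi_a}\mathrm{NOT}$ flips $\alpha$ only on the $\ket{0^a}$ branch and the final $U_\Pi^\dagger$ uncomputes both branches, returning the environment to $\ket{0^a}$ and reproducing $\Lambda_\Pi$ exactly. This isolates where $\epsilon$ enters.

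For general $\tilde\Pi$ I would run the same three-step computation and obtain the clean decomposition
\[
V\ket{\psi}=\ket{0^a}\!\bigl(\ket{1}_\alpha M\ket{\psi}+\ket{0}_\alpha(I_s-M)\ket{\psi}\bigr)+(\ket{1}_\alpha-\ket{0}_\alpha)\ket{\eta_\psi},
\]
where $M\coloneqq\tilde\Pi^\dagger\tilde\Pi$ and the leakage $\ket{\eta_\psi}\coloneqq(I_s-\Pi_a)U_\Pi^\dagger\ket{0^a}\tilde\Pi\ket{\psi}$ satisfies $\Pi_a\ket{\eta_\psi}=0$. Two estimates then control the error. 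Since $\Pi$ is a projector, $M-\Pi=\tilde\Pi^\dagger(\tilde\Pi-\Pi)+(\tilde\Pi-\Pi)^\dagger\Pi$, whence $\|M-\Pi\|\le(\|\tilde\Pi\|+1)\epsilon\le(2+\epsilon)\epsilon=O(\epsilon)$; this controls the coherent branch, whose deviation from the ideal output is $(\ket{1}_\alpha-\ket{0}_\alpha)(M-\Pi)\ket{\psi}$, of norm $O(\epsilon)$. For the leakage, unitarity of $U_\Pi$ gives $\|\eta_\psi\|^2=\braket{\psi|M(I_s-M)|\psi}$; as $M$ is positive semidefinite and, by Weyl's inequality, has spectrum within $O(\epsilon)$ of $\{0,1\}$, the map $t\mapsto t(1-t)$ is $O(\epsilon)$ on that spectrum, so $\|M(I_s-M)\|=O(\epsilon)$ and $\|\eta_\psi\|^2=O(\epsilon)$.

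Finally I would assemble the diamond bound. For an arbitrary input $\ket{\Psi}=\sum_j c_j\ket{\psi_j}\ket{j}_R$ with orthonormal $\ket{j}_R$, applying $V\otimes I_R$ and splitting each term into its coherent ($a=\ket{0^a}$) and leakage ($a\perp\ket{0^a}$) parts, the cross terms drop out upon tracing the environment. The coherent part deviates from $(\Lambda_\Pi\otimes I_R)\ket{\Psi}$ by a vector of squared norm $\sum_j|c_j|^2\,O(\epsilon^2)=O(\epsilon^2)$, contributing $O(\epsilon)$ in trace norm, while the leakage part is a partial trace of a positive operator of total weight $\sum_j|c_j|^2\|\eta_{\psi_j}\|^2=O(\epsilon)$, again contributing $O(\epsilon)$; both steps use $\braket{j|j'}=\delta_{j,j'}$ to kill cross-input interference. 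Summing gives trace distance $O(\epsilon)$ for every input, i.e.\ $\|\Lambda_\Pi-\tilde\Lambda_\Pi\|_\diamond=O(\epsilon)$. I expect the main obstacle to be precisely obtaining $O(\epsilon)$ rather than $O(\sqrt\epsilon)$: the leakage amplitude $\|\eta_\psi\|$ is only $O(\sqrt\epsilon)$, so the crude bound ``diamond distance $\le$ twice the operator-norm distance of dilating isometries'' is too lossy. The tight constant comes from the fact that the leakage is orthogonal in the traced-out register, so it enters only through its probability $\|\eta_\psi\|^2=O(\epsilon)$, together with the orthogonality of the reference components.
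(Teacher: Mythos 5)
The paper does not prove this lemma; it is imported verbatim from \cite{rall2021faster} as a black box, so there is no in-paper argument to compare against. Your proof is a correct and complete standalone derivation, and it follows essentially the standard route of Rall--Fuller: expand $U_{\mathcal M}\ket{0}_\alpha\ket{0^a}\ket{\psi}$ into the coherent branch living on $\ket{0^a}$ (governed by $M=\tilde\Pi^\dagger\tilde\Pi$) plus a leakage vector annihilated by $\Pi_a$, then bound the two contributions separately. I checked the key steps: the decomposition of $V\ket\psi$ is exactly what the three-gate circuit produces; $\|M-\Pi\|\le(2+\epsilon)\epsilon$ follows from $\Pi=\Pi^\dagger\Pi$; $\|\eta_\psi\|^2=\braket{\psi|M(I-M)|\psi}$ follows from unitarity of $U_\Pi$; and the spectral argument giving $\|M(I-M)\|=O(\epsilon)$ is valid since $M$ and $\Pi$ are Hermitian. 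You also correctly identify the one place where a naive argument loses a square root --- the leakage has amplitude only $O(\sqrt\epsilon)$, but because it is orthogonal to $\ket{0^a}$ in the traced-out register, the cross terms with the coherent branch vanish under $\mathrm{Tr}_a$ and the leakage contributes only its probability $O(\epsilon)$ to the trace norm. The extension to entangled inputs via an orthonormal reference basis is handled properly, so the diamond-norm bound follows. Two cosmetic slips only: in the definition of $\ket{\eta_\psi}$ the identity should be on the full $(s+a)$-qubit register rather than $I_s$, and the total weight of the leakage operator carries an extra factor of $2$ from $\|\ket{1}_\alpha-\ket{0}_\alpha\|^2$; neither affects the $O(\epsilon)$ conclusion.
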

We call the ancilla qubit labeled by $\alpha$ as the {\it output qubit} of the block-measurement.

\subsection{Notation}
\label{sec:pre:notations}

In this subsection, we introduce some notation. 
We use a special state $\Theta$, which is called an absorbing state in the literature of random walks. 
Let $S_k\coloneqq X^\pm_k\cup \{\Theta\}$. 
This is the state space of our random walks on simplicial complexes. 
{ Note, however, that we do not expect the introduction of the absorbing state is essential one to construct quantum walks on simplicial complexes. It is introduced to relate our quantum walk unitaries to Laplacian operators.
}
For any $\ket{\sigma}$ where $\sigma\in S_k$,
we identify an $(n+2)$-qubit computational basis state as follows: 
\begin{equation}
\ket{\sigma}=
\begin{cases}
    \ket{x_{\sigma}}\otimes \ket{00} & if\ \sigma\in X^+_k, \\
    \ket{x_\sigma}\otimes \ket{10} & if\ \sigma\in X^-_k, \\
    \ket{0^n} \otimes \ket{01} & if\ \sigma=\Theta, \\
\end{cases}
\end{equation}
where $x_\sigma\in X_k$ is the Hamming weight $k+1$ $n$-bit string that represents the unoriented simplex that corresponds to $\sigma$.
We also use 
$$
s(\sigma)= 
\begin{cases}
    0 & if\ \sigma \in X^+, \\
    1 & if\ \sigma \in X^-. 
\end{cases}
$$
Therefore, for $\sigma\in X^\pm_k$, 
\begin{equation}
\ket{\sigma}=
    \ket{x_{\sigma},s(\sigma),0}. 
\end{equation}
For $\sigma\in X^\pm_k$, we use $\sigma_+$ as
$\sigma_+=\sigma$ if $\sigma\in X^+_k$ and $\sigma_+=\bar{\sigma}$ if $\sigma\in X^-_k$. 
Then, $\sigma_-\coloneqq \overline{\sigma_+}$. 
{Also, let $\Pi_k$ be a projector onto the space spanned by $\{\ket{\sigma}\}_{\sigma\in X^+_k}=\{\ket{x_\sigma}\otimes \ket{00}\}_{\sigma\in X^+_k}$
and let $\Pi^\pm_k$ be a projector onto the space spanned by $\{\ket{\sigma}\}_{\sigma\in X^\pm_k}$
.}

For $\sigma\in X^\pm_k$, we also introduce the following notation:

\begin{itemize}
\item $i(\sigma)\in [n]$: the position of $i$-th 0 of $x_{\sigma}$.  ($i\in [n-k-1]$.)
\item $\tilde{i}(\sigma) \in [n]$: the position of $i$-th 1 of $x_{\sigma}$. ($i\in [k+1]$.)
\end{itemize}

\section{Quantum walks and homology of simplicial complexes}
\label{sec:main}
We begin this section by introducing our main results on the implementation of projectors which contains three folds in Section~\ref{sec::Formal statements of main results}. 
First, we propose a Markov transition matrix $P^{up}$, which could aid in encoding two projectors. One projector has support consisting of the $k$-cocycle, while the other has support consisting of the $k$-boundary. 
Second, we employ a similar approach, substituting $P^{up}$ with another transition matrix $P^{down}$, to encode two projectors. One has support consisting of the $k$-cycle, while the other has support consisting of the $k$-coboundary.
Third, one projector of the $k$-th homology group is encoded using another  transition matrix $P$.
The relationships among the subspaces are depicted in Figure~\ref{fig:hodgedecomposition}. 
\begin{figure}
    \centering
    \includegraphics[width=0.65\linewidth]{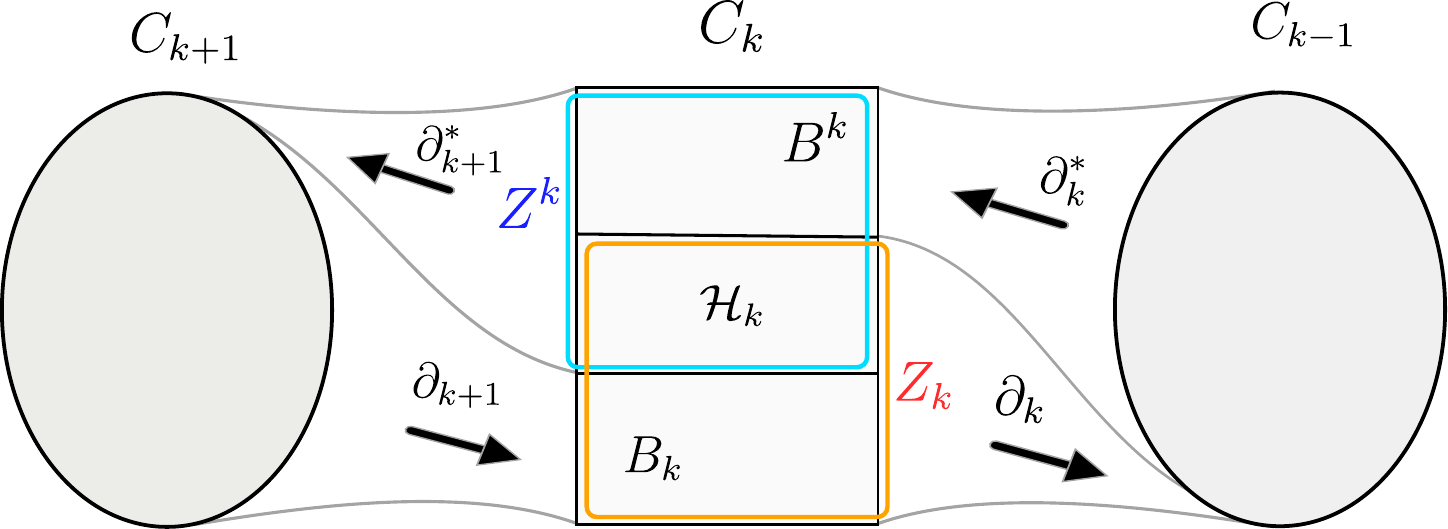}
    \caption{Relationships among the $k$-(co)cycle $Z_k$ ($Z^k$), $k$-(co)boundary $B_k$ ($B^k$), and $k$-harmonics $\mathcal{H}_k$.}
    \label{fig:hodgedecomposition}
\end{figure}

Then, the proofs for these results are presented in Section~\ref{sec::Proofs outline}, which consists of three steps. 
First, we formally define the Markov transition matrices $P^{up}$, $P^{down}$, and $P$, and utilize them to establish three types of quantum walks, denoted as $U^{up}$, $U^{down}$, and $U^{h}$, respectively.
Secondly, we demonstrate the utility of these quantum walks in encoding ${\Delta}^{up}_k$, ${\Delta}^{down}_k$, and ${\Delta}_k$.
Finally, quantum singular value transformations are applied to the above Laplacians, resulting in the projectors in our main result.

\subsection{Formal statements of quantum walk based projectors}\label{sec::Formal statements of main results}

We formally present three main results, each of which is derived through the utilization of the three aforementioned quantum walks.
Recall that we say $U$ is a {$\Pi_s$-projected $(a,\alpha,\epsilon)$-unitary encoding of $A$} if 
    $$
    \|
    A-a (\Pi_s \otimes \bra{0^\alpha}) U (\Pi_s \otimes \ket{0^\alpha} )
    \|
    \leq \epsilon.
    $$

\begin{theorem}
\label{thm:main}
Given a simplicial complex $X$ over $n$-vertices, 
we can implement $U_{Z^k}$, $U_{B_k}$, $U_{Z_k}$,  $U_{B^k}$,  $U_{\mathcal{H}_k}$ that are
$\Pi_k$-projected  $(1,\alpha,\epsilon)$-unitary encodings of 
$\mathrm{Proj}(Z^k)$, $\mathrm{Proj}(B_k)$, $\mathrm{Proj}(Z_k)$, $\mathrm{Proj}(B^k)$, $\mathrm{Proj}(\mathcal{H}_k)$, respectively,
with
\begin{itemize}
\item $\mathcal{O}(K^{up}\log(1/\epsilon)/\lambda_k^{up})$-number of use of 
$U^{up}$, $(U^{up})^\dagger$, C$_\Pi$NOT and single qubit gates for $U_{Z^k}$ and $U_{B_k}$, 
\item
$\mathcal{O}(K^{down}\log(1/\epsilon)/\lambda_k^{down})$-number of use of 
$U^{down}$, $(U^{down})^\dagger$, C$_\Pi$NOT and single qubit gates for $U_{Z_k}$ and $U_{B^k}$, 
\item
$\mathcal{O}(K\log(1/\epsilon)/\lambda_k)$-number of use of 
$U^h$, $(U^h)^\dagger$, C$_\Pi$NOT and single qubit gates for $U_{\mathcal{H}_k}$,
\end{itemize}
where
$\Pi= \Pi_k\otimes \ket{0^{\alpha}}\bra{0^{\alpha}}$,
and 
$U^{up}$, $U^{down}$, $U^{h}$
are quantum walk unitaries
that are $\Pi^\pm_k$-projected $(1,\alpha,0)$-unitary encodings of $\Pi^\pm_k P^{up} \Pi^\pm_k$, $\Pi^\pm_k P^{down} \Pi^\pm_k$, $\Pi^\pm_k P \Pi^\pm_k$, respectively. 
Here,
$\lambda_k^{up}$, $\lambda_k^{down}$, $\lambda_k$ are the smallest non-zero eigenvalues of $\Delta_k^{up}$, $\Delta_k^{down}$, $\Delta_k$, respectively.  
\end{theorem}

In the above theorem, unitary encoding of $\mathrm{Proj}(\mathcal{H}_k)$ is implemented using the harmonic quantum walk unitary $U^h$. 
As a corollary of Theorem~\ref{thm:main}, 
we can alternatively implement a projector onto $\mathcal{H}_k$ by combining the up quantum walk $U^{up}$ and the down quantum walk $U^{down}$.

\begin{corollary}
\label{thm:proj_harmonics_alternative}
Given a simplicial complex $X$ over $n$-vertices, 
we can implement $U_{\mathcal{H}_k}$  
such that 
\begin{equation}
\label{eq:Ud}
\left(\Pi_k\otimes \bra{0^{2\alpha}}\right) U_{\mathcal{H}_k} \left(\Pi_k\otimes \ket{0^{2\alpha}}\right)=\tilde{\Pi}_{\mathcal{H}_k}
\end{equation}
    where
$$
    \|\mathrm{Proj}(\mathcal{H}_k) -\tilde{\Pi}_{\mathcal{H}_k}\|\leq \epsilon,
$$
with 
$\mathcal{O}(K^{down}\log(1/\epsilon)/\lambda_k^{down}+K^{up}\log(1/\epsilon)/\lambda_k^{up})$-number of use of 
$U^{down}$, $(U^{down})^\dagger$, $U^{up}$, $(U^{up})^\dagger$, C$_\Pi$NOT and single qubit gates with $\Pi= \Pi_k\otimes \ket{0^{\alpha}}\bra{0^{\alpha}}$.
\end{corollary}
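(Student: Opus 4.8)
The plan is to reduce the harmonic projector to a product of the two projectors that Theorem~\ref{thm:main} already delivers from the up and down walks, using the orthogonal Hodge decomposition $C_k = B_k \oplus \mathcal{H}_k \oplus B^k$ recalled in Section~\ref{sec:pre:hodge}. Recall $Z_k = \ker(\Delta_k^{down}) = B_k \oplus \mathcal{H}_k$ and, dually, $Z^k = \ker(\Delta_k^{up}) = B^k \oplus \mathcal{H}_k$, so that intersecting inside the decomposition gives $Z_k \cap Z^k = \mathcal{H}_k$. I would first record the operator identity
\[
\mathrm{Proj}(\mathcal{H}_k) = \mathrm{Proj}(Z^k)\,\mathrm{Proj}(Z_k),
\]
which I would prove by observing that the two projectors commute: since $\partial_k\partial_{k+1}=0$ we have $\partial_{k+1}^*\partial_k^* = (\partial_k\partial_{k+1})^* = 0$, whence $\Delta_k^{up}\Delta_k^{down} = \partial_{k+1}\partial_{k+1}^*\partial_k^*\partial_k = 0 = \Delta_k^{down}\Delta_k^{up}$. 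Thus the two Laplacians are simultaneously diagonalized by the Hodge basis, their kernel projectors are diagonal in the same basis, and the product of two commuting orthogonal projectors is the orthogonal projector onto the intersection of their ranges, which here is exactly $\mathcal{H}_k$.

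With the identity in hand, I would invoke Theorem~\ref{thm:main} twice with an internal error budget $\delta$ to be fixed below: the down walk $U^{down}$ yields $U_{Z_k}$, a $\Pi_k$-projected $(1,\alpha,\delta)$-unitary encoding of $\mathrm{Proj}(Z_k)$ using $\mathcal{O}(K^{down}\log(1/\delta)/\lambda_k^{down})$ applications of $U^{down}$, $(U^{down})^\dagger$, C$_\Pi$NOT and single-qubit gates, while the up walk $U^{up}$ yields $U_{Z^k}$, a $\Pi_k$-projected $(1,\alpha,\delta)$-unitary encoding of $\mathrm{Proj}(Z^k)$ at cost $\mathcal{O}(K^{up}\log(1/\delta)/\lambda_k^{up})$. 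I would then place the two encodings on disjoint ancilla registers and set $U_{\mathcal{H}_k} = (U_{Z^k}\otimes I_\alpha)(I_\alpha \otimes U_{Z_k})$, so that exactly $2\alpha$ ancilla qubits are used and the total gate count is the sum of the two budgets, matching the statement of~\eqref{eq:Ud}. Writing $A = \bra{0^{\alpha}}U_{Z^k}\ket{0^{\alpha}}$ and $B = \bra{0^{\alpha}}U_{Z_k}\ket{0^{\alpha}}$, the product rule for block encodings gives $\bra{0^{2\alpha}}U_{\mathcal{H}_k}\ket{0^{2\alpha}} = AB$ exactly, so the left-hand side of~\eqref{eq:Ud} equals $\tilde{\Pi}_{\mathcal{H}_k} = \Pi_k A B \Pi_k$.

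The main obstacle is precisely that these are $\Pi_k$-projected rather than ordinary block encodings, and the orientation-doubling of the walk means $A$ and $B$ need not leave $\mathrm{Im}(\Pi_k)=C_k$ invariant: decomposing $\Pi_k A B \Pi_k = \Pi_k A \Pi_k B \Pi_k + \Pi_k A (I-\Pi_k) B \Pi_k$, the first term is the benign $\tilde{\Pi}_{Z^k}\tilde{\Pi}_{Z_k}$ (which is $\epsilon$-close to $\mathrm{Proj}(Z^k)\mathrm{Proj}(Z_k) = \mathrm{Proj}(\mathcal{H}_k)$ by the triangle inequality and $\|\mathrm{Proj}(Z^k)\|=\|\mathrm{Proj}(Z_k)\|=1$), but the second is a leakage cross term that must be controlled. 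The key point to verify is that this leakage is small: because the QSVT threshold polynomial is applied to a spectrum that is gapped away from the cutoff (eigenvalues are $0$ or at least $\lambda_k^{down}$), the encoded $B$ is an approximately Hermitian near-idempotent on the relevant subspace, so $\Pi_k B(I-\Pi_k)B\Pi_k = \Pi_k B^2\Pi_k - (\Pi_k B\Pi_k)^2 = O(\delta)$, forcing $\|(I-\Pi_k)B\Pi_k\| = O(\sqrt{\delta})$ and hence $\|\Pi_k A (I-\Pi_k) B \Pi_k\| = O(\sqrt{\delta})$. Choosing $\delta = \Theta(\epsilon^2)$ (which only changes the $\log(1/\delta)$ factors by a constant and preserves the claimed complexity) then yields $\|\tilde{\Pi}_{\mathcal{H}_k} - \mathrm{Proj}(\mathcal{H}_k)\| \le \epsilon$. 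Everything else — the commutation identity and the error bookkeeping — is routine once this composition and its leakage estimate are set up correctly.
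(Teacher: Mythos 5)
Your overall route is the same as the paper's: use $\mathcal{H}_k = Z_k \cap Z^k$ and multiply the unitary encodings of $\mathrm{Proj}(Z^k)$ and $\mathrm{Proj}(Z_k)$ supplied by Theorem~\ref{thm:main} on disjoint ancilla registers. Your opening observation that $\Delta_k^{up}\Delta_k^{down}=0$, so the two kernel projectors commute and their product is $\mathrm{Proj}(\mathcal{H}_k)$, is correct and is a useful explicit justification of a step the paper leaves implicit.

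However, your treatment of the leakage term contains a genuine gap. You need $\|(I-\Pi_k)B\Pi_k\| = O(\sqrt{\delta})$, and you derive it from $\Pi_k B(I-\Pi_k)B\Pi_k = \Pi_k B^2\Pi_k - (\Pi_k B\Pi_k)^2 = O(\delta)$. But the hypothesis that $U_{Z_k}$ is a $\Pi_k$-projected unitary encoding controls only the compression $\Pi_k B\Pi_k$; it says nothing about $\Pi_k B^2\Pi_k$, which already involves $B$ acting outside $\mathrm{Im}(\Pi_k)$ --- so asserting $\Pi_k B^2\Pi_k \approx (\Pi_k B\Pi_k)^2$ is circular (it presupposes the very leakage bound you are trying to prove), and the reduction to $\|(I-\Pi_k)B\Pi_k\|$ additionally needs $B^\dagger$ in place of the leftmost $B$, i.e.\ Hermiticity of the full block, which is not given. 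Worse, the bound is false in general: for $\ket{v}\in B^k\subset \mathrm{Im}(\Pi_k)$ we have $\Pi_k B\ket{v}\approx \mathrm{Proj}(Z_k)\ket{v}=0$, so unitarity forces $\|(I-\Pi)U_{Z_k}(\ket{v}\otimes\ket{0^\alpha})\|\approx 1$ with $\Pi=\Pi_k\otimes\ket{0^\alpha}\bra{0^\alpha}$, and nothing in the projected-encoding guarantee prevents all of that weight from landing in the sector $(I-\Pi_k)\otimes\ket{0^\alpha}\bra{0^\alpha}$, i.e.\ $\|(I-\Pi_k)B\Pi_k\|=\Theta(1)$. The standard repair --- which is what the $\mathrm{C}_\Pi\mathrm{NOT}$ gates in the corollary's resource count are for, and what the paper does explicitly in the proof of Theorem~\ref{thm:Betti_numbers} --- is to first upgrade each $\Pi_k$-projected unitary encoding to a genuine block-encoding of $\Pi_k \tilde{B}\Pi_k$ on the whole space using $\mathrm{C}_{\Pi}\mathrm{NOT}$ and a constant number of extra flag qubits; the product lemma of~\cite{gilyen2019quantum} then applies with no cross term, the errors add linearly, and $\delta=\epsilon/2$ suffices (no $\epsilon^2$ budget needed). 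With that substitution the rest of your argument goes through.
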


\begin{proof}
    This result is based on taking the product of projected unitary encoding of projectors onto $k$-cycle and $k$-cocycle (Theorem~\ref{thm:main}).  
    Note that $\mathcal{H}_k=Z_k\cap Z^k$. 
    Using the method of taking the product of block-encoded matrices of~\cite{gilyen2019quantum}, the theorem follows. 
\end{proof}

\subsection{Proof of Theorem~\ref{thm:main}}
\label{sec::Proofs outline}

We complete the proofs of our theorems in the subsequent subsections as follows. 
In Section~\ref{sec:def_up_random_walk},
we define the Markov transition matrix $P^{up}$ and show an encoding of $\Delta^{up}_k$ into up quantum walk.
In Section~\ref{sec:def_down_random_walk},
we define the Markov transition matrix $P^{down}$ and show an encoding of $\Delta^{down}_k$ into up quantum walk.
In Section~\ref{sec:def_harmonic_markov_chain},
we define the Markov transition matrix $P$ and show an encoding of $\Delta_k$ into harmonic quantum walk.
In Section~\ref{sec:proof_projectors},
we show how to approximately implement the encoding of projectors and show the complexity of the procedures.

\subsubsection{Up quantum walk and encoding of up Laplacian}
\label{sec:def_up_random_walk}

We start by formally defining the up random walk on simplicial complexes. 
Let us introduce the following normalization factor for up random walk:
\begin{equation}
\label{eq:normalization_up}
K^{up} \coloneqq \max_{\sigma\in X^\pm_k}  \sum_{\substack{i\in[n-k-1]\\j\in [k+2]}} w(i(\sigma))w(\tilde{j}(\sigma^\uparrow(i))).
\end{equation}
Note that $K^{up}\in poly(n)$ as long as the $w(v)\in poly(n)$ for all $v\in V$. 
Let us also define
$$
\eta^{up}_\sigma \coloneqq 1- \frac{1}{K^{up}}\sum_{\sigma'\in X^+_k}\left|(\Delta_k^{up} )_{\sigma',\sigma_+}\right|.
$$

\begin{definition}[Up random walk on simplicial complexes]
    \label{def:up_random_walk}

    We define a Markov chain transition matrix $P^{up}:S_k\rightarrow S_k$ for up random walk as follows
\begin{align}\label{eq:Markov_chain_up}
(P^{up})_{\sigma,\sigma'}&=
    \begin{cases}
    {\sum_{u \in up(\sigma)} w(u)^2 }/{K^{up}}
        , & if\ \sigma\neq \Theta\ and\ {\sigma=\overline{\sigma'}} \\
        w(v_\sigma)w(v_{\sigma'})/K^{up}, &if\ \sigma\neq \Theta,\ \ \sigma
        \sim_{\uparrow}
        \sigma'
        \\
        \eta_{\sigma}^{up} & if\ \sigma\neq \Theta\ and\ \sigma'=\Theta \\
        1, & if\ \sigma=\sigma'=\Theta \ \\
        0, & otherwise.
    \end{cases}
\end{align}
\end{definition}

This is a random walk on oriented $k$-simplices and an additional absorbing state $\Theta$. A $k$-simplex moves to up-adjacent simplex $\sigma'$ with probability $w(v_\sigma)w(v_{\sigma'})/K^{up}$. The lazyness for $\sigma$ is given by $\sum_{u \in up(\sigma)} w(u)^2/{K^{up}}$ . For unweighted simplicial complexes, $\sum_{u \in up(\sigma)} w(u)^2$ is equivalent to the up-degree of $\sigma$.

Recall that the matrix representation of up Laplacian is
 \begin{equation*}
    (\Delta^{up}_k)_{\sigma,\sigma'}= \begin{cases}
        \sum_{u \in up(\sigma)} w(u)^2, & if\ \sigma=\sigma' \\
        w(v_{\sigma})w(v_{\sigma'}), & if\ \sigma\sim_{\uparrow}{\sigma'}\\
        -w(v_{\sigma})w(v_{\sigma'}), & if\ \sigma\sim_{\uparrow}\overline{\sigma'}\\
        0, & otherwise.
    \end{cases}
    \end{equation*}
From the matrix representation of the up Laplacian, it can be seen that 
\begin{equation}
\label{eq:up_laplacian_markovcain}
    (P^{up})_{\sigma,\sigma'}=
    \begin{cases} ({\Delta}_k^{up})_{\sigma_+,\sigma_+}/K^{up}, & if\ \sigma\neq \Theta\ and\ {\sigma=\overline{\sigma'}} \\
        |({\Delta}_k^{up})_{\sigma_+',\sigma_+}|/K^{up}, &if\ \sigma\neq \Theta,\ \ \sigma
        \sim_{\uparrow}
        \sigma'
        \\
        \eta_{\sigma}^{up} & if\ \sigma\neq \Theta\ and\ \sigma'=\Theta \\
        1, & if\ \sigma=\sigma'=\Theta \ \\
        0, & otherwise.
    \end{cases}
\end{equation}

We then consider the quantization of the up random walk. 
Suppose we have a projected unitary encoding of $\Pi^\pm_k P^{up} \Pi^\pm_k$, then we can construct an encoding of the up Laplacian by taking the coherent interference between the positively oriented simplices and corresponding negatively oriented simplices after the application of a quantum walk step:

\begin{prop}
\label{prop:unitaryencoding_up}
Given a simplicial complex $X$ over $n$-vertices,  
 let $U^{up}$ be a $\Pi^\pm_k$-projected $(1,\alpha,0)$-unitary encoding of $\Pi^\pm_k P^{up} \Pi^\pm_k$ i.e.,
   $$
    \left(\Pi^\pm_{k}\otimes\bra{0^{\alpha}}\right)
    U^{up} \left(\Pi^\pm_{k}\otimes \ket{0^{\alpha}}\right)= \Pi^\pm_{k}P^{up}\Pi^\pm_{k}.
    $$
    Then, $V^{up}\coloneqq  \left((I_n\otimes HZ\otimes I)\otimes I_{\alpha}\right)U^{up}$ 
    satisfies
     $$
    \left(\Pi_k\otimes\bra{0^{\alpha}}\right)
    V^{up} \left(\Pi_k\otimes \ket{0^{\alpha}}\right)= \frac{{\Delta}_k^{up}}{K^{up}\sqrt{2}}.
$$ 
\end{prop}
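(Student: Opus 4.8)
The plan is to show that pre-multiplying the up walk unitary $U^{up}$ by the single-qubit gate $HZ$ acting on the orientation qubit converts the $\Pi^\pm_k$-projected encoding of the (sign-free) transition matrix $\Pi^\pm_k P^{up}\Pi^\pm_k$ into a $\Pi_k$-projected encoding of the signed Laplacian $\Delta^{up}_k/(K^{up}\sqrt{2})$. The key mechanism is that $P^{up}$ stores only the magnitudes $|(\Delta^{up}_k)_{\sigma'_+,\sigma_+}|/K^{up}$ and the diagonal (laziness) term $(\Delta^{up}_k)_{\sigma_+,\sigma_+}/K^{up}$ via transitions to $\overline{\sigma}$, and that the $HZ$ gate reinstates the correct signs through coherent interference between a simplex and its orientation-flipped partner.

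First I would fix the bases explicitly. I work in the $(n+2)$-qubit representation; since $\Pi_k$ projects onto $\{\ket{x_\sigma}\otimes\ket{00}\}_{\sigma\in X^+_k}$ and $\Pi^\pm_k$ onto $\{\ket{x_\sigma}\otimes\ket{s(\sigma),0}\}_{\sigma\in X^\pm_k}$, the orientation qubit is the one $HZ$ acts on. I would compute, for a fixed positively oriented $\sigma_+\in X^+_k$, the action of $V^{up}$ on $\ket{\sigma_+}\otimes\ket{0^\alpha}=\ket{x_{\sigma_+}}\otimes\ket{00}\otimes\ket{0^\alpha}$. By the hypothesis on $U^{up}$, the relevant $\Pi^\pm_k$-compressed output is $\sum_{\tau\in X^\pm_k}(P^{up})_{\tau,\sigma_+}\ket{\tau}$ (reading the matrix element as $\mathrm{Prob}(\tau\to\sigma_+)$ per the paper's convention $\bra{\tau}\Pi^\pm_k P^{up}\Pi^\pm_k\ket{\sigma_+}$; I would pin down the convention carefully since $P^{up}$ is indexed $(P^{up})_{\sigma,\sigma'}=\mathrm{Prob}(\sigma\to\sigma')$). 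The two contributing families are $\tau=\sigma_-=\overline{\sigma_+}$ with amplitude $(\Delta^{up}_k)_{\sigma_+,\sigma_+}/K^{up}$, and the up-adjacent neighbors $\tau$ with amplitude $|(\Delta^{up}_k)_{\tau_+,\sigma_+}|/K^{up}$, where $\tau$ has the orientation dictated by up-adjacency.

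Next I would apply $(I_n\otimes HZ\otimes I)$ to the orientation qubit and then compress with $(\Pi_k\otimes\bra{0^\alpha})$, which forces the orientation qubit to $\ket{0}$ (positive) in the output. Since $HZ\ket{0}=\tfrac{1}{\sqrt 2}(\ket{0}+\ket{1})$ and $HZ\ket{1}=\tfrac{1}{\sqrt 2}(\ket{0}-\ket{1})$, the $\bra{0}$ component extracts a $+\tfrac{1}{\sqrt2}$ from states that were positively oriented and a $+\tfrac{1}{\sqrt2}$ from negatively oriented ones as well — so the sign must come from which orientation of $\tau$ actually appears. Here is the crux: an up-adjacent neighbor contributing to the $\sigma'_+$ column appears in its positive copy when $\sigma\sim_\uparrow\sigma'$ (giving $+w(v_\sigma)w(v_{\sigma'})$) and in its negative copy when $\sigma\sim_\uparrow\overline{\sigma'}$ (giving $-w(v_\sigma)w(v_{\sigma'})$ after the $\bra{0}$ projection of $HZ\ket{1}$). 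Thus the $HZ$ gate exactly reproduces the $\pm1$ sign pattern of $\Delta^{up}_k$, while the laziness transition $\sigma_+\to\overline{\sigma_+}$ supplies the diagonal $\deg$-type entry with the right sign, and everything is globally scaled by $1/\sqrt2$. I would then verify that the diagonal term also survives correctly: $\sigma_+\to\sigma_-$ is a negatively oriented output, so $HZ$ followed by $\bra{0}$ yields $+\tfrac{1}{\sqrt2}(\Delta^{up}_k)_{\sigma_+,\sigma_+}/K^{up}$, consistent with the claimed normalization.

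The main obstacle I anticipate is bookkeeping the sign conventions consistently — in particular, reconciling (i) the direction convention in $(P^{up})_{\sigma,\sigma'}$, (ii) which orientation copy of each up-adjacent neighbor is populated by the walk, and (iii) the sign picked up from $HZ\ket{1}$ — so that the three $\pm$ signs compose to match $\Delta^{up}_k$ rather than $-\Delta^{up}_k$ or a permuted variant. I would resolve this by checking a single explicit pair of up-adjacent edges (as in Figure~\ref{fig:up_walk}(a)) and confirming that a similarly-oriented transition yields $+1$ and a dissimilarly-oriented one yields $-1$, then invoking $\Delta^{up}_k$'s Hermiticity to argue the off-diagonal compressed matrix is symmetric and hence equals $\Delta^{up}_k/(K^{up}\sqrt2)$ entrywise. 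The remaining verification, that off-support matrix elements ($\sigma\nsim_\uparrow\sigma'$, $\sigma\neq\sigma'$) vanish and that the absorbing-state column $\Theta$ drops out under $\Pi_k$, is routine from the definition of $P^{up}$.
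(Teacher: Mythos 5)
Your overall route is exactly the paper's: compute the compressed matrix element $(\bra{\sigma_y^+}\otimes\bra{0^\alpha})\,V^{up}\,(\ket{\sigma_x^+}\otimes\ket{0^\alpha})$, observe that the $HZ$ on the orientation qubit turns the output projector $\bra{00}$ into $\tfrac{1}{\sqrt2}(\bra{\sigma_y^+}-\bra{\sigma_y^-})$ so that the compressed entry becomes $\tfrac{1}{\sqrt2}\bigl(P^{up}_{\sigma_x^+,\sigma_y^+}-P^{up}_{\sigma_x^+,\sigma_y^-}\bigr)$, and then match cases against the matrix representation of $\Delta^{up}_k$. There is no difference in strategy from the paper's proof.

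However, there is a concrete error at the crux. You compute $HZ\ket{1}=\tfrac{1}{\sqrt2}(\ket{0}-\ket{1})$ and conclude that the $\bra{0}$ component extracts $+\tfrac{1}{\sqrt2}$ from both orientation copies. In fact $HZ\ket{1}=H(-\ket{1})=\tfrac{1}{\sqrt2}(-\ket{0}+\ket{1})$, so $\bra{0}HZ\ket{1}=-\tfrac{1}{\sqrt2}$ while $\bra{0}HZ\ket{0}=+\tfrac{1}{\sqrt2}$; equivalently $\bra{0}HZ=\tfrac{1}{\sqrt2}(\bra{0}-\bra{1})$. That relative minus sign between the two orientation copies is the entire mechanism of the proposition: if both copies extracted $+\tfrac{1}{\sqrt2}$, as you state, the compressed operator would be proportional to the entrywise absolute value of $\Delta^{up}_k$ off the diagonal and the claim would fail. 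Your very next clause then asserts the correct sign pattern (``giving $-w(v_\sigma)w(v_{\sigma'})$ after the $\bra{0}$ projection of $HZ\ket{1}$''), which contradicts the sentence preceding it, so the middle paragraph is internally inconsistent rather than a derivation. The same slip infects the diagonal: with the correct $\bra{0}HZ\ket{1}=-\tfrac{1}{\sqrt2}$, the transition $\sigma_+\to\sigma_-$ contributes $-\tfrac{1}{\sqrt2}P^{up}_{\sigma_+,\sigma_-}$, not $+\tfrac{1}{\sqrt2}$, so the sign of the diagonal entry has to be tracked carefully against where Definition~\ref{def:up_random_walk} places the laziness (at $\sigma'=\sigma$ versus $\sigma'=\overline{\sigma}$); you cannot simply assert it ``survives correctly.'' Your proposed sanity check on an explicit pair of oriented edges would catch and repair all of this, but as written the sign bookkeeping --- which you correctly identify as the only nontrivial content of the proposition --- is not actually carried out.
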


\begin{proof}

For $\sigma^+_x,\sigma^+_y \in X^+_k$,  
let us denote the corresponding unoriented simplices by $x,y \in \{0,1\}^n$, respectively.
Then, for all $\sigma^+_x,\sigma^+_y \in X^+_k$,
    \begin{align*}
    (\bra{\sigma^+_y}\otimes \bra{0^{\alpha}}) V^{up} (\ket{\sigma^+_x}\otimes \ket{0^{\alpha}})&= 
     (\bra{y}\bra{00}\otimes \bra{0^{n+2}}) V^{up} (\ket{x}\ket{00}\otimes \ket{0^{n+2}})\\
    & = 
    \left(\bra{y}\frac{\bra{00}-\bra{10}}{\sqrt{2}}\otimes \bra{0^{n+2}}\right) U^{up} \left(\ket{x}\ket{00}\otimes \ket{0^{n+2}}\right)\\
    & =
    \left(\frac{\bra{\sigma_y^+}-\bra{\sigma_y^-}}{\sqrt{2}}\otimes \bra{0^{n+2}}\right) U^{up}(\ket{\sigma_x^+}\otimes \ket{0^{n+2}})\\
    & = 
    \frac{1}{\sqrt{2}}\left(
    {P^{up}_{\sigma_x^+,\sigma_y^+}}
    -{P^{up}_{\sigma_x^+,\sigma_y^-}}
    \right)
    \\&=
    \begin{cases}
       \frac{1}{\sqrt{2}}P^{up}_{\sigma_x^+,\sigma_y^+}, & if\ \sigma_x^+=\sigma_y^+ \\
        \frac{1}{\sqrt{2}}P^{up}_{\sigma_x^+,\sigma_y^+}, &if\ \sigma_x^+\sim_{\uparrow}\sigma_y^+\\
       - \frac{1}{\sqrt{2}}P^{up}_{\sigma_x^+,\sigma_y^-}, &if\ \sigma_x^+\sim_{\uparrow}\sigma_y^-
        \\
        0, & otherwise.
    \end{cases}
\end{align*}
Using the relationship between the Markov transition matrix and the up Laplacian of eq.~\eqref{eq:up_laplacian_markovcain}, we can further compute
\begin{align*}
   (\bra{\sigma^+_y}\otimes \bra{0^{\alpha}}) V^{up} (\ket{\sigma^+_x}\otimes \ket{0^{\alpha}}) &=
    \begin{cases}
       \frac{1}{K^{up}\sqrt{2}}({\Delta}^{up}_k)_{\sigma_x^+,\sigma_x^+}, & if\ \sigma_x^+=\sigma_y^+ \\
        \frac{1}{K^{up}\sqrt{2}}|({\Delta}^{up}_k)_{\sigma_x^+,\sigma_y^+}|, &if\ \sigma_x^+\sim_{\uparrow}\sigma_y^+\\
        -\frac{1}{K^{up}\sqrt{2}}|({\Delta}^{up}_k)_{\sigma_x^+,\sigma_y^+}|, &if\ \sigma_x^+\sim_{\uparrow}\sigma_y^-
        \\
        0, & otherwise.
    \end{cases}
    \\&=
    \begin{cases}
        \frac{1}{K^{up}\sqrt{2}}({\Delta}^{up}_k)_{\sigma_x^+,\sigma_x^+}, & if\ \sigma_x^+=\sigma_y^+ \\
        \frac{1}{K^{up}\sqrt{2}}({\Delta}^{up}_k)_{\sigma_x^+,\sigma_y^+}, &if\ \sigma_x^+\sim_{\uparrow}\sigma_y^+\\
        \frac{1}{K^{up}\sqrt{2}}({\Delta}^{up}_k)_{\sigma_x^+,\sigma_y^+}, &if\ \sigma_x^+\sim_{\uparrow}\sigma_y^-
        \\
        0, & otherwise.
    \end{cases}
    \end{align*}

   Therefore, it can be observed that
   \begin{equation}
    \left(\Pi_k \otimes\bra{0^{\alpha}}\right)
    V^{up} \left(\Pi_k\otimes \ket{0^{\alpha}}\right)= \frac{1}{K^{up}\sqrt{2}}{\Delta}^{up}_k.
    \end{equation}
\end{proof}

\begin{remark}
\label{remark:Pup}
    There could be several ways of realizing the unitary encoding of $\Pi^\pm_k P^{up} \Pi^\pm_k$. 
For example, based on the Szegedy type quantum walk, we can consider the following unitary that acts on two state registers for all $\sigma\in S_k$: 
    \begin{equation}
    \label{eq:Q_walk_up}
    U: \ket{\sigma}\ket{0^{n+2}}\rightarrow
    \sum_{\sigma'\in S} \sqrt{P^{up}_{\sigma,\sigma'}} \ket{\sigma}\ket{\sigma'}.
    \end{equation}
Then, we can construct a projected unitary encoding of $P^{up}$ by $U^\dagger \cdot \mathrm{SWAP}\cdot  U$:

$$
\left(\Pi^\pm_{k}\otimes\bra{0^{n+2}}\right)
    U^\dagger \cdot \mathrm{SWAP}\cdot  U \left(\Pi^\pm_{k}\otimes \ket{0^{n+2}}\right)= \Pi^\pm_k P^{up} \Pi^\pm_k.
$$
This is because   for all $\sigma_x,\sigma_y \in S_k$,
    \begin{align*}
        (\bra{\sigma_y}\otimes \bra{0^{n+2}}) U^\dagger \cdot \mathrm{SWAP}\cdot  U (\ket{\sigma_x}\otimes \ket{0^{n+2}})
    &=\left(
       \sum_{\sigma_y'\in S_k} \sqrt{P^{up}_{\sigma_y,\sigma_y'}} \bra{\sigma_y}\bra{\sigma_y'}
      \right) 
      \left(\sum_{\sigma_x'\in S_k} \sqrt{P^{up}_{\sigma_x,\sigma_x'}} \ket{\sigma_x'}\ket{\sigma_x}\right)\\
    &= \sqrt{P^{up}_{\sigma_x,\sigma_y}}\sqrt{P^{up}_{\sigma_y,\sigma_x}},
    \end{align*}
and for all $\sigma_x,\sigma_y \in X^\pm_k$, 
$$  (\bra{\sigma_y}\otimes \bra{0^{n+2}}) U^\dagger \cdot \mathrm{SWAP}\cdot  U(\ket{\sigma_x}\otimes \ket{0^{n+2}})=\sqrt{P^{up}_{\sigma_x,\sigma_y}}\sqrt{P^{up}_{\sigma_y,\sigma_x}}= P^{up}_{\sigma_x,\sigma_y} $$
because the Markov transition matrix is symmetric except for the elements related to the absorbing state $\Theta$. 
In Section~\ref{sec:clique_construction} we give explicit construction of the unitary encoding of Markov chain for clique complexes. 
\end{remark}

\subsubsection{Down quantum walk and encoding of down Laplacian}
\label{sec:def_down_random_walk}

Let us introduce the following normalization factor for a down random walk:
\begin{equation}
\label{eq:normalization_down}
K^{down} \coloneqq\max_{\sigma \in X^\pm_k}\sum_{i\in[k+1],j\in [n-k]} w(\tilde{i}(\sigma))w(\tilde{j}(\sigma^\downarrow(i))).
\end{equation}
Note that $K^{down}\in poly(n)$ as long as the $w(v)\in poly(n)$ for all $v\in V$. 
Let us also define
$$
\eta^{down}_\sigma \coloneqq 1- \frac{1}{K^{down}}\sum_{\sigma'\in X^+_k}\left|(\Delta_k^{down} )_{\sigma',\sigma_+}\right|.
$$

\begin{definition}[Down random walk on simplicial complexes]
    \label{def:down_random_walk}
    We define a Markov chain transition matrix $P^{down}:S_k\rightarrow S_k$ for a down random walk as follows:
\begin{equation}\label{eq:Markov_chain_down}
(P)^{down}_{\sigma,\sigma'}=
    \begin{cases}
        \sum_{v \in \sigma} w(v)^2/K^{down}, & if\ \sigma\neq \Theta\ and\ {\sigma={\sigma'}} \\
        w(v_{\sigma})w(v_{\sigma'})/K^{down}, &if\ \sigma\neq \Theta,\ \ \sigma
        \sim_{\downarrow}
        \sigma'
        \\
        \eta_{\sigma}^{down} & if\ \sigma\neq \Theta\ and\ \sigma'=\Theta \\
        1, & if\ \sigma=\sigma'=\Theta \ \\
        0, & otherwise.
    \end{cases}
\end{equation}
\end{definition}

Recall that the matrix representation of the down Laplacian for $\sigma, \sigma' \in X^+_k$ is 
\begin{equation}
    (\Delta^{down}_k)_{\sigma,\sigma'}=
    \begin{cases}
        \sum_{v \in \sigma} w(v)^2, & if\ \sigma=\sigma' \\
        w(v_{\sigma})w(v_{\sigma'}), & if\ \sigma\sim_{\downarrow}{\sigma'}\\
        -w(v_{\sigma})w(v_{\sigma'}), & if\ \sigma\sim_{\downarrow}\overline{\sigma'}\\
        0, & otherwise. 
    \end{cases}
\end{equation}

Then, it can be seen that 

$$
    (P^{down})_{\sigma,\sigma'}=
    \begin{cases} ({\Delta}_k^{down})_{\sigma_+,\sigma_+}/K^{down}, & if\ \sigma\neq \Theta\ and\ {\sigma=\overline{\sigma'}} \\
        |({\Delta}_k^{down})_{\sigma_+',\sigma_+}|/K^{down}, &if\ \sigma\neq \Theta,\ \ \sigma
        \sim_{\downarrow}
        \sigma'
        \\
        \eta_{\sigma}^{down}, & if\ \sigma\neq \Theta\ and\ \sigma'=\Theta \\
        1, & if\ \sigma=\sigma'=\Theta \ \\
        0, & otherwise.
    \end{cases}
$$

With a projected unitary encoding of $P^{down}$, we can construct a unitary in which down Laplacian is encoded:

\begin{prop}
\label{prop:unitaryencoding_down}

Given a simplicial complex $X$ over $n$-vertices, 
let $U^{down}$ be a $\Pi^\pm_k$-projected $(1,\alpha,0)$-unitary encoding of $\Pi^\pm_k P^{down} \Pi^\pm_k$ i.e.,
    $$
\left(\Pi^\pm_k\otimes\bra{0^{\alpha}}\right)
    U^{down} \left(\Pi^\pm_k\otimes \ket{0^{\alpha}}\right)=\Pi^\pm_k P^{down} \Pi^\pm_k.
    $$
    Then, $V^{down}\coloneqq  \left((I_n\otimes HZ\otimes I)\otimes I_{\alpha}\right)U^{down}$ satisfies
     $$
    \left(\Pi_k\otimes\bra{0^{\alpha}}\right)
    V^{down} \left(\Pi_k\otimes \ket{0^{\alpha}}\right)= \frac{{\Delta}_k^{down}}{K^{down}\sqrt{2}}.
$$ 
\end{prop}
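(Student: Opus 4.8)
The plan is to mirror exactly the proof of Proposition~\ref{prop:unitaryencoding_up}, since the down Laplacian $\Delta_k^{down}$ has the same sign structure under orientation changes as $\Delta_k^{up}$, with down-adjacency replacing up-adjacency. First I would fix $\sigma_x^+,\sigma_y^+\in X_k^+$, write out the matrix element $(\bra{\sigma_y^+}\otimes\bra{0^\alpha})V^{down}(\ket{\sigma_x^+}\otimes\ket{0^\alpha})$, and push the gate $(I_n\otimes HZ\otimes I)\otimes I_\alpha$ onto the bra. The key computation is that $HZ$ acting on the orientation qubit sends the left-hand positively oriented basis bra $\bra{00}$ to $\tfrac{1}{\sqrt 2}(\bra{00}-\bra{10})$, so that the matrix element becomes $\tfrac{1}{\sqrt 2}\bigl(P^{down}_{\sigma_x^+,\sigma_y^+}-P^{down}_{\sigma_x^+,\sigma_y^-}\bigr)$ after inserting the $\Pi_k^\pm$-projected unitary encoding of $P^{down}$.

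Next I would split into the four cases of the transition matrix $P^{down}$. When $\sigma_x^+=\sigma_y^+$ the only surviving term is the laziness $P^{down}_{\sigma_x^+,\sigma_x^+}=(\Delta_k^{down})_{\sigma_x^+,\sigma_x^+}/K^{down}$; when $\sigma_x^+\sim_{\downarrow}\sigma_y^+$ only the $+$ term survives and equals $|(\Delta_k^{down})_{\sigma_x^+,\sigma_y^+}|/K^{down}$; when $\sigma_x^+\sim_{\downarrow}\sigma_y^-$ (equivalently $\sigma_x^+\sim_{\downarrow}\overline{\sigma_y^+}$) only the $-$ term survives and contributes $-|(\Delta_k^{down})_{\sigma_x^+,\sigma_y^+}|/K^{down}$; and otherwise the element is $0$. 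The crucial point, identical to the up case, is that the relative minus sign introduced by the $HZ$ gate precisely matches the sign $-w(v_\sigma)w(v_{\sigma'})$ appearing in the $\sigma\sim_{\downarrow}\overline{\sigma'}$ entry of $\Delta_k^{down}$ in eq.~\eqref{eq:down_laplacian}. Invoking the displayed relationship between $P^{down}$ and $\Delta_k^{down}$ then collapses all cases into a single formula $\tfrac{1}{K^{down}\sqrt 2}(\Delta_k^{down})_{\sigma_x^+,\sigma_y^+}$.

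Finally, since this holds for every pair of positively oriented basis simplices, summing over the basis gives
$$
\left(\Pi_k\otimes\bra{0^{\alpha}}\right)V^{down}\left(\Pi_k\otimes\ket{0^{\alpha}}\right)=\frac{\Delta_k^{down}}{K^{down}\sqrt 2},
$$
which is the claim. I do not expect a genuine obstacle here: the argument is structurally a verbatim transcription of Proposition~\ref{prop:unitaryencoding_up}. The only point demanding care is bookkeeping the orientation signs—specifically confirming that the induced-orientation convention makes $\sigma\sim_{\downarrow}\overline{\sigma'}$ correspond to the left bra component $\bra{10}$ (negative orientation) picking up the $-1$ from $HZ$, so that the sign pattern of $P^{down}$ reproduces that of $\Delta_k^{down}$ rather than its negation. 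Since the down Laplacian's off-diagonal signs obey the same $\pm$ rule as the up Laplacian's and the laziness term reproduces the diagonal $\sum_{v\in\sigma}w(v)^2$ exactly, this sign-matching goes through unchanged.
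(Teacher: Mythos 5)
Your proposal is correct and follows exactly the route the paper takes: the paper's own proof of Proposition~\ref{prop:unitaryencoding_down} simply says the argument is "almost similar" to Proposition~\ref{prop:unitaryencoding_up} and displays the same chain of equalities you describe — the $HZ$ gate on the orientation qubit turning the matrix element into $\tfrac{1}{\sqrt 2}\bigl(P^{down}_{\sigma_x^+,\sigma_y^+}-P^{down}_{\sigma_x^+,\sigma_y^-}\bigr)$, followed by the case analysis against the matrix representation of $\Delta_k^{down}$. Your sign bookkeeping for the $\sigma\sim_{\downarrow}\overline{\sigma'}$ case and the diagonal laziness term matches Definition~\ref{def:down_random_walk} and eq.~\eqref{eq:down_laplacian}, so nothing is missing.
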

\begin{proof}
    The proof is almost similar to the proof of Proposition~\ref{prop:unitaryencoding_up}. With a similar calculation, we can show that 
        \begin{align*}
    (\bra{\sigma^+_y}\otimes \bra{0^{\alpha}}) V (\ket{\sigma^+_x}\otimes \ket{0^{\alpha}})&= 
    \frac{1}{\sqrt{2}}\left(
    {P^{down}_{\sigma_x^+,\sigma_y^+}}
    -{P^{down}_{\sigma_x^+,\sigma_y^-}}
    \right)
    \\&=
    \begin{cases}
       \frac{1}{\sqrt{2}}P^{down}_{\sigma_x^+,\sigma_y^+}, & if\ \sigma_x^+=\sigma_y^+ \\
        \frac{1}{\sqrt{2}}P^{down}_{\sigma_x^+,\sigma_y^+}, &if\ \sigma_x^+\sim_{\downarrow}\sigma_y^+\\
       - \frac{1}{\sqrt{2}}P^{down}_{\sigma_x^+,\sigma_y^-}, &if\ \sigma_x^+\sim_{\downarrow}\sigma_y^-
        \\
        0, & otherwise
    \end{cases}
    \\&=
    \begin{cases}
        \frac{1}{K^{down}\sqrt{2}}({\Delta}^{down}_k)_{\sigma_x^+,\sigma_x^+}, & if\ \sigma_x^+=\sigma_y^+ \\
        \frac{1}{K^{down}\sqrt{2}}({\Delta}^{down}_k)_{\sigma_x^+,\sigma_y^+}, &if\ \sigma_x^+\sim_{\downarrow}\sigma_y^+\\
        \frac{1}{K^{down}\sqrt{2}}({\Delta}^{down}_k)_{\sigma_x^+,\sigma_y^+}, &if\ \sigma_x^+\sim_{\downarrow}\sigma_y^-
        \\
        0, & otherwise.
    \end{cases}
    \end{align*}
\end{proof}

\subsubsection{Harmonic quantum walk and encoding of full Laplacian}
\label{sec:def_harmonic_markov_chain}
In this subsection, we introduce the random walk on simplicial complexes that we will quantize in the subsequent section. 

Given a directed simplicial complex $X^\pm$,
define 

\begin{equation}
\label{eq:normalization_harmonic}
K \coloneqq \max_{\sigma\in X^\pm_k} 
\left(
\sum_{i\in [n]} w(i)^2
+
\sum_{\substack{i\in [n-k-1],\\j\in [k+1]}} w(i(\sigma))w(\tilde{j}(\sigma))
\right).
\end{equation}
Note that $K\in poly(n)$ as long as the $w(v)\in poly(n)$ for all $v\in V$. 
Then, let us also introduce 
$$
\eta_\sigma \coloneqq 1- \frac{1}{K}\sum_{\sigma'\in X^+_k}\left|(\Delta_k )_{\sigma',\sigma_+}\right|.
$$

\begin{definition}[Harmonic random walk on simplicial complexes]
    \label{def:harmonic_random_walk}

We define a Markov chain transition matrix $P:S_k\rightarrow S_k$ as follows
\begin{align}\label{eq:Markov_chain}
(P)_{\sigma,\sigma'}&=
    \begin{cases}
      {\left(\sum_{u \in up(\sigma)} w(u)^2+\sum_{v \in \sigma} w(v)^2\right)}/{K}, & if\ \sigma\neq \Theta\ and\ {\sigma={\sigma'}} \\
        w(v_{\sigma})w(v_{\sigma'})/K, &if\ \sigma\neq \Theta,\ \sigma
        \sim_{\downarrow}
        \sigma' \ and\ \sigma\nsim_{\uparrow} \overline{\sigma'}
        \\
        \eta_{\sigma} & if\ \sigma\neq \Theta\ and\ \sigma'=\Theta \\
        1, & if\ \sigma=\sigma'=\Theta \ \\
        0, & otherwise.
    \end{cases}
\end{align}
\end{definition}

It can be seen that this is a stochastic matrix, i.e., $\sum_{\sigma'\in S_k} (P)_{\sigma,\sigma'}=1$ for all $\sigma\in S_k$.
Recalling the matrix representation of the combinatorial Laplacian 
    \begin{equation}
    (\Delta_k)_{\sigma,\sigma'}=
    \begin{cases}
        \deg(\sigma)+k+1, & if\ \sigma=\sigma' \\
        1, &if\ \sigma\sim_{\downarrow}{\sigma'}\ and\  \sigma\nsim_{\uparrow}\overline{\sigma'}
        \\
        -1, &if\ \sigma\sim_{\downarrow}\overline{\sigma'}\ and\ \sigma\nsim_{\uparrow}\sigma' \\
        0, & otherwise,\ 
    \end{cases}
    \end{equation}
it can be seen that
\begin{align*}
(P)_{\sigma,\sigma'}=
    \begin{cases}
        ({\Delta}_k)_{\sigma_+,\sigma_+}/K , & if\ \sigma\neq \Theta\ and\ {\sigma=\sigma'}\\
        |({\Delta}_k)_{\sigma_+',\sigma_+}|/K, &if\ \sigma\neq \Theta,\ \sigma
        \sim_{\downarrow}
        \sigma' \ and\ \sigma\nsim_{\uparrow} \overline{\sigma'}
        \\
        \eta_{\sigma} & if\ \sigma\neq \Theta\ and\ \sigma'=\Theta \\
        1, & if\ \sigma=\sigma'=\Theta \ \\
        0, & otherwise.
    \end{cases}
\end{align*}

Based on this definition, we show the following proposition:

{

\begin{prop}
\label{prop:unitaryencoding}
Given a simplicial complex $X$ over $n$-vertices, 
suppose that $U^h$ is a $\Pi^\pm_k$-projected $(1,\alpha,0)$-unitary encoding of $\Pi^\pm_k P \Pi^\pm_k$ i.e.,
    $$
    \left(\Pi^\pm_k\otimes\bra{0^{\alpha}}\right)
    U^{h} \left(\Pi^\pm_k\otimes \ket{0^{\alpha}}\right)=\Pi^\pm_k P \Pi^\pm_k.
    $$
Let $V^h= \left((I_n\otimes HZ\otimes I)\otimes I_{\alpha}\right)((U^h)^\dagger \cdot \mathrm{SWAP} \cdot U^h)$. 
Then, $V^h$ satisfies
 $$
    \left(\Pi_k\otimes\bra{0^{\alpha}}\right)
    V^h \left(\Pi_k\otimes \ket{0^{\alpha}}\right)= \frac{{\Delta}_k}{K\sqrt{2}}.
    $$ 
\end{prop}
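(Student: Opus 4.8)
The plan is to follow the proof of Proposition~\ref{prop:unitaryencoding_up} essentially verbatim, making only two substitutions: replacing the single walk unitary $U^{up}$ by the Szegedy-type composite $(U^h)^\dagger\cdot\mathrm{SWAP}\cdot U^h$, and replacing the up-Laplacian correspondence by the full-Laplacian correspondence $(P)_{\sigma,\sigma'}\leftrightarrow(\Delta_k)_{\sigma_+,\sigma'_+}$ recorded just above the statement. First I would confirm that the composite $(U^h)^\dagger\cdot\mathrm{SWAP}\cdot U^h$ is a $\Pi^\pm_k$-projected $(1,n+2,0)$-unitary encoding of $\Pi^\pm_k P\Pi^\pm_k$; this is exactly the computation in the Remark following Proposition~\ref{prop:unitaryencoding_up}, which gives $(\bra{\sigma_y}\otimes\bra{0^{n+2}})(U^h)^\dagger\,\mathrm{SWAP}\,U^h(\ket{\sigma_x}\otimes\ket{0^{n+2}})=\sqrt{P_{\sigma_x,\sigma_y}P_{\sigma_y,\sigma_x}}=P_{\sigma_x,\sigma_y}$ for $\sigma_x,\sigma_y\in X^\pm_k$, the last equality using that $P$ is symmetric away from the absorbing state $\Theta$. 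This composite plays the harmonic-walk role that $U^{up}$ plays in Proposition~\ref{prop:unitaryencoding_up}.

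Next I would track the prepended $(I_n\otimes HZ\otimes I)$ acting on the orientation qubit. Since $\bra{0}HZ=(\bra{0}-\bra{1})/\sqrt{2}$ on that register (as in Proposition~\ref{prop:unitaryencoding_up}), acting with $\bra{\sigma_y^+}\otimes\bra{0^{n+2}}$ after the $HZ$ layer is the same as acting with $\tfrac{1}{\sqrt2}(\bra{\sigma_y^+}-\bra{\sigma_y^-})\otimes\bra{0^{n+2}}$ before it. Combining this with the previous step yields, for all $\sigma_x^+,\sigma_y^+\in X^+_k$,
\[
(\bra{\sigma_y^+}\otimes\bra{0^{n+2}})\,V^h\,(\ket{\sigma_x^+}\otimes\ket{0^{n+2}})=\tfrac{1}{\sqrt2}\bigl(P_{\sigma_x^+,\sigma_y^+}-P_{\sigma_x^+,\sigma_y^-}\bigr).
\]

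The heart of the argument is then the case analysis turning the right-hand side into $(\Delta_k)_{\sigma_x^+,\sigma_y^+}/(K\sqrt2)$ via the $P$–$\Delta_k$ correspondence. I would split into: (i) $\sigma_x^+=\sigma_y^+$, where only the self-loop entry $P_{\sigma_x^+,\sigma_x^+}=(\Delta_k)_{\sigma_x^+,\sigma_x^+}/K$ survives and $P_{\sigma_x^+,\sigma_x^-}=0$; (ii) $\sigma_x^+\sim_\downarrow\sigma_y^+$ with $\sigma_x^+\nsim_\uparrow\sigma_y^-$, where $P_{\sigma_x^+,\sigma_y^+}=(\Delta_k)_{\sigma_x^+,\sigma_y^+}/K$ and $P_{\sigma_x^+,\sigma_y^-}=0$; (iii) $\sigma_x^+\sim_\downarrow\sigma_y^-$ with $\sigma_x^+\nsim_\uparrow\sigma_y^+$, where $P_{\sigma_x^+,\sigma_y^+}=0$ and $-P_{\sigma_x^+,\sigma_y^-}=(\Delta_k)_{\sigma_x^+,\sigma_y^+}/K$; and (iv) all remaining pairs, where both transition entries and $(\Delta_k)_{\sigma_x^+,\sigma_y^+}$ vanish. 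In every case the signed difference collapses to $(\Delta_k)_{\sigma_x^+,\sigma_y^+}/K$, so the matrix element equals $(\Delta_k)_{\sigma_x^+,\sigma_y^+}/(K\sqrt2)$; reading this off over all basis pairs gives $(\Pi_k\otimes\bra{0^{n+2}})V^h(\Pi_k\otimes\ket{0^{n+2}})=\Delta_k/(K\sqrt2)$.

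The step I expect to be the main obstacle is the bookkeeping in cases (ii)–(iii), where the two adjacency notions must be prevented from interfering. Here I would lean on the preliminary fact that $\sigma\sim_\uparrow\sigma'$ implies $\sigma\sim_\downarrow\overline{\sigma'}$, which is exactly what forces both $\Delta_k$ and $P$ to carry the extra ``$\nsim_\uparrow$'' side conditions, and I must check that whenever those conditions fail all of $P_{\sigma_x^+,\sigma_y^+}$, $P_{\sigma_x^+,\sigma_y^-}$, and $(\Delta_k)_{\sigma_x^+,\sigma_y^+}$ vanish simultaneously, so that no spurious term survives. The sign correspondence---negative Laplacian entries arising precisely from dissimilar down-adjacency and matched by the relative minus sign the $HZ$ layer injects---is the one place where a slip in orientation conventions would be fatal, so I would verify it explicitly against the weighted definition in eq.~\eqref{eq:laplacian}.
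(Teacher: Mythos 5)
Your proposal is correct and follows essentially the same route as the paper: compute the matrix element as $\tfrac{1}{\sqrt2}\bigl(P_{\sigma_x^+,\sigma_y^+}-P_{\sigma_x^+,\sigma_y^-}\bigr)$ via the $HZ$ layer and the Szegedy-type encoding of $P$, then run the case analysis against the $P$--$\Delta_k$ correspondence; your four cases are exactly the paper's six (the paper just lists the two vanishing ``$\sim_\downarrow$ and $\sim_\uparrow$'' sub-cases separately rather than folding them into the remainder). Your explicit attention to why both transition entries and the Laplacian entry vanish simultaneously when the $\nsim_\uparrow$ side conditions fail, using uniqueness of the common face, is if anything more careful than the paper's own write-up.
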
}

\begin{proof}
The proof is similar to the proof of Proposition~\ref{prop:unitaryencoding_up}.
In this case, for all $\sigma^+_x,\sigma^+_y \in X^+_k$,
    \begin{align*}
    (\bra{\sigma^+_y}\otimes \bra{0^{\alpha}}) V^{h} (\ket{\sigma^+_x}\otimes \ket{0^{\alpha}})
    & = 
    \frac{1}{\sqrt{2}}\left(
    {P_{\sigma_x^+,\sigma_y^+}}
    -{P_{\sigma_x^+,\sigma_y^-}}
    \right)
    \\&=
    \begin{cases}
       \frac{1}{K\sqrt{2}}({\Delta}_k)_{\sigma_x^+,\sigma_x^+}, & if\ \sigma_x^+=\sigma_y^+ \\
        0, &if\ \sigma_x^+\sim_{\downarrow}\sigma_y^+\ and\ \sigma_x^+\sim_{\uparrow}\sigma_y^-\\
       0, &if\ \sigma_x^+\sim_{\downarrow}\sigma_y^-\ and\ \sigma_x^+\sim_{\uparrow}\sigma_y^+
        \\
        \frac{1}{\sqrt{2}}P_{\sigma_x^+,\sigma_y^+}, &if\ \sigma_x^+\sim_{\downarrow}\sigma_y^+\ and\ \sigma_x^+\nsim_{\uparrow}\sigma_y^-\\
       - \frac{1}{\sqrt{2}}P_{\sigma_x^+,\sigma_y^-}, &if\ \sigma_x^+\sim_{\downarrow}\sigma_y^-\ and\ \sigma_x^+\nsim_{\uparrow}\sigma_y^+
        \\
        0, & otherwise.
    \end{cases}
    \end{align*}
Then, we can conclude the proof of the proposition.

\end{proof}

\begin{remark}

    We can give a different definition of harmonic random walk on simplicial complexes as follows:

\begin{align}
(P')_{\sigma,\sigma'}&=
    \begin{cases}
      {\left(\sum_{u \in up(\sigma)} w(u)^2+\sum_{v \in \sigma} w(v)^2\right)}/{K}, & if\ \sigma\neq \Theta\ and\ {\sigma={\sigma'}} \\
        w(v_{\sigma})w(v_{\sigma'})/K, &if\ \sigma\neq \Theta,\ \sigma
        \sim_{\downarrow}
        \sigma' \ or\ \sigma\sim_{\uparrow} \sigma'
        \\
        \eta_{\sigma}, & if\ \sigma\neq \Theta\ and\ \sigma'=\Theta \\
        1, & if\ \sigma=\sigma'=\Theta \ \\
        0, & otherwise.
    \end{cases}
\end{align}
Note that $\sigma\sim_{\downarrow} \sigma'$ and $\sigma\sim_{\uparrow} \sigma'$ do not coincide. 
Although this definition also works for the quantization and quantum encoding of the Laplacian, we mainly consider the definition in Section~\ref{sec:def_harmonic_markov_chain} in this paper.
\end{remark}

\subsubsection{Implementing the unitary encoding of projectors}
\label{sec:proof_projectors}

In this subsection, we finally show how to implement the unitary encoding of the projector onto the $k$-cocycle $Z^k=\ker(\partial^*_{k+1})=\ker(\Delta^{up}_k)$ of vertex-weighted simplicial complexes using up quantum walk. 

The projector can be implemented by applying the quantum singular value transformation (QSVT)~\cite{gilyen2019quantum} to $V^{up}$ of Proposition~\ref{prop:unitaryencoding_up}, which is a unitary encoding of the up Laplacian. 
We use the following polynomial, which is an approximation of the rectangle function, to transform the eigenvalues.

\begin{lemma}[Lemma 29, \cite{gilyen2019quantum}]
\label{lemma:rectangle}
Let $\delta,\epsilon\in(0,1/2)$ and $t\in [-1,1]$. There exists an even polynomial $P\in {\mathbb R}[x]$ of degree ${\mathcal O}(\log{(1/\epsilon)}/\delta)$, such that $|P(x)|\leq 1$ for all $x\in[-1,1]$, and 
\begin{equation}
\left\{
    \begin{array}{ll}
    P(x)\in [0,\epsilon], & \forall x\in[-1,-t-\delta]\cup [t+\delta,1],\ and\\
    P(x)\in [1-\epsilon,1], & \forall x\in [-t+\delta,t-\delta].
    \end{array}
\right.
\end{equation}
\end{lemma}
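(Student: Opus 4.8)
This is the standard smoothed-threshold (``rectangle'') polynomial, and the plan is to build it from a polynomial approximation of the error function. First I would observe that the indicator of $[-t,t]$ is a combination of two shifted sign functions, and that a smooth proxy for it is
\[
R(x)=\tfrac12\bigl(\operatorname{erf}(k(x+t))-\operatorname{erf}(k(x-t))\bigr),
\]
which is even in $x$ (since $\operatorname{erf}$ is odd) and takes values in $[0,1]$. The free parameter $k$ controls the sharpness of the two transitions located at $\pm t$.

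Second, I would fix $k$ so that $R$ already satisfies the required interval bounds up to $\epsilon/2$. Using the tail bound $1-\operatorname{erf}(z)\le e^{-z^2}$, on the region where $|x\pm t|\ge\delta$ the ``wrong'' $\operatorname{erf}$ term lies within $e^{-(k\delta)^2}$ of its saturated value $\pm1$; choosing $k=\Theta(\delta^{-1}\sqrt{\log(1/\epsilon)})$ makes this at most $\epsilon/2$. Hence $R(x)\in[0,\epsilon/2]$ for $x\in[-1,-t-\delta]\cup[t+\delta,1]$ and $R(x)\in[1-\epsilon/2,1]$ for $x\in[-t+\delta,t-\delta]$, matching the three required regimes with room to spare.

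Third, I would replace $\operatorname{erf}$ by a polynomial. The analytic core is a uniform polynomial approximation $Q\approx\operatorname{erf}(k\,\cdot\,)$ of degree $O(k\sqrt{\log(1/\epsilon)})=O(\log(1/\epsilon)/\delta)$, valid to error $\epsilon/4$ on the range of arguments that actually occur; I would take $Q$ to be odd by discarding its even part, which only improves the error. Setting $P(x)=\tfrac12\bigl(Q(k(x+t))-Q(k(x-t))\bigr)$ then yields a polynomial of the claimed degree that is within $\epsilon/2$ of $R$ everywhere on $[-1,1]$, hence within $\epsilon$ of the targets on the three intervals. Evenness of $P$ follows from the same symmetry used for $R$, or if needed by symmetrizing $P(x)\mapsto\tfrac12(P(x)+P(-x))$, which preserves both the degree and the interval conditions since all intervals are symmetric; and the global bound $|P(x)|\le1$ is obtained by the final rescaling $P\mapsto P/(1+\epsilon)$, absorbing the $O(\epsilon)$ overshoot.

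I expect the main obstacle to be this third step: producing the polynomial approximation of $\operatorname{erf}(k\,\cdot\,)$ with degree scaling linearly in $1/\delta$ and only logarithmically in $1/\epsilon$, together with the book-keeping needed because the shifted arguments $k(x\pm t)$ leave $[-1,1]$. I would handle the range issue by establishing the erf approximation on a fixed enlarged interval, rescaling the variable so the arguments stay in $[-1,1]$ and folding the resulting constant into $\delta$, which changes only the constant in the degree bound; one concrete route is the Gaussian-integral representation of $\operatorname{erf}$ combined with a Jacobi--Anger expansion and truncation, whose convergence rate supplies exactly the $O(\log(1/\epsilon))$ factor. The remaining verifications --- evenness, the $|P|\le1$ normalization, and checking the two-sided bounds on each interval --- are then routine.
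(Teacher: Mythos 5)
This lemma is imported by citation (Lemma~29 of \cite{gilyen2019quantum}); the paper gives no proof of its own, and your construction --- the difference of two shifted $\operatorname{erf}(k\,\cdot)$ profiles with $k=\Theta(\delta^{-1}\sqrt{\log(1/\epsilon)})$, then polynomialized at degree $O(\log(1/\epsilon)/\delta)$ and symmetrized --- is exactly the route taken in that reference, so the proposal is essentially correct and matches the source. The one detail to tighten is the final normalization: dividing by $1+\epsilon$ does not by itself guarantee $P(x)\ge 0$ on the outer intervals, since your polynomial may undershoot $R$ by up to $\epsilon/2$ there, so you also need the even, degree-preserving affine adjustment $P\mapsto (P+\epsilon/2)/(1+\epsilon)$ (with constants rebalanced) to land in $[0,\epsilon]$.
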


Let $\tilde{\lambda}_k^{up}\coloneqq \frac{\lambda^{up}_k}{\sqrt{2}K^{up}}$ where $\lambda^{up}_k$ is the smallest non-zero eigenvalue of $\Delta^{up}_k$.
We perform QSVT with this polynomial by taking $t=\tilde{\lambda}_k^{up}/2$ and $\delta=\tilde{\lambda}_k^{up}/4$.
Then, the degree of $P(x)$ is $T= \mathcal{O}(K^{up}\log(1/\epsilon)/\lambda^{up}_k)$. 
By performing QSVT for $V^{up}$, 
we can implement $V^{up}_T$ s.t. 
    \begin{align}
    \left(\Pi_k\otimes \bra{00}\otimes \bra{0^{\alpha}}\right) V^{up}_T \left(\Pi_k\otimes \ket{00}\otimes \ket{0^{\alpha}}\right)
    &=
    P\left( 
    \frac{\Delta^{up}_k}{\sqrt{2}K}
    \right)\\
    &=
    \sum_i
    P\left(\frac{\lambda^{up}_i}{\sqrt{2}K}\right) 
    \ket{\psi_i}\bra{\psi_i}
    \\
    &=:
    \tilde{\Pi}_{Z^k},
    \end{align}
    where $\{\lambda^{up}_i,\ket{\psi_i}\}_i$ are the eigenvalues and eigenvectors of $\Delta^{up}_k$.
    Then, $\tilde{\Pi}_{Z^k}$ satisfies 
    $$
    \|\mathrm{Proj}({Z^k})-\tilde{\Pi}_{Z^k}\|\leq \epsilon.
    $$
The unitary $V^{up}_T$ can be implemented using $T= \mathcal{O}(K\log(1/\epsilon)/\lambda_k)$ number of 
 $V^{up},(V^{up})^\dagger, C_\Pi NOT$ gates and other single qubit gates.
With up quantum walk, we can also implement the unitary encoding of the projector onto the $k$-boundary $B_k= \mathrm{Im}(\partial_{k+1})=\mathrm{Im}(\Delta^{up}_k)$. 
 In this case, instead of using the polynomial of Lemma~\ref{lemma:rectangle}, 
    we use a polynomial $P(x)$ s.t. 
\begin{equation}
\left\{
    \begin{array}{ll}
    P(x)\in [1-\epsilon,1], & \forall x\in[-1,-t-\delta]\cup [t+\delta,1],\ and\\
    P(x)\in [0,\epsilon], & \forall x\in [-t+\delta,t-\delta].
    \end{array}
\right.
\end{equation}
of \cite{martyn2021grand}. 
By performing QSVT with this polynomial, we can implement $U_{B_k}$ with the desired complexity. 

With down quantum walks, we can similarly implement projectors onto $k$-cycle $Z_k=\ker(\partial_k)=\ker(\Delta^{down}_k)$ and $k$-coboundary $B^k=\mathrm{Im}(\partial_k^*)$.
Finally, we can implement a projector onto the harmonics with the harmonic quantum walk in a similar way as in the case of the up walk and down walk. 
This concludes the proof of Theorem~\ref{thm:main}.

\section{Efficient construction of quantum walk unitaries for clique complexes}

\label{sec:clique_construction}

In this section, we show efficient constructions of up, down, and harmonic quantum walk unitaries for clique complexes.
When we use $\tilde{\mathcal{O}}$, it hides factors in $polylog(n)$, where $n$ is the number of vertices. 
In the following, we assume that quantum access to graph adjacency matrices and vertex-weighting functions can be done in at most linear time in $n$. 
We will prove the following theorem.

\begin{theorem}
\label{theorem:clique}
Let $X$ be a vertex-weighted clique complex over $n$ vertices and $P^{up}$, $P^{down}$, $P$ be the Markov transition matrices of Definitions~\ref{def:up_random_walk},~\ref{def:down_random_walk},~\ref{def:harmonic_random_walk}. 
Suppose that $(P^{up})_{\sigma,\sigma'}$, $(P^{down})_{\sigma,\sigma'}$, $(P)_{\sigma,\sigma'}$ can be represented with $t$-bits with $t\in \mathcal{O}( \log (1/\epsilon) + \log n )$ for every $\sigma,\sigma' \in S_k$ and we are given the normalization factors $K^{up}$, $K^{down}$ and $K$. 
Then we can 
construct $U^{\uparrow\downarrow}_k$, $U^{\downarrow\uparrow}_k$, $U^h_k$ such that 
it holds for all $\sigma,\sigma' \in X^\pm_k$ that 
\begin{itemize}
    \item $\left|(P^{up})_{\sigma,\sigma'} - \bra{\sigma'}\bra{0^{n+2}}\bra{0^m,0^m}
(U^{\uparrow\downarrow}_k)^\dagger\cdot \mathrm{SWAP}\cdot U^{\uparrow\downarrow}_k
\ket{\sigma}\ket{0^{n+2}}\ket{0^m,0^m}\right|
\leq \epsilon$,
    \item $\left| (P^{down})_{\sigma,\sigma'}-\bra{\sigma'}\bra{0^{n+2}}\bra{0^m,0^m}
(U^{\downarrow\uparrow}_k)^\dagger\cdot \mathrm{SWAP}\cdot U^{\downarrow\uparrow}_k
\ket{\sigma}\ket{0^{n+2}}\ket{0^m,0^m}\right|
\leq \epsilon$,
\item $\left| 
P_{\sigma,\sigma'}-\bra{\sigma'}\bra{0^{n+2}}\bra{0^m,0^m}
(U^h_k)^\dagger\cdot \mathrm{SWAP}\cdot U^h_k
\ket{\sigma}\ket{0^{n+2}}\ket{0^m,0^m}\right| \leq \epsilon$,
\end{itemize}
with $\tilde{\mathcal{O}}(n^2\log(1/\epsilon))$ gates and $m=\mathcal{O}(\log n)$, 
where the $\mathrm{SWAP}$ operator swaps between the first and the second $n+2$-qubit registers and the two $m$-qubit registers, respectively. 
\end{theorem}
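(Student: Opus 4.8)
The plan is to construct each quantum walk unitary as a sequence of elementary sub-operations acting on the state register and ancillary ``move'' registers, following the ``up-down'', ``down-up'', and harmonic recipes sketched in the technical-contributions section. The guiding principle is that each Markov transition matrix $P^{up}$, $P^{down}$, $P$ has the symmetric form $(P)_{\sigma,\sigma'}=\sqrt{p_{\sigma\to\sigma'}}\sqrt{p_{\sigma'\to\sigma}}$ away from the absorbing state, so it suffices to build a ``dispersion'' unitary $U$ of the Szegedy form $\ket{\sigma}\ket{0}\mapsto\sum_{\sigma'}\sqrt{P_{\sigma,\sigma'}}\ket{\sigma}\ket{\sigma'}$ and then read off $P$ from $U^\dagger\cdot\mathrm{SWAP}\cdot U$, exactly as in the Remark after Proposition~\ref{prop:unitaryencoding_up}. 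For clique complexes the key simplification is that membership of a Hamming-weight-$(k+1)$ bit string in $X_k$ is decidable from the graph adjacency matrix alone: a subset is a simplex iff it forms a clique, which can be checked with $\mathcal{O}(n^2)$ gates by testing all $\binom{k+1}{2}$ pairs of chosen vertices against the adjacency matrix. This membership oracle is the workhorse that replaces any need for an explicit list of simplices.

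First I would construct $U^{\uparrow\downarrow}_k$. On the input $\ket{\sigma}\ket{0^{n+2}}\ket{0^m,0^m}$, I prepare in the two $m$-qubit ``index'' registers the superpositions $\sum_{i\in[n-k-1]}\frac{1}{\sqrt{n-k-1}}\ket{i}$ and $\sum_{j\in[k+2]}\frac{1}{\sqrt{k+2}}\ket{j}$ needed to pick which $0$ to flip up and which $1$ to flip back down; these preparations cost $\mathcal{O}(\log n)$ qubits and, using the weighting, are weighted so that the resulting amplitudes reproduce $\sqrt{w(\cdot)w(\cdot)/K^{up}}$. Conditioned on the index $i$, I flip the $i$-th $0$ of $x_\sigma$ to $1$, check cliqueness with the membership oracle, and compute the induced sign from the parity of $1$'s preceding the flipped position (Claim~\ref{claim:parity}), XORing this into the orientation qubit. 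Then conditioned on $j$ I flip a $1$ back to $0$, re-check membership, and update the orientation again. The laziness (self-loops) is encoded automatically because the case where the flipped-down vertex equals the flipped-up vertex returns $\sigma$ with orientation unchanged. Writing the target simplex into the second $n+2$-qubit register and uncomputing scratch ancillas yields the dispersion unitary; $(U^{\uparrow\downarrow}_k)^\dagger\cdot\mathrm{SWAP}\cdot U^{\uparrow\downarrow}_k$ then reproduces $(P^{up})_{\sigma,\sigma'}$ up to the $\epsilon$ arising from representing the amplitudes with $t=\mathcal{O}(\log(1/\epsilon)+\log n)$ bits. The construction of $U^{\downarrow\uparrow}_k$ is the mirror image: go down first (flip a $1$ to $0$, with the down-move always landing in a genuine face so no membership check is needed at that step) and then up (flip a $0$ to $1$, now requiring a membership check); the normalization is $K^{down}$ and the diagonal laziness equals $\sum_{v\in\sigma}w(v)^2/K^{down}$.

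For $U^h_k$ I would instead prepare the three-branch ancilla superposition displayed in the harmonic-walk paragraph, $\sum_{i,j}\ket{i,j}\ket{0}+\sum_l\ket{l,l}\ket{1}+\sum_m\ket{m,m}\ket{1}$, where the first branch implements genuine lower-adjacent-but-not-upper-adjacent transitions (flip the $i$-th $0$ and $j$-th $1$, then use two membership checks to verify $\sigma\sim_\downarrow\sigma'$ and $\sigma\nsim_\uparrow\overline{\sigma'}$) and the remaining two branches separately encode the upper-degree and lower-degree contributions to the diagonal. The orientation sign is again obtained from the parity of bits between the two flipped positions via Claim~\ref{claim:parity}. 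The main obstacle I anticipate is twofold: first, ensuring that the amplitudes across the three harmonic branches combine with the correct weights so that the diagonal reproduces exactly $(\sum_{u\in up(\sigma)}w(u)^2+\sum_{v\in\sigma}w(v)^2)/K$ while the off-diagonal reproduces $w(v_\sigma)w(v_{\sigma'})/K$ under the $\sigma\sim_\downarrow\sigma'$, $\sigma\nsim_\uparrow\overline{\sigma'}$ condition --- this requires care because the same pair $(\sigma,\sigma')$ must not be double-counted across branches. Second, computing $\sum_{u\in up(\sigma)}w(u)^2$ (the up-degree contribution) requires, in effect, enumerating candidate cofaces, which must be folded into the amplitude preparation within the $\tilde{\mathcal{O}}(n^2)$ budget; I would handle this by writing the up-degree into an ancilla via $n-k-1$ sequential membership tests (each $\mathcal{O}(n)$ gates after the incremental clique update) and then using it to set the laziness amplitude. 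Once all amplitudes are shown to match to additive error $\epsilon$, the gate count follows from $\mathcal{O}(n)$ incremental membership checks each of $\mathcal{O}(n)$ cost, arithmetic on $t=\mathcal{O}(\log(1/\epsilon)+\log n)$-bit registers, and $\mathcal{O}(\log n)$ ancilla preparations, giving the claimed $\tilde{\mathcal{O}}(n^2\log(1/\epsilon))$ total with $m=\mathcal{O}(\log n)$.
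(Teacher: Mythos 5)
Your proposal is correct and follows essentially the same route as the paper's proof: Szegedy-type dispersion unitaries read off via $U^\dagger\cdot\mathrm{SWAP}\cdot U$, an $\mathcal{O}(n^2)$-gate clique-membership oracle routing non-simplices to the absorbing state, weighted index-superposition preparation, orientation tracking through the parity claim, and the up-down / down-up / three-branch harmonic decompositions with the same error and gate-count accounting. The only notable deviation is your handling of the harmonic walk's upper-degree term (accumulating $\sum_{u\in up(\sigma)}w(u)^2$ into an ancilla via sequential membership tests, rather than the paper's spreading of the laziness amplitude over $n-k-1$ branches $\ket{i(\sigma)}\ket{i(\sigma)}\ket{1}$ filtered by a single coherent membership query); both fit the stated budget.
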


{
An immediate consequence of Theorem~\ref{theorem:clique} is that the transition operators $P^{up}$, $P^{down}$, and $P$ admit block-encodings constructed from the unitaries $U_k^{\uparrow\downarrow}$, $U_k^{\downarrow\uparrow}$, and $U_k^h$, respectively. We summarize this in the following corollary.

\begin{corollary}[Block-encoding of transition operators]\label{corollary:clique_encoding}

Let
\[
\widetilde U_k^{up} := (U_k^{\uparrow\downarrow})^\dagger \, \mathrm{SWAP}\, U_k^{\uparrow\downarrow}, \quad
\widetilde U_k^{down} := (U_k^{\downarrow\uparrow})^\dagger \, \mathrm{SWAP}\, U_k^{\downarrow\uparrow}, \quad
\widetilde U_k^{h} := (U_k^{h})^\dagger \, \mathrm{SWAP}\, U_k^{h}.
\]
Then $\widetilde U_k^{up}$, $\widetilde U_k^{down}$, and $\widetilde U_k^{h}$ are $\Pi_k^\pm$-projected $(1,\alpha,\varepsilon)$-unitary encodings of $\Pi_k^\pm P^{up}\Pi_k^\pm$, $\Pi_k^\pm P^{down}\Pi_k^\pm$, and $\Pi_k^\pm P \Pi_k^\pm$, respectively, where $\alpha=n+2+2m$. In particular,
\[
\left\|
\Pi_k^\pm P^{up}\Pi_k^\pm
-
(\Pi_k^\pm \otimes \langle 0^\alpha|)\,
\widetilde U_k^{up}\,
(\Pi_k^\pm \otimes |0^\alpha\rangle)
\right\|
\le \varepsilon,
\]
and similarly for $P^{down}$ and $P$, where $\|\cdot\|$ denotes the operator norm.
Moreover, these encodings can be implemented with
\[
t \in \mathcal{O}\left(\log\left(1/\varepsilon\right)+(k+2)\log n\right)
\]
additional workspace qubits.
\end{corollary}

\begin{proof}
We focus on the case of $P^{up}$; the arguments for $P^{down}$ and $P$ are analogous. Let
\[
E := \Pi_k^\pm P^{up}\Pi_k^\pm 
- (\Pi_k^\pm \otimes \langle 0^\alpha|)\,\widetilde U_k^{up}\,(\Pi_k^\pm \otimes |0^\alpha\rangle).
\]
By Theorem~\ref{theorem:clique}, the entries of $E$ satisfy $|E_{\sigma,\sigma'}|\le \epsilon$ for all $\sigma,\sigma'\in X_k^\pm$ which implies
$\|E\|_\infty \le |X_k^\pm|\epsilon$ and $\|E\|_1 \le |X_k^\pm|\epsilon$. 
Using the standard inequality $\|E\| \le \sqrt{\|E\|_1\|E\|_\infty}$, we obtain
\[
\|E\| \le |X_k^\pm|\epsilon.
\]
Therefore, by choosing the entry-wise precision in Theorem~\ref{theorem:clique} as $\epsilon=\varepsilon/|X_k^\pm|$, we get
\[
\|E\| \le \varepsilon.
\]
Substituting this choice into the bound for $t$ in Theorem~\ref{theorem:clique} yields
\[
t \in \mathcal{O}\!\left(\log\!\left(\frac{|X_k^\pm|}{\varepsilon}\right)+\log n\right)
\in
\mathcal{O}\left(\log\left(1/\varepsilon\right)+(k+2)\log n\right),
\]
where we use $|X_k^\pm|\in \mathcal{O}(n^{k+1})$.
\end{proof}
}

In Section~\ref{sec:clique:notations}, we introduce notations that we use in the proof. 
In Section~\ref{sec:clique_construction_up}, we give the proof of the first part of the theorem (construction of $U^{\uparrow\downarrow}_k$). 
In Section~\ref{sec:clique_construction_down}, we give the proof of the second part of the theorem (construction of $U^{\downarrow\uparrow}_k$). 
In Section~\ref{sec:clique_construction_down}, we give the proof of the third part of the theorem (construction of $U^h_k$).

\subsection{Notations}
\label{sec:clique:notations}

Before giving the proofs of theorems, we introduce several notations for the proofs. 
We denote $f(\sigma)=0$ if $\sigma\in X^\pm$ and $f_k(\sigma)=1$ if $\sigma\notin X^\pm$.
We denote the state representing an oriented $k$-simplex $\sigma\in X^\pm_k$ by $\ket{\sigma}=\ket{x_{\sigma},s(\sigma),0}$ where $x_\sigma
$ is the Hamming weight $k+1$ $n$-bit string that represents the unoriented simplex that corresponds to $\sigma$ and  
$$
s(\sigma)= 
\begin{cases}
    0 & if\ \sigma \in X^+_k, \\
    1 & if\ \sigma \in X^-_k. 
\end{cases}
$$
For $\sigma\in X^\pm_k$, we define 
\begin{itemize}
\item $i(\sigma)\in [n]$: the position of $i$-th 0 of $x_{\sigma}$.  ($i\in [n-k-1]$.)
\item $\tilde{i}(\sigma) \in [n]$: the position of $i$-th 1 of $x_{\sigma}$. ($i\in [k+1]$.)
\end{itemize}
Then, we introduce
\begin{itemize}
    \item $x_\sigma^{\uparrow}(i)\coloneqq x_\sigma\oplus 0^{i(\sigma)-1}10^{n-i(\sigma)}$ for $i\in [n-k-1]$.
    \item $x_\sigma^{\downarrow}(i)\coloneqq x_\sigma\oplus 0^{\tilde{i}(\sigma)-1}10^{n-\tilde{i}(\sigma)}$ for $i=1,2,...,k+1$.
\end{itemize}
These are the descriptions of the unoriented simplices obtained by adding a vertex $i(\sigma)$ to $\sigma$ and removing a vertex $\tilde{i}(\sigma)$, respectively. 
Note that $x_\sigma^{\downarrow}(i) \in X_{k-1}$ always holds but $x_\sigma^{\uparrow}(i) \in X_{k+1}$ does not always holds. 
Then, we define
\begin{itemize}
\item Parity of $i$-th 0 of $\sigma$: $P(i,\sigma)\coloneqq \bigoplus_{j=1}^{i(\sigma)}x_{\sigma_j}$.
\item Parity of $i$: $\tilde{P}(i)=0$ if $i$ is odd and $\tilde{P}(i)=1$ if $i$ is even.
\end{itemize}
Using these notations, we can define
\begin{itemize}
\item $\sigma^\uparrow(i)$: $k+1$-simplex corresponding to $x_\sigma^{\uparrow}(i)$ with orientation $P(i,\sigma)$.
\item $\sigma^\downarrow(i)$: $k-1$-simplex corresponding to $x^\downarrow_\sigma(i)$ with orientation $\tilde{P}(i)$.  
\end{itemize}
These are cofaces and faces of $\sigma$ with the induced orientations (as long as $x_\sigma^{\uparrow}(i) \in X_{k+1}$). 
Finally, we introduce
\begin{itemize}
\item $x_{\sigma}^{\uparrow\downarrow}(i,j)\coloneqq 
x_\sigma^{\uparrow}(i)
\oplus 0^{\tilde{j}(\sigma^\uparrow(i))-1}10^{n-\tilde{j}(\sigma^\uparrow(i))}
$.
\item $x_\sigma^{\downarrow\uparrow}(i,j)\coloneqq x_\sigma^{\downarrow}(i)
\oplus 0^{j(\sigma^\downarrow (i))-1}10^{n-j(\sigma^\downarrow (i))}
$.
\end{itemize}
The first $n$-bit string represents the unoriented simplex obtained by removing a vertex $\tilde{j}(\sigma^\uparrow(i))$ from $\sigma^\uparrow(i)$. 
The second $n$-bit string represents the unoriented simplex obtained by adding a vertex $j(\sigma^\downarrow (i))$ to $\sigma^\downarrow(i)$.
Corresponding to the removed and added vertices, we define the following functions of the product of weights. 
\begin{itemize}
\item     $w_\sigma^{\uparrow\downarrow}(i,j)\coloneqq w(i(\sigma))w(\tilde{j}(\sigma^\uparrow(i)))$ for all $i\in [n-k-1],j\in[k+2] $ .
\item $w_\sigma^{\downarrow\uparrow}(i,j)\coloneqq w(\tilde{i}(\sigma))w({j}(\sigma^\uparrow(i)))$.

\end{itemize}

\subsection{Proof of the first part of Theorem~\ref{theorem:clique}}
\label{sec:clique_construction_up}

We first provide the construction of $U^{\uparrow\downarrow}_k$. 
The construction is given using a quantum membership unitary $M$ that acts
for all $b\in\{0,1\}$ and $\sigma \in \{0,1\}^{n+2}$ as follows: 
$$M\ket{\sigma}\ket{b}=\ket{\sigma}\ket{b\oplus f(\sigma)},$$
where $f(\sigma)=0$ if $\sigma\in X^\pm$ and $f(\sigma)=1$ if $\sigma\notin X^\pm$.

Let $U^{\uparrow\downarrow}_k$ be a unitary constructed in the following steps.  
\begin{enumerate}
    \item For $\ket{\sigma}$, compute the positions of $0$'s of $x_\sigma$ in an ancilla register:
    $$
    \bigotimes_{i\in [n-k-1] }\ket{i(\sigma)}.
    $$
    \item Further compute $\tilde{j}(\sigma^\uparrow(i))$ for $j\in [k+2]$ to create the following ancilla state:
    $$
    \bigotimes_{i\in [n-k-1] ,j\in[k+2]}\ket{i(\sigma)} \ket{\tilde{j}(\sigma^\uparrow(i))}.
    $$

    \item Compute the following  weights using the above ancilla state in another ancilla register:
    $$\bigotimes_{i\in [n-k-1],j\in[k+2] } \ket{w_\sigma^{\uparrow\downarrow}(i,j)/K^{up}},$$
    where $w_\sigma^{\uparrow\downarrow}(i,j)=w(i(\sigma))w(\tilde{j}(\sigma^\uparrow(i)))$.
    In this step, we have used two calls to the vertex-weighting functions.
    \item Then, following~\cite{chiang2009efficient}, approximately prepare the following state for all initial states $\ket{\sigma}\ket{0^{n+2}}= \ket{x_\sigma,s(\sigma),0}\ket{0^{n+2}}$ where $\sigma \in X^\pm_k$:
    $$
    \ket{\sigma}\ket{0^{n+2}}\sum_{i\in [n-k-1],j\in[k+2]}
    \sqrt{\frac{w_\sigma^{\uparrow\downarrow}(i,j)}{K^{up}} }\ket{i,j}. 
    $$
    Although we do not prepare this state exactly, we use this exact expression in the following for simplicity. We give a complexity and error analysis later. 
    \item  Create the following state using  $
    \bigotimes_{i\in [n-k-1] ,j\in[k+2]}\ket{i(\sigma)} \ket{\tilde{j}(\sigma^\uparrow(i))}
    $:
    $$
    \ket{\sigma}\ket{0^{n+2}}\sum_{i\in [n-k-1],j\in[k+2]}
    \sqrt{\frac{w_\sigma^{\uparrow\downarrow}(i,j)}{K^{up}} }\ket{i(\sigma),\tilde{j}(\sigma^\uparrow(i))}.
    $$
    \item Then, create the following state:
    $$
    \sum_{\substack{i\in [n-k-1],\\j\in[k+2]}}
    \sqrt{\frac{w_\sigma^{\uparrow\downarrow}(i,j)}{K^{up}} }
\ket{\sigma}\ket{x_{\sigma}^{\uparrow\downarrow}(i,j),s(\sigma)\oplus P(i,\sigma)\oplus \tilde{P}(j,\sigma^\uparrow(i)),f(\sigma^\uparrow(i))}
    \ket{i(\sigma),\tilde{j}(\sigma^\uparrow(i))}.
    $$
This computation is based on the following sub-steps:
\begin{enumerate}
    \item Recall that $x_{\sigma}^{\uparrow\downarrow}(i,j)= x_\sigma\oplus 0^{i(\sigma)-1}10^{n-i(\sigma)}\oplus 0^{\tilde{j}(\sigma^\uparrow(i))-1}10^{n-\tilde{j}(\sigma^\uparrow(i))}$. 
    Therefore, $\ket{x_{\sigma}^{\uparrow\downarrow}(i,j)}$ can be computed by copying $x_\sigma$ from $\ket{\sigma}$ and applying $X$-gates to $i(\sigma)$-th qubit and $\tilde{j}(\sigma^\uparrow(i))$-th qubit conditioned on that the state of the last register is $\ket{i(\sigma),\tilde{j}(\sigma^\uparrow(i))}$.
    \item The state $\ket{s(\sigma)\oplus P(i,\sigma)\oplus \tilde{P}(j,\sigma^\uparrow(i))}$ can be prepared by copying $s(\sigma)$ from $\ket{\sigma}$ and computing the ``parity'' of $\ket{i(\sigma),\tilde{j}(\sigma^\uparrow(i))}$. 
    \item The state $\ket{f(\sigma^\uparrow(i))}$ can be computed by applying the membership unitary $M$ to the state $\ket{x_\sigma\oplus 0^{i(\sigma)-1}10^{n-i(\sigma)}}$ (which appear in the intermediate of step (a)).
\end{enumerate}

    \item Conditioned that $f(\sigma^\uparrow(i))=1$, uncompute the above step (this is to move to the absorbing state if $\sigma^\uparrow(i)\notin X^\pm_{k+1}$). 
    \item 
    Uncompute the following ancilla registers:
 $$
    \bigotimes_{i\in [n-k-1] ,j\in[k+2]}\ket{i(\sigma)} \ket{\tilde{j}(\sigma^\uparrow(i))}.
    $$
    $$\bigotimes_{i\in [n-k-1],j\in[k+2] } \ket{w_\sigma^{\uparrow\downarrow}(i,j)/K^{up}}.$$
\end{enumerate}

Note that in the above construction, we only used one query to the membership oracle and $\mathcal{O}(n)$ query to the weight functions.

For all $\sigma\in X^\pm_k$, $U^{\uparrow\downarrow}_k$ approximately acts as follows: 

\begin{align*}
&U^{\uparrow\downarrow}_k\ket{\sigma}\ket{0^{n+2}}\ket{0^m,0^m}\\
&\simeq
   \sum_{i:f(\sigma^\uparrow(i))=0,j\in[k+2]}
    \sqrt{\frac{w_\sigma^{\uparrow\downarrow}(i,j)}{K^{up}} }
\ket{\sigma}\ket{x_{\sigma}^{\uparrow\downarrow}(i,j),s(\sigma)\oplus P(i,\sigma)\oplus \tilde{P}(j,\sigma^\uparrow(i)),0}
    \ket{i(\sigma),\tilde{j}(\sigma^\uparrow(i))}\\
&\ \ \ \ +
 \sum_{i:f(\sigma^\uparrow(i))=1,j\in[k+2]}
    \sqrt{\frac{w_\sigma^{\uparrow\downarrow}(i,j)}{K^{up}} }
\ket{\sigma}\ket{\Theta}
   \ket{i(\sigma),\tilde{j}(\sigma^\uparrow(i))}.
\end{align*}

The state $\ket{x_{\sigma}^{\uparrow\downarrow}(i,j),s(\sigma)\oplus P(i,\sigma)\oplus \tilde{P}(j,\sigma^\uparrow(i)),0}$ can be seen as the representation of  simplex with vertices $ x_{\sigma}^{\uparrow\downarrow}(i,j)$ and orientation $s(\sigma)\oplus P(i,\sigma)\oplus \tilde{P}(j,\sigma^\uparrow(i))$. Let us denote such a simplex as $\sigma^{\uparrow\downarrow}(i,j)$. 

For $i\in[n-k-1]$ such that $f(\sigma^\uparrow(i))=0$ (i.e., $\sigma^\uparrow(i)\in X^\pm_{k+1}$), we can show that  $\sigma\sim_{\uparrow}\sigma^{\uparrow\downarrow}(i,j)$  
for any $j\in [k+2]$. This is because $x^\uparrow(i(\sigma))$  is the common upper simplex of them and the orientation induced by them $s(\sigma)\oplus P(i,\sigma)$.

Therefore, 

\begin{align*}
&U^{\uparrow\downarrow}_k\ket{\sigma}\ket{0^{n+2}}\ket{0^m,0^m}\\&\simeq
    \sum_{i:f(\sigma^\uparrow(i))=0,j\in[k+2]}
    \sqrt{\frac{w_\sigma^{\uparrow\downarrow}(i,j)}{K^{up}} }
\ket{\sigma}\ket{\sigma^{\uparrow\downarrow}(i,j)}
    \ket{i(\sigma),\tilde{j}(\sigma^\uparrow(i))}
    \\&\ \ \ \ +
    \sum_{i:f(\sigma^\uparrow(i))=1,j\in[k+2]}
    \sqrt{\frac{w_\sigma^{\uparrow\downarrow}(i,j)}{K^{up}} }
\ket{\sigma}\ket{\Theta}
    \ket{i(\sigma),\tilde{j}(\sigma^\uparrow(i))}.
    \end{align*}

Finally, it can be seen that for all $\sigma,\sigma'\in X^\pm_k$,
\begin{align*}
&\bra{\sigma'}\bra{0^{n+2}}\bra{0^m,0^m}
(U^{\uparrow\downarrow}_k)^\dagger\cdot \mathrm{SWAP}\cdot U^{\uparrow\downarrow}_k
\ket{\sigma}\ket{0^{n+2}}\ket{0^m,0^m}
\\\simeq &
\left(\sum_{\substack{i':f(\sigma'^\uparrow(i'))=0,\\j'\in[k+2]}}
    \sqrt{\frac{w_{\sigma'}^{\uparrow\downarrow}(i',j')}{K^{up}} }
\bra{\sigma'}\bra{\sigma'^{\uparrow\downarrow}(i',j')}
    \bra{i'(\sigma'),\tilde{j'}(\sigma'^\uparrow(i'))} 
\right)
    \\&
    \hspace{3cm} \times\left(\sum_{\substack{i:f(\sigma^\uparrow(i))=0,\\j\in[k+2]}}
    \sqrt{\frac{w_\sigma^{\uparrow\downarrow}(i,j)}{K^{up}} }
\ket{\sigma^{\uparrow\downarrow}(i,j)}\ket{\sigma}
    \ket{\tilde{j}(\sigma^\uparrow(i)),i(\sigma)} 
\right)
    \\
    &=
    \begin{cases}
\frac{w(v_\sigma)w(v_{\sigma'})}{K^{up}} & if\ \sigma\sim_\uparrow \sigma',\\
\frac{\sum_{u\in up(\sigma)}w(u)^2}{K^{up}} & if\ \sigma=\sigma',\\
0 & otherwise,
    \end{cases}
\end{align*}
as desired.

\paragraph{Complexity and error analysis.}
In the construction of $U^{\uparrow\downarrow}_k$, we use $\mathcal{O}(n)$ calls to the vertex-weighting function in step 3, and one call to the membership function in step 6. 
We assumed that the cost of calling the vertex-weighting function is $\mathcal{O}(n)$. 
The cost of implementing the membership function for clique complex is $\mathcal{O}(n^2)$ because it suffices to check the connection of all the pairs of vertices. The number of pair of vertices for $k$-simplices is $\binom{k+1}{2}$ and $k\leq n-1$.
The error in the construction comes from the creation of the superposition in step 4. As it is shown in~\cite{chiang2009efficient}, the gate complexity for making the error of the amplitude in each of the superposition upper bounded by $\mathcal{O}(\sqrt{\epsilon})$ is $\tilde{\mathcal{O}}(n\log(1/\epsilon))$. 
Other computations in the construction can be done with $\mathcal{O}(n^2)$ size circuit with classical reversible gates. 
This concludes the proof of the first part of Theorem~\ref{theorem:clique}.

\subsection{Proof of the second part of Theorem~\ref{theorem:clique}}
\label{sec:clique_construction_down}

{
We first give a construction of $U^{\downarrow\uparrow}_k$. 
Let $U^{\downarrow\uparrow}_k$ be a unitary that is constructed in the following steps: }

\begin{enumerate}
    \item For $\ket{\sigma}$, compute the positions of $1$'s of $x_\sigma$ in an registers:
    $$
    \bigotimes_{i\in [k+1] }\ket{\tilde{i}(\sigma)}.
    $$
    \item Compute $j(\sigma^\downarrow(i))$ for $j\in [n-k]$ to create the following ancilla state:
    $$
    \bigotimes_{i\in [k+1] , j\in [n-k]}\ket{\tilde{i}(\sigma)}\ket{j(\sigma^\downarrow(i))}.
    $$
    \item Compute the following  weights in another ancilla register:
    $$\bigotimes_{i\in [k+1],j\in[n-k] } \ket{w_\sigma^{\downarrow\uparrow}(i,j)/K^{down                              }},$$
    where $w_\sigma^{\downarrow\uparrow}(i,j)\coloneqq w(\tilde{i}(\sigma))w({j}(\sigma^\uparrow(i)))$.
    \item 
    Then, following~\cite{chiang2009efficient}, approximately prepare the following state for all initial states $\ket{\sigma}\ket{0^{n+2}}$ where $\sigma \in X^\pm_k$:
    $$
    \ket{x_\sigma,s(\sigma),0}\ket{0^{n+2}}\sum_{i\in [k+1],j\in[n-k]}
    \sqrt{\frac{w_\sigma^{\downarrow\uparrow}(i,j)}{K^{down}} }\ket{i,j}.
    $$
    \item     Create the following state using the information in ancilla registers:
    $$
    \ket{x_\sigma,s(\sigma),0}\ket{0^{n+2}}\sum_{i\in [k+1],j\in[n-k]}
    \sqrt{\frac{w_\sigma^{\uparrow\downarrow}(i,j)}{K^{up}} }\ket{\tilde{i}(\sigma),j(\sigma^\downarrow(i))}.
    $$
    \item Create the following state:
    $$
    \sum_{\substack{i\in [k+1],\\j\in[n-k]}}
    \sqrt{\frac{w_\sigma^{\downarrow\uparrow}(i,j)}{K^{down}} }
\ket{x_\sigma,s(\sigma),0}\ket{x_\sigma^{\downarrow\uparrow}(i,j),s(\sigma)\oplus \tilde{P}(i)\oplus {P}(j,\sigma^\downarrow(i)),f(x_\sigma^{\downarrow\uparrow}(i,j))} \ket{\tilde{i}(\sigma),j(\sigma^\downarrow(i))}.$$
The computation is similar to the step 6 of the previous subsection.
    \item Conditioned that $f(x_\sigma^{\downarrow\uparrow}(i,j))=1$ (i.e., $x_\sigma^{\downarrow\uparrow}(i,j)\in X_k$), uncompute the above step. 
    \item 
    Uncompute the following ancilla registers:
    $$
    \bigotimes_{i\in [n-k-1] ,j\in[k+2]}\ket{i(\sigma)} \ket{\tilde{j}(\sigma^\uparrow(i))},
    $$
   $$\bigotimes_{i\in [k+1],j\in[n-k] } \ket{w_\sigma^{\downarrow\uparrow}(i,j)/K^{down                              }}.$$
\end{enumerate}

The computation is similar to the first part of the proof of Theorem~\ref{theorem:clique}.
    In this case, it holds that 
    \begin{align*}
&U^{\downarrow\uparrow}_k\ket{\sigma}\ket{0^{n+2}}\ket{0^m,0^m}\\&\simeq 
    \sum_{\substack{i\in [k+1],\\j:f(x_\sigma^{\downarrow\uparrow}(i,j))=0}}
    \sqrt{\frac{w_\sigma^{\downarrow\uparrow}(i,j)}{K^{up}} }
\ket{\sigma}\ket{\sigma^{\downarrow\uparrow}(i,j)}
    \ket{\tilde{i}(\sigma),{j}(\sigma^\downarrow(i))}
    \\&+
    \sum_{\substack{i\in [k+1],\\j:f(x_\sigma^{\downarrow\uparrow}(i,j))=1}}
    \sqrt{\frac{w_\sigma^{\downarrow\uparrow}(i,j)}{K^{up}} }
\ket{\sigma}\ket{\Theta}
    \ket{\tilde{i}(\sigma),{j}(\sigma^\downarrow(i))},
    \end{align*}
    where in the first term, 
    $\ket{\sigma^{\downarrow\uparrow}(i,j)}=
    \ket{x_\sigma^{\downarrow\uparrow}(i,j),s(\sigma)\oplus \tilde{P}(i)\oplus {P}(j,\sigma^\downarrow(i)),0}$
    and $\sigma\sim_{\downarrow} \sigma^{\downarrow\uparrow}(i,j)$. 
    This is because due to the condition that $j:f(x_\sigma^{\downarrow\uparrow}(i,j))=0$, $\sigma$ and $ \sigma^{\downarrow\uparrow}(i,j)$ shares a same lower simplex and the orientation of the lower simplex induced by both $\sigma$ and $\sigma^{\downarrow\uparrow}(i,j)$ is $s(\sigma)\oplus \tilde{P}(i)$.

    The relation between the error and the complexity is similar to the case of the first part of Theorem~\ref{theorem:clique}. 
    Then, by a similar calculation as in the first part of the proof of Theorem~\ref{theorem:clique}, we can conclude the second part of the proof of the theorem.

\subsection{Proof of the third part of Theorem~\ref{theorem:clique}}
\label{sec:clique_construction_harmonic}

First, we show the following claim:

\begin{claim}
\label{claim:parity}
For $\sigma,\sigma' \in X^\pm_k$ s.t. $\sigma\sim_{\downarrow}\sigma'$, let $v_\sigma$ and $v_{\sigma'}$ be the vertices to be removed from $\sigma$ and $\sigma'$ to obtain the common lower simplex, respectively. It holds that $s(\sigma')=s(\sigma)+ \sum_{i\in (v_\sigma, v_{\sigma'})} (x_\sigma)_i \ (\mathrm{mod} 2)$ . Here, $(v_\sigma, v_{\sigma'})$ are set of integers between $v_{\sigma}+1$ and $v_{\sigma'}-1$ if $v_{\sigma}<v_{\sigma'}$ and integers between $v_{\sigma'}+1$ and $v_{\sigma}-1$ if $v_{\sigma'}<v_{\sigma}$ .

\end{claim}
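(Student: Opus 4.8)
The plan is to trace how each of $\sigma$ and $\sigma'$ induces an orientation on the shared $(k-1)$-face, and then to read off the relation between $s(\sigma)$ and $s(\sigma')$ from the defining requirement of down-adjacency that these two induced orientations coincide. The claimed parity sum will then drop out of a straightforward count of the shared vertices between $v_\sigma$ and $v_{\sigma'}$.

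First I would fix the common lower simplex $\tau$ with vertex set $\sigma\setminus\{v_\sigma\}=\sigma'\setminus\{v_{\sigma'}\}$, and let $\tau_+$ denote its positively oriented (sorted) representative. Writing $\sigma$ as $(-1)^{s(\sigma)}\sigma_+$ and applying the boundary operator $\partial_k$ of Section~\ref{sec:pre:homology}, the coefficient of $\tau_+$ in $\partial_k\sigma$ is $(-1)^{s(\sigma)+p}$, where $p$ is the position of $v_\sigma$ in the sorted order of $\sigma$'s vertices; equivalently, $p$ is the number of $1$'s of $x_\sigma$ lying strictly to the left of position $v_\sigma$ (removing the element at sorted index $p$ leaves an already-sorted string, so the only extra sign is $(-1)^p$). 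The identical computation for $\sigma'$ gives coefficient $(-1)^{s(\sigma')+q}$, with $q$ the number of $1$'s of $x_{\sigma'}$ strictly left of $v_{\sigma'}$. By the definition of $\sim_{\downarrow}$ these induced orientations are equal, so $s(\sigma)+p\equiv s(\sigma')+q\pmod 2$, that is, $s(\sigma')\equiv s(\sigma)+p+q\pmod 2$.

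The remaining step is a parity count of $p+q$. I would assume without loss of generality that $v_\sigma<v_{\sigma'}$, the opposite case being identical and the claimed sum being symmetric in the two endpoints. Since neither $v_\sigma$ nor $v_{\sigma'}$ belongs to $\tau$, both $p$ and $q$ count only $\tau$-vertices: $p=|\{u\in\tau:u<v_\sigma\}|$ and $q=|\{u\in\tau:u<v_{\sigma'}\}|$. Splitting the second count at $v_\sigma$ gives $q=p+|\{u\in\tau: v_\sigma<u<v_{\sigma'}\}|$, hence $p+q\equiv |\{u\in\tau: v_\sigma<u<v_{\sigma'}\}|\pmod 2$. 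Finally, the $\tau$-vertices strictly between $v_\sigma$ and $v_{\sigma'}$ are exactly the positions $i\in(v_\sigma,v_{\sigma'})$ with $(x_\sigma)_i=1$: on this open interval $x_\sigma$ and $x_{\sigma'}$ agree, and the only non-$\tau$ vertices $v_\sigma,v_{\sigma'}$ sit at its endpoints. Thus that count equals $\sum_{i\in(v_\sigma,v_{\sigma'})}(x_\sigma)_i$, and combining with the previous paragraph yields the claim.

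I expect the only delicate point to be the bookkeeping in the second paragraph: correctly identifying the sign $(-1)^p$ that $\partial_k$ attaches to $\tau_+$, confirming that ``position in sorted order'' coincides with ``number of $1$'s to the left,'' and keeping the global factor $(-1)^{s(\sigma)}$ consistent for negatively oriented inputs. Once that induced-orientation sign is pinned down, the interval parity count is routine.
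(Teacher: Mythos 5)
Your proof is correct and follows essentially the same route as the paper's: both compute the sign of the induced orientation on the common face from each side as $s$ plus the count of $1$'s of the bit string strictly left of the removed vertex, equate the two by the definition of $\sim_{\downarrow}$, and reduce $p+q$ to the parity of $\sum_{i\in(v_\sigma,v_{\sigma'})}(x_\sigma)_i$ by splitting the count at $v_\sigma$. Your write-up via the coefficient of $\tau_+$ in $\partial_k\sigma$ is a slightly cleaner packaging of the same bookkeeping the paper does with the sums $\sum_{i\in[1,v_\sigma]}(x_\sigma)_i+1$.
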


\begin{proof}
Suppose that $v_{\sigma}<v_{\sigma'}$. Then, the lower simplex obtained by removing $v_\sigma$ from $\sigma$ has an orientation $s(\sigma)+\sum_{i\in [1,v_\sigma]} (x_\sigma)_i+1$ . Therefore, the orientation of $\sigma'$ is 
$$
\left(s(\sigma)+\sum_{i\in [1,v_\sigma]} (x_\sigma)_i+1\right)+\sum_{j\in [1,v_{\sigma'})}(x_\sigma)_i+1 
=
s(\sigma)+ \sum_{i\in (v_\sigma, v_{\sigma'})} (x_\sigma)_i \ (\mathrm{mod} 2).
$$
Here, $[1,v_{\sigma'})=\{1,2,...,v_{\sigma'}-1\}$ and $(x_\sigma)_i$ is the $i$-th bit of $x_\sigma$.
Next, suppose that $v_{\sigma'}<v_{\sigma}$. Then, the orientation of $\sigma'$ is 
$$\left(s(\sigma)+\sum_{i\in [1,v_\sigma]} (x_\sigma)_i+1\right)+\sum_{j\in [1,v_{\sigma'}]}(x_\sigma)_i=s(\sigma)+ \sum_{i\in (v_\sigma', v_{\sigma})} (x_\sigma)_i \ (\mathrm{mod} 2).$$
This concludes the proof of the claim.

\end{proof}

Then, we show the construction of $U^h$. Let $U^h$
be the unitary constructed in the following steps:
\begin{enumerate}
    \item Compute the positions of 0's and 1's of $\sigma$:
$$\bigotimes_{i\in [n-k-1] }\ket{i(\sigma)}\bigotimes_{j\in [k+1] }\ket{\tilde{j}(\sigma)}.$$
    \item     Compute the following  weights in another ancilla register:
    $$\bigotimes_{i\in [n-k-1],j\in[k+1] } \Ket{\frac{w(i(\sigma))w(\tilde{j}(\sigma))}{K}}
\Ket{\frac{w(i(\sigma))^2}{K}} \Ket{\frac{w(\tilde{j}(\sigma))^2}{K}}.
$$

\item Approximately create the following superposition following~\cite{chiang2009efficient}:
\begin{align*}
  & \sum_{\substack{i\in [n-k-1],\\j\in [k+1]}} \sqrt{\frac{w(i(\sigma))w(\tilde{j}(\sigma))}{K}}\ket{i(\sigma)}\ket{\tilde{j}(\sigma)}\ket{0} 
+
\sum_{i\in [n-k-1]} \sqrt{\frac{w(i(\sigma))^2}{{K}}}\ket{i(\sigma)}\ket{i(\sigma)}\ket{1} \\&
+
\sum_{j\in [k+1]} \sqrt{\frac{w(\tilde{j}(\sigma))^2}{{K}}}\ket{\tilde{j}(\sigma)}\ket{\tilde{j}(\sigma)}\ket{2}
=: \ket{\phi(\sigma)} .
\end{align*}
The first term corresponds to the superposition for adjacency, the second term corresponds to the up degree, and the third term corresponds to a lower degree of $\sigma$. 

\item Prepare $\ket{\sigma}\ket{0^{n+2}}\ket{\phi(\sigma)}$.

\item Define 
\begin{align*}
&\ket{\psi_{ij}(\sigma)}\coloneqq\\&
\Ket{x_\sigma^{\uparrow}(i) \oplus 0^{\tilde{j}(\sigma)-1}10^{n-\tilde{j}(\sigma)},s(\sigma)+\bigoplus_{m\in (i(\sigma) ,\tilde{j}(\sigma))}(x_\sigma)_m,f(x_\sigma^{\uparrow}(i)\oplus 0^{\tilde{j}(\sigma)-1}10^{n-\tilde{j}(\sigma)})\oplus f(x_\sigma^{\uparrow}(i))\oplus 1}.
\end{align*}
Here, $x_\sigma^{\uparrow}(i) \oplus 0^{\tilde{j}(\sigma)-1}10^{n-\tilde{j}(\sigma)}$ is the unoriented simplex obtained by exchanging between the $j$-th vertex of $\sigma$ and the $i$-th ``empty vertex'' of $\sigma$. 
\item    Create the following state:
\begin{align}
\label{eq:theorem_harm}
&\sum_{\substack{i\in [n-k-1],\\j\in [k+1]}} 
\sqrt{\frac{w(i(\sigma))w(\tilde{j}(\sigma))}{K}}
\ket{\sigma}
\ket{\psi_{ij}(\sigma)}
\ket{i(\sigma)}\ket{\tilde{j}(\sigma)}\ket{0}\\ \nonumber
&
+
\sum_{i\in [n-k-1]} \sqrt{\frac{w(i(\sigma))^2}{{K}}}
\ket{\sigma}
\ket{x_\sigma,s(\sigma),f(\sigma^\uparrow(i(\sigma)))}
\ket{i(\sigma)}\ket{i(\sigma)}\ket{1}\\ \nonumber
&
+
\sum_{j\in [k+1]} \sqrt{\frac{w(\tilde{j}(\sigma))^2}{{K}}}
\ket{\sigma}
\ket{\sigma}
\ket{\tilde{j}(\sigma)}\ket{\tilde{j}(\sigma)}\ket{1}.
\end{align}
\item Uncompute the above step conditioned on that the $n+2$-th qubit of the second register is $\ket{1}$.
\end{enumerate}

Note that in the above construction, we used $\mathcal{O}(1)$ query to the membership oracle and $\mathcal{O}(n)$ query to the weight functions.

In the first term of eq.~\eqref{eq:theorem_harm}, 
observe that $f(x_\sigma^{\uparrow}(i))=0$ implies $f(x_\sigma^{\uparrow}(i)\oplus 0^{\tilde{j}(\sigma)-1}10^{n-\tilde{j}(\sigma)})=0$.  
This is because if $f(x_\sigma^{\uparrow}(i))=0$ i.e., $x_\sigma^{\uparrow}(i)\in X_{k+1}$, 
$x_\sigma^{\uparrow}(i)\oplus 0^{\tilde{j}(\sigma)-1}10^{n-\tilde{j}(\sigma)}$, which is a face of $x_\sigma^{\uparrow}(i)$, is also in $X_k$.
Then,
$$f(x_\sigma^{\uparrow}(i)\oplus 0^{\tilde{j}(\sigma)-1}10^{n-\tilde{j}(\sigma)})\oplus f(x_\sigma^{\uparrow}(i))\oplus 1 =0 $$
means that $f(x_\sigma^{\uparrow}(i))=1$ and $f(x_\sigma^{\uparrow}(i)\oplus 0^{\tilde{j}(\sigma)-1}10^{n-\tilde{j}(\sigma)})=0$. 
Therefore, 
it can be observed that 
$f(x_\sigma^{\uparrow}(i)\oplus 0^{\tilde{j}(\sigma)-1}10^{n-\tilde{j}(\sigma)})\oplus f(x_\sigma^{\uparrow}(i))\oplus 1 =0$ 
means that 
$\ket{\psi_{ij}(\sigma)}$ 
are correct representations of ``lower adjacent simplices of $\sigma$ that are not upper adjacent'' 
because due to Claim~\ref{claim:parity}, 
$\bigoplus_{m\in (i(\sigma) ,\tilde{j}(\sigma))}(x_\sigma)_m$ 
represents the relative orientation of lower adjacent simplices. 
In the second term of eq.~\eqref{eq:theorem_harm}, 
the elements of superposition that satisfy $f(\sigma^\uparrow(i(\sigma)))=0$ 
corresponds to laziness due to the upper degree. 
These arguments also show that the uncomputation in step 7 leads to the absorbing state with the desired condition. 
The relationship between the error and the complexity is similar in the case of the first part of the proof of Theorem~\ref{theorem:clique}. 
Therefore, with a similar calculation, we can conclude the proof of the theorem.

\section{Applications to topological data analysis problems}
\label{sec:applications}

In this section, we apply the derived quantum walk algorithms to resolve the following three problems: estimation of the normalized Betti numbers, estimation of the normalized persistent Betti numbers, and the verification of the promise clique homology problem. 
{
In these applications, we consider a special class of simplices called clique complexes. Clique complexes are closely related to the Vietoris-Rips complexes which are commonly used in the topological data analysis for the point cloud data. 
The Vietoris-Rips complexes are constructed by first connecting the vertices by taking the intersections of balls obtained by fattening the vertices with a certain radius and then computing the cliques of the graph. 
Therefore, clique complexes are an important class of simplicial complexes that can be succinctly described by graphs and are of practical interest. 
}

\subsection{Estimating the normalized Betti numbers}\label{sec:app:Betti_numbers}

In this subsection, we show how to estimate the Betti numbers using harmonic quantum walk on simplicial complexes. 

\begin{theorem}
\label{thm:Betti_numbers}
Given $G=(V,E)$ where $|V|=n$, let $X$ be a clique complex of the (unweighted) graph $G$ and {$U^h$ be a  $\Pi^\pm_k$-projected $(1,\mathcal{O}(n),0)$-unitary encoding of $\Pi^\pm_k P \Pi^\pm_k$} where $P$ is the Markov transition matrix defined in Definition~\ref{def:harmonic_random_walk}.  
    Suppose $X$ satisfies the following promises: 
    \begin{itemize}
        \item We can (classically or quantumly) approximately sample from the uniform distribution over the $k$-simplices of $X$ within a total variation distance (TVD) $\delta$ for any $\delta\in \Omega(1/poly(n))$ in $poly(n)$ time. 
        \item The smallest non-zero eigenvalue $\lambda$ of the $k$-th combinatorial Laplacian $\Delta_k$ of $X$ satisfies $\lambda \in \Omega(1/poly(n))$.
    \end{itemize}
    Then, we can output an estimate for $\beta_k/n_k$ within additive error $\epsilon$ in $poly(n)$ time for any $\epsilon \in \Omega(1/poly(n))$ using 
    $poly(n)$ number of harmonic quantum walk unitary $U^h$,  $(U^h)^\dagger$, and other single-qubit and two-qubit gates, respectively. 
\end{theorem}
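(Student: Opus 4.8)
The plan is to implement the three-step recipe sketched in the introduction: prepare the maximally mixed state over positively oriented $k$-simplices, conditionally project it onto $\ker(\Delta_k)$ using the harmonic walk, and read off the success probability by sampling. First I would prepare $\rho_k=\frac{1}{n_k}\sum_{\sigma\in X^+_k}\ket{\sigma}\bra{\sigma}$. By the sampling promise a (classical or quantum) sampler returns an unoriented $k$-simplex drawn $\delta$-close to uniform; writing its Hamming-weight-$(k+1)$ string into the first $n$ qubits and fixing the orientation qubit to $0$ yields the basis state $\ket{\sigma}$ with $\sigma\in X^+_k$, so the prepared mixture $\tilde\rho_k$ obeys $\|\tilde\rho_k-\rho_k\|_1=O(\delta)$. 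The key observation is that, since $\{\ket{\sigma}\}_{\sigma\in X^+_k}$ is an orthonormal basis of $C_k$, the state $\rho_k$ equals the maximally mixed state $\frac{1}{n_k}I_{C_k}=\frac{1}{n_k}\sum_i\ket{\psi_i}\bra{\psi_i}$ written in the eigenbasis $\{\ket{\psi_i}\}$ of $\Delta_k$; hence $\mathrm{Tr}(\mathrm{Proj}(\mathcal{H}_k)\,\rho_k)=\frac{1}{n_k}\dim\ker(\Delta_k)=\beta_k/n_k$.

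Next I would invoke Theorem~\ref{thm:main} to build, from the harmonic walk unitary $U^h$, a $\Pi_k$-projected $(1,\alpha,\epsilon_1)$-unitary encoding $U_{\mathcal{H}_k}$ of $\mathrm{Proj}(\mathcal{H}_k)$ using $\mathcal{O}(K\log(1/\epsilon_1)/\lambda_k)$ calls to $U^h$; since $X$ is a clique complex and $K,1/\lambda_k\in\mathrm{poly}(n)$ by the spectral-gap promise, this is polynomial. Because $\tilde\rho_k$ is supported on the image of $\Pi_k$, the projected encoding acts there as an honest block-encoding, so feeding it into the block-measurement of Lemma~\ref{lemma:block_measurement} produces a channel $\tilde\Lambda$ whose output qubit, when measured on $\tilde\rho_k$, returns $1$ with probability within $O(\epsilon_1)$ of $\mathrm{Tr}(\mathrm{Proj}(\mathcal{H}_k)\,\tilde\rho_k)$. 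Combining this with the state-preparation bound gives a biased coin with head probability $\beta_k/n_k+O(\epsilon_1+\delta)$.

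Finally I would draw $N=O(1/\epsilon^2)$ independent samples and output the empirical frequency of the outcome $1$. A Hoeffding bound for this Bernoulli variable guarantees the empirical mean lies within $\epsilon/2$ of its expectation with constant probability, so choosing $\epsilon_1,\delta=O(\epsilon)$ yields an estimate of $\beta_k/n_k$ to additive error $\epsilon$. The total cost is $N$ repetitions, each using $\mathcal{O}(K\log(1/\epsilon)/\lambda_k)$ harmonic walk unitaries and a poly-size state preparation, which is $\mathrm{poly}(n)$ under the two promises.

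The step I expect to require the most care is the error bookkeeping: one must verify that the three independent error sources---the $O(\delta)$ trace-distance deviation of the prepared state, the $O(\epsilon_1)$ diamond-norm deviation of the block-measurement channel inherited from the QSVT approximation of the rectangle function, and the $O(1/\sqrt{N})$ statistical fluctuation---enter the estimated probability additively and can each be pushed below a constant fraction of $\epsilon$ while keeping every resource count polynomial. A secondary subtlety is confirming that the head probability is genuinely the diagonal expectation $\mathrm{Tr}(\mathrm{Proj}(\mathcal{H}_k)\,\rho_k)$: this holds because the mixed input annihilates coherences, so only $\bra{\psi_i}\mathrm{Proj}(\mathcal{H}_k)\ket{\psi_i}$ averaged over the eigenbasis contributes.
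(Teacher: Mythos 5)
Your proposal is correct and follows essentially the same route as the paper's proof: prepare the (approximately) uniform mixture over positively oriented $k$-simplices, apply the block-measurement of Lemma~\ref{lemma:block_measurement} to the projected unitary encoding of $\mathrm{Proj}(\mathcal{H}_k)$ obtained from Theorem~\ref{thm:main}, and estimate the acceptance probability by Chernoff--Hoeffding sampling, with the same additive error accounting $\delta + O(\epsilon_1) + O(1/\sqrt{N})$. The only cosmetic difference is that the paper converts the $\Pi_k$-projected encoding into a genuine block-encoding via a $\mathrm{C}_{\Pi_k}\mathrm{NOT}$ on an extra ancilla before block-measuring, whereas you argue directly from the support of the input state; both are fine.
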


\begin{proof}
First, based on the assumption, sample from the uniform distribution over $k$-simplices within TVD error $\delta$.
Then, we use the sample as the input to the quantum circuit and prepare $\tilde{\rho}_k$ such that 
$$
 \| \tilde{\rho}_k-\rho_k \| \leq \delta,
$$
where
$$
\rho_k\coloneqq \frac{1}{n_k}\sum_{\sigma \in X^+_k}\ket{\sigma}\bra{\sigma}. 
$$
Note that 
$$
\rho _k = \frac{1}{n_k}\sum_{\sigma \in X^+_k}\ket{\sigma}\bra{\sigma} 
= \frac{1}{n_k}
\sum_{i=1}^{n_k} \ket{\psi_i}\bra{\psi_i},
$$
where $\{\ket{\psi_i}\}$ are the eigenvectors of the combinatorial Laplacian.

We can easily modify the unitary encoding $U_{\mathcal{H}_k}$ of Theorem~\ref{thm:main} (with error $\epsilon'$) into block-encoding using $\mathrm{C}_{\Pi_k} \mathrm{NOT}$ with an additional ancilla qubit. 
Then we apply the block-measurement of Lemma~\ref{lemma:block_measurement}. 
Let us denote the probability of outputting $1$ by measuring the first qubit as $p_1$.
We have
\begin{align*}
|{p}_1-\beta_k/n_k|
&\leq    \| \tilde{\rho}_k-\rho_k \| + \|\tilde{\Lambda}_\Pi - \Lambda_\Pi\|_\diamond\\
&= \delta + O(\epsilon').
\end{align*}
Then, we estimate the probability $p_1$ by drawing $N=\varepsilon^{-2}$ samples from this quantum circuit. 
Let us denote the number of measurement outcomes $1$ of the $N$ samples as $N_1$. 
We output $N_1/N$ as the estimate of the normalized Betti numbers. 
By the Chernoff-Hoeffding inequality, $N_1/N$ is a correct estimate of $p_1$ within additive error $\varepsilon$ with high probability. 
Therefore, we can estimate $\beta_k/n_k$ within arbitrary additive error $\epsilon \in \Omega(1/poly(n))$ by taking small enough $\delta, \epsilon',\varepsilon$ while keeping the algorithm efficient. 

\end{proof}

\subsection{Estimating the normalized persistent Betti numbers}
\label{sec:app:persistent_Betti_numbers}

In this subsection, we show how to estimate the persistent Betti numbers using quantum walks. 
We show the following theorem: 

\begin{theorem}
\label{thm:persisntent_Betti_numbers}
Given two unweighted graphs $G_i=(V,E_i)$ and $G_j=(V,E_j)$, where $|V|=n$ and $E_i \subseteq E_j$, let $X^i$, $X^j$ be clique complexes of the graph $G_i$ and $G_j$.
Let $U^{down}_i$ be a $\Pi^{\pm,i}_k$-projected 
$(1,\mathcal{O}(n),0)$-unitary encoding of $\Pi_k^{\pm,i} P^{down}_i \Pi_k^{\pm,i}$, where $\Pi_k^{\pm,i}$ is the projector onto the space spanned by $\{\ket{\sigma}\}_{\sigma\in X^{\pm,i}_k}$. 
Similarly, let $U^{up}_j$ be a $\Pi^{\pm,j}_k$-projected $(1,\mathcal{O}(n),0)$-unitary encoding of $\Pi^{\pm,j}_k P^{up}_j \Pi^{\pm,j}_k$ where where $\Pi_k^{\pm,j}$ is the projector onto the space spanned by $\{\ket{\sigma}\}_{\sigma\in X^{\pm,j}_k}$. 
Suppose $X^i$ and $X^j$ satisfy the following promises: 
    \begin{itemize}
        \item We can (classically or quantumly) approximately sample from the uniform distribution over the $k$-simplices of $X^i$ within a total variation distance (TVD) $\delta$ for any $\delta\in \Omega(1/poly(n))$ in $poly(n)$ time. 
        \item The smallest non-zero eigenvalue $\lambda^{i,down}$ of the $k$-th down Laplacian $\Delta_k^{i,down}$ of $X^i$ satisfies $\lambda^{i,down} \in \Omega(1/poly(n))$.
        \item The smallest non-zero eigenvalue $\lambda^{j,up}$ of the $k$-th down Laplacian $\Delta_k^{j,up}$ of $X^j$ satisfies $\lambda^{j,up} \in \Omega(1/poly(n))$.
        \item {
        The singular value $1$ of
        \[\mathrm{Proj}\bigl(\ker(\Delta_k^{i,down})\bigr)\cdot \mathrm{Proj}\bigl(\ker(\Delta_k^{j,up})^{\perp}\bigr)
        \]
        is separated from the rest of the singular spectrum by an inverse-polynomial gap.}
    \end{itemize}
    Then, we can output an estimate for $\beta^{i,j}_k/n_k$ within additive error $\epsilon$ in $poly(n)$ time for any $\epsilon \in \Omega(1/poly(n))$ using 
    $poly(n)$ number of $U^{i,down}$ , $(U^{i,down})^\dagger$, $U^{j,up}$ , $(U^{j,up})^\dagger$ and other single-qubit and two-qubit gates, respectively. 
\end{theorem}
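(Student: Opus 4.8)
The plan is to reduce the estimation of $\beta^{i,j}_k/n_k$ to the estimation of two normalized subspace dimensions, reusing the template of Theorem~\ref{thm:Betti_numbers} almost verbatim. Using the identity recorded in Section~\ref{sec:app:persistent_Betti_numbers},
\[
\beta^{i,j}_k = \dim\!\big(\ker(\Delta_k^{i,down})\big) - \dim\!\big(\ker(\Delta_k^{i,down}) \cap \mathrm{Im}(\Delta_k^{j,up})\big),
\]
together with $\ker(\Delta_k^{i,down})=Z_k^i$ (the $k$-cycles of $X^i$) and $\mathrm{Im}(\Delta_k^{j,up})=B_k^j$ (the $k$-boundaries of $X^j$), it suffices to estimate $\dim(Z_k^i)/n_k$ and $\dim(Z_k^i\cap B_k^j)/n_k$ each to additive error $\epsilon/2$ and subtract. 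I work throughout in the common $(n+2)$-qubit representation: since $E_i\subseteq E_j$ gives $X^i_k\subseteq X^j_k$ and hence $C_k^i\subseteq C_k^j$, the projectors $\mathrm{Proj}(Z_k^i)$ and $\mathrm{Proj}(B_k^j)$ live on the same Hilbert space and $Z_k^i\cap B_k^j\subseteq C_k^i$.

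First I would use the promised uniform sampler over the $k$-simplices of $X^i$ to prepare $\tilde{\rho}_k$ with $\|\tilde{\rho}_k-\rho_k\|\le\delta$, where $\rho_k=\frac{1}{n_k}\sum_{\sigma\in X^{+,i}_k}\ket{\sigma}\bra{\sigma}=\frac{1}{n_k}\Pi_k^i$ with $\Pi_k^i$ the projector onto the $k$-simplices of $X^i$. For the first dimension I apply Theorem~\ref{thm:main} to $U^{down}_i$ to obtain a $\Pi_k$-projected $(1,\alpha,\epsilon')$-unitary encoding $U_{Z_k}$ of $\mathrm{Proj}(Z_k^i)$ with $\mathcal{O}(K^{down}\log(1/\epsilon')/\lambda^{i,down})$ walk steps, convert it into a block-encoding with a $\mathrm{C}_{\Pi}\mathrm{NOT}$ and one ancilla, and run the block-measurement of Lemma~\ref{lemma:block_measurement} on $\tilde{\rho}_k$. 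The output-qubit probability of $1$ equals $\mathrm{tr}(\mathrm{Proj}(Z_k^i)\rho_k)=\dim(Z_k^i)/n_k$ up to $O(\delta+\epsilon')$, and Chernoff--Hoeffding sampling then furnishes the estimate, exactly as in Theorem~\ref{thm:Betti_numbers}.

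The second dimension is the crux. I would build block-encodings of $\mathrm{Proj}(Z_k^i)$ (from $U^{down}_i$ via QSVT with a rectangle function thresholding $\Delta_k^{i,down}$ below $\lambda^{i,down}$) and of $\mathrm{Proj}(B_k^j)$ (from $U^{up}_j$ via the complementary QSVT polynomial projecting onto $\mathrm{Im}(\Delta_k^{j,up})$, i.e.\ the $U_{B_k}$ construction of Theorem~\ref{thm:main}), multiply them with the product rule for block-encodings of~\cite{gilyen2019quantum}, and then apply a further rectangle QSVT to the product that retains only singular values near $1$. This yields a block-encoding of an approximation to $\mathrm{Proj}(Z_k^i\cap B_k^j)$, and a final block-measurement on $\tilde{\rho}_k$ estimates $\dim(Z_k^i\cap B_k^j)/n_k=\mathrm{tr}(\mathrm{Proj}(Z_k^i\cap B_k^j)\rho_k)$. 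Subtracting the two estimates and propagating the error budgets $\delta,\epsilon'$ and the sampling error gives the desired $\epsilon/2+\epsilon/2$ accuracy.

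The main obstacle is controlling this last thresholding: I must show that the singular values of $\mathrm{Proj}(Z_k^i)\,\mathrm{Proj}(B_k^j)$ that are strictly below $1$ are separated from $1$ by an inverse-polynomial gap, so that the rectangle QSVT isolates the intersection using only $\mathrm{poly}(n)$ walk steps. Equivalently, the nonzero principal angles between $Z_k^i$ and $B_k^j$ must be bounded away from $0$; using the Hodge decomposition $Z_k^i=B_k^i\oplus\mathcal{H}_k^i$ together with $B_k^i\subseteq B_k^j$ one checks $Z_k^i\cap B_k^j=B_k^i\oplus(\mathcal{H}_k^i\cap B_k^j)$, so the gap is really a separation between the harmonic cycles of $X^i$ and the boundaries of $X^j$. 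I would derive this lower bound from the promised spectral gaps $\lambda^{i,down}$ and $\lambda^{j,up}$ together with the integrality of the (clique-complex) boundary matrices, which forces $\langle v|\mathrm{Proj}(B_k^j)|v\rangle$ to sit below $1$ by $1/\mathrm{poly}(n)$ for every unit $v$ in the orthogonal complement of $Z_k^i\cap B_k^j$ inside $Z_k^i$. This is the one place where a genuinely new estimate is needed beyond the routine block-measurement accounting; granting it, all thresholds, QSVT degrees, and sample counts stay polynomial in $n$, giving the claimed $\epsilon$-additive, $\mathrm{poly}(n)$-time estimate of $\beta^{i,j}_k/n_k$.
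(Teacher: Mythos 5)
Your proposal follows essentially the same route as the paper: the same identity $\beta^{i,j}_k = \dim(\ker(\Delta_k^{i,down})) - \dim(\ker(\Delta_k^{i,down})\cap\mathrm{Im}(\Delta_k^{j,up}))$, the same block-measurement estimation of the first term via Theorem~\ref{thm:Betti_numbers}, and the same treatment of the intersection term by multiplying the two projector encodings \`a la~\cite{gilyen2019quantum} and then applying a further QSVT, followed by block-measurement on the uniform mixture over $k$-simplices of $X^i$.

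The one place you diverge is in how the final thresholding step is justified. You correctly identify that isolating the singular value $1$ of $\mathrm{Proj}(Z_k^i)\,\mathrm{Proj}(B_k^j)$ with a polynomial-degree rectangle QSVT requires the nonzero principal angles between $Z_k^i$ and $B_k^j$ to be bounded away from $0$ by $1/\mathrm{poly}(n)$; the paper does not argue this in-line but instead performs the QSVT on the product following~\cite{mcardle2022streamlined}, deferring the gap accounting to that reference. Your proposed substitute --- deriving the separation from the integrality of the boundary matrices --- is the weak link: the relevant matrices act on a chain space of dimension $n_k$, which can be superpolynomial in $n$, and integrality of such matrices generically yields only inverse-exponential (determinant-type) lower bounds on principal angles, not $1/\mathrm{poly}(n)$. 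So if you want a self-contained proof rather than the paper's citation, you would need to relate the principal-angle gap to the promised spectral gaps $\lambda^{i,down}$ and $\lambda^{j,up}$ (or state it as an additional promise), not to integrality. Everything else in your write-up matches the paper's argument.
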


\begin{proof}
This result is inspired by the quantum algorithm of~\cite{mcardle2022streamlined}. 
The quantum algorithm of~\cite{mcardle2022streamlined} is based on the following fact: 
$$
\beta^{i,j}_k = \dim (\ker (\partial_k^i))- \dim (\ker (\partial_k^i) \cap \mathrm{Im}(\partial_{k+1}^j)).
$$
It can be seen that 
$$
\ker(\partial_k^i)=\ker((\partial^i_{k})^*\partial_k^i)=\ker(\Delta_k^{i,down})
= Z_k^i,
$$
and
$$
\mathrm{Im}(\partial_{k+1}^j)
=
\mathrm{Im}(\partial_{k+1}^j(\partial_{k+1}^j)^*)
=
\mathrm{Im}(\Delta_k^{j,up})
=
B_k^j
.
$$
Therefore, 
$$
\beta^{i,j}_k = \dim (\ker (\Delta_k^{i,down}))- \dim (\ker (\Delta_k^{i,down}) \cap (\mathrm{Im}(\Delta_k^{j,up})).
$$
We can therefore estimate $\beta^{i,j}_k/n_k$ with estimating 
$$\frac{\dim (\ker (\Delta_k^{i,down}))}{n_k}\ \ \text{and}\ \ 
\frac{\dim (\ker (\Delta_k^{i,down}) \cap (\mathrm{Im}(\Delta_k^{j,up}))}{n_k}
$$
separately. 
The estimation of the first quantity can be achieved with the unitary encoding of 
$$
\mathrm{Proj}(\ker (\Delta_k^{i,down})),
$$
and the estimation of the second quantity can be achieved with the unitary encoding of 
$$
\mathrm{Proj}\left(\ker (\Delta_k^{i,down}) \cap (\mathrm{Im}(\Delta_k^{j,up})\right).
$$  

{
Next, we estimate
\[
\dim\!\left(\ker(\Delta_k^{i,down}) \cap \ker(\Delta_k^{j,up})^\perp\right)/n_k.
\]
A unitary encoding of
\[
\mathrm{Proj}\left(\ker(\Delta_k^{i,down}) \cap \ker(\Delta_k^{j,up})^\perp\right)
\]
can be implemented as follows. First, we construct a unitary encoding of
\begin{align}\label{eq:prod_proj}
\mathrm{Proj}\!\left(\ker (\Delta_k^{i,down})\right)\cdot
\mathrm{Proj}\!\left( \ker(\Delta_k^{j,up})^\perp\right) 
\end{align}
by composing the unitary encodings~\cite{gilyen2019quantum} of
\[
\mathrm{Proj}\left(\ker(\Delta_k^{i,down})\right)
\quad \text{and} \quad
\mathrm{Proj}\left(\ker(\Delta_k^{j,up})^\perp\right).
\]
Then, following~\cite{mcardle2022streamlined}, we apply QSVT to the resulting unitary. 
The singular value $1$ of the operator in eq.~\eqref{eq:prod_proj} coincides with the subspace
\[
\ker(\Delta_k^{i,down}) \cap \ker(\Delta_k^{j,up})^\perp,
\]
and is separated from the rest of the singular spectrum by an inverse-polynomial gap under the assumption. 
Indeed, a vector lies in the singular value $1$ subspace of the operator in eq.~\eqref{eq:prod_proj} if and only if it is invariant under both projectors, and hence belongs to 
\[
\ker(\Delta_k^{i,down}) \cap \ker(\Delta_k^{j,up})^\perp.
\]
Therefore, QSVT yields a unitary encoding of
\[
\mathrm{Proj}\left(\ker(\Delta_k^{i,down}) \cap \ker(\Delta_k^{j,up})^\perp\right).
\]

Using this encoding, we can estimate
\[
\dim\!\left(\ker(\Delta_k^{i,down}) \cap \ker(\Delta_k^{j,up})^\perp\right)/n_k
\]
in the same manner as in the proof of Theorem~\ref{thm:Betti_numbers}.
The estimation can be performed efficiently provided that
\[
1/\lambda^{i,down},\; 1/\lambda^{j,up} \in poly(n).
\]
 }

\end{proof}

{

\section{Application to the High-Dimensional Discrete Dirichlet Problem}
\label{sec:dirichlet}

In this section, we consider an application of our quantum walk framework to the high-dimensional discrete Dirichlet problem (HDDP) introduced by Rosenthal~\cite{rosenthal2014simplicial} and discuss potential quantum advantage for this problem.
 
\subsection{Problem Setup}

The Dirichlet problem is a fundamental problem in mathematics and physics, with a rich history dating back to the 19th century.
In the discrete setting, the Dirichlet problem on a graph asks for a harmonic function on the interior vertices of a graph given prescribed values on the boundary vertices, and arises naturally in a wide range of applications including electrical network analysis, random walks, and machine learning~\cite{doyle1984random}.

HDDP~\cite{rosenthal2014simplicial} generalizes this classical problem to the setting of simplicial complexes, where the unknown function is a $k$-cochain rather than a function on vertices.
This generalization is significant because it provides a natural framework for studying harmonic $k$-forms with prescribed boundary conditions, connecting spectral geometry, Hodge theory, and combinatorial topology.

The high-dimensional discrete Dirichlet problem is defined as follows.
 
\begin{definition}[High-Dimensional Discrete Dirichlet Problem (HDDP)]
Let $X = X(G)$ be the clique complex of a graph $G$ on $n$ vertices, and 
let $X_k = A \sqcup (X_k \setminus A)$ be a partition into \emph{boundary} simplices $A$ and \emph{interior} simplices $X_k \setminus A$, and let $f \in C^k(A;\mathbb{R})$ be a boundary cochain.
We assume the partition $X_k = A \sqcup (X_k \setminus A)$ and the boundary cochain $f$ are described with $\mathrm{poly(n)}$ bit strings and we can efficiently decide $\sigma \in A$ or $\sigma \in X_k \setminus A$ for any $\sigma\in X_k$, and 
$$\ket{f}:= \frac{1}{\sqrt{\sum_{\sigma\in A} f^2(\sigma)}}\sum_{\sigma \in A} f(\sigma) \ket{\sigma}$$
can be prepared by $\mathrm{poly}(n)$-size quantum circuit.
The high-dimensional discrete Dirichlet problem asks for a $k$-cochain $F \in C^k(X_k;\mathbb{R})$ satisfying
\begin{equation}
    \Delta_k^{up} F\big|_{X_k \setminus A} = 0, \qquad F\big|_A = f.
\end{equation}
\end{definition}

We note that, throughout this section, we restrict attention to real coefficients and unweighted simplicial complexes.
By~\cite[Theorem~5.1 and Claim~5.2]{rosenthal2014simplicial}, a unique solution exists whenever
\begin{equation}
\label{eq:acyclicity}
    H_k(X, A) = 0,
\end{equation}
where $ H_k(X, A)$ is the relative homology.

\subsection{Quantum Algorithm}
 
Under the partition $X_k = A \sqcup (X_k \setminus A)$, the up Laplacian $\Delta_k^{up}$ admits the block decomposition~\cite{rosenthal2014simplicial}
\begin{equation}
    \Delta_k^{up}= \begin{pmatrix} \Delta^{up}_A & -R \\ -Q & \Delta^{up}_{X_k \setminus A} \end{pmatrix},
\end{equation}
where $\Delta^{up}_{X_k \setminus A} := \Pi_{X_k \setminus A} \,\Delta_k^{up}\, \Pi_{X_k \setminus A} |_{C_k(X_k \setminus A)}$ is the \emph{Dirichlet Laplacian} and $\Pi_{X_k \setminus A}$ denotes the orthogonal projector onto the subspace spanned by interior simplices $X_k \setminus A$.
The Dirichlet condition $\Delta_k^{up} F|_{X_k \setminus A} = 0$ then reduces to the linear system
\begin{equation}
    \Delta^{up}_{X_k \setminus A}\, F_{X_k \setminus A} = Q f =: |r\rangle,
\end{equation}
where $F_{X_k \setminus A}$ denotes the restriction of $F$ to interior simplices.
Under the condition of eq.~\eqref{eq:acyclicity}, $\Delta^{up}_{X_k \setminus A}$ is positive definite on $C_k(X_k \setminus A)$, so the solution $F_{X_k \setminus A} = (\Delta^{up}_{X_k \setminus A})^{-1} |r\rangle$ is unique.
 
A block encoding of $\Delta^{up}_{X_k \setminus A}$ follows from that of $\Delta_k^{up}$ (Proposition~\ref{prop:unitaryencoding_up}) by restricting the simplex register from $X^{k}$ to $X^{k} \setminus A$, which is a straightforward modification for clique complexes and the access to the given partition $X_k = A \sqcup (X_k \setminus A)$.
We then apply QSVT~\cite{gilyen2019quantum} with a polynomial approximation to $x^{-1}$ on the spectral interval $[\lambda_{\min}, \|\Delta^{up}_{X_k \setminus A}\|]$, where $\lambda_{\min} > 0$ is the smallest eigenvalue of $\Delta^{up}_{X_k \setminus A}$.
 
\begin{theorem}
\label{thm:dirichlet}
Let $X = X(G)$ be the clique complex of a graph $G$ on $n$ vertices, and suppose $H_k(X, A) = 0$.
Let $\lambda_{\min} > 0$ be the smallest eigenvalue of $\Delta^{up}_{X_k \setminus A}$, and suppose $\|Qf\| = \Omega(\|f\|)$.
Then there exists a quantum algorithm that outputs a state $\varepsilon$-close to
\begin{equation}
    |F_{X_k \setminus A}\rangle \propto (\Delta^{up}_{X_k \setminus A})^{-1}\, |r\rangle
\end{equation}
using
\begin{equation}
    O\!\left(\frac{\mathrm{poly}(n) \cdot \log(1/\varepsilon)}{\lambda_{\min}}\right)
\end{equation}
gates.
\end{theorem}

\begin{remark}
Because $f$ is known classically, the full solution 
$F \in C^k(X_k; \mathbb{R})$ is obtained by combining 
$F_{X_k \setminus A}$ with the boundary values $f$.
\end{remark}

\begin{remark}
The assumption $\|Qf\| = \Omega(\|f\|)$ indicates that
the boundary values $f$ have a non-trivial influence on the interior simplices. We believe this is a natural assumption. 
\end{remark}
 
\begin{proof}
The block encoding of $\Delta^{up}_{X_k \setminus A}$ is obtained from Proposition~\ref{prop:unitaryencoding_up} at cost $\mathrm{poly}(n)$.
The right-hand side $|r\rangle = Q|f\rangle = \Pi_{X_k \setminus A} \,\Delta_k^{up}\, \Pi_A |f\rangle$ is prepared by one application of the block encoding of $\Delta_k^{up}$ to $\Pi_A |f\rangle$, followed by the projector $\Pi_{X_k \setminus A}$, which is implementable in $O(n^2)$ gates for clique complexes. Under the assumption $\|Qf\| = \Omega(\|f\|)$, the post-projection state $|r\rangle$ is obtained with $\Omega(1)$ success probability. The total cost is $\mathrm{poly}(n)$ by assumption on $|f\rangle$.
We then apply QSVT~\cite{gilyen2019quantum} with a degree-$O(\log(1/\varepsilon)/\lambda_{\min})$ polynomial approximation to $x^{-1}$, each iteration using one call to the block encoding of $\Delta^{up}_{X_k \setminus A}$.
The total gate count is $O(\mathrm{poly}(n) \cdot \log(1/\varepsilon) / \lambda_{\min})$.
\end{proof}
 
\subsection{Quantum Advantage}
 
We compare the quantum complexity of Theorem~\ref{thm:dirichlet} with the best known classical approaches.
The space $C^k(X_k)$ has dimension $|X_k| = \binom{n}{k+1} = \Theta(n^{k+1})$.
Classical direct solvers for linear systems $\Delta^{up}_{X_k \setminus A}\, F_{X_k \setminus A} = |r\rangle$ therefore require $\Omega(n^{(k+1)\omega})$ operations, where $\omega \approx 2.37$ is the matrix multiplication exponent.
For non-constant $k$, this best known classical direct-solver approach takes superpolynomial time. Hence, the quantum algorithm of Theorem~\ref{thm:dirichlet} indicates an apparent superpolynomial speedup over the best known classical algorithm whenever $\lambda_{\min} = \Omega(1/\mathrm{poly}(n))$.
We leave showing a quantum speedup for HDDP under plausible computational complexity assumptions as an open problem.

We remark that the assumption $\|Qf\| = \Omega(\|f\|)$ is 
natural, since it holds whenever the boundary cochain $f$ is not concentrated on simplices that are decoupled from the interior.
When $\|Qf\|$ is exponentially small, the preparation of $|r\rangle$ requires exponentially many repetitions, and the complexity of the HDDP in this regime remains an open problem.
It would be interesting to characterize the boundary conditions $f$ for which $\|Qf\| = \Omega(\|f\|)$ holds, and to determine whether a quantum speedup persists in the regime $\|Qf\| = o(\|f\|)$.
}

\section*{Acknowledgement}
RH was supported by JSPS KAKENHI Grant Number JP22J11727 and PRESTO Grant Number JPMJPR23F9, Japan.

\bibliographystyle{alphaurl}
\bibliography{main}

\newcommand{\etalchar}[1]{$^{#1}$}
\begin{thebibliography}{LST{\etalchar{+}}25b}

\bibitem[AAMP20]{acasiete2020implementation}
Frank Acasiete, Flavia~P Agostini, J~Khatibi Moqadam, and Renato Portugal.
\newblock Implementation of quantum walks on ibm quantum computers.
\newblock {\em Quantum Information Processing}, 19:1--20, 2020.
\newblock \href {https://doi.org/10.1007/s11128-020-02938-5}
  {\path{doi:10.1007/s11128-020-02938-5}}.

\bibitem[AGJK20]{ambainis2020quadratic}
Andris Ambainis, Andr{\'a}s Gily{\'e}n, Stacey Jeffery, and Martins Kokainis.
\newblock Quadratic speedup for finding marked vertices by quantum walks.
\newblock In {\em Proceedings of the 52nd Annual ACM SIGACT Symposium on Theory
  of Computing}, pages 412--424, 2020.
\newblock \href {https://doi.org/10.1145/3357713.3384252}
  {\path{doi:10.1145/3357713.3384252}}.

\bibitem[AGSS23]{apers2023simple}
Simon Apers, Sander Gribling, Sayantan Sen, and D{\'a}niel Szab{\'o}.
\newblock A (simple) classical algorithm for estimating betti numbers.
\newblock {\em Quantum}, 7:1202, 2023.
\newblock \href {https://doi.org/10.22331/q-2023-12-06-1202}
  {\path{doi:10.22331/q-2023-12-06-1202}}.

\bibitem[AS04]{aaronson2004quantum}
Scott Aaronson and Yaoyun Shi.
\newblock Quantum lower bounds for the collision and the element distinctness
  problems.
\newblock {\em Journal of the ACM (JACM)}, 51(4):595--605, 2004.
\newblock \href {https://doi.org/10.1145/1008731.1008735}
  {\path{doi:10.1145/1008731.1008735}}.

\bibitem[AS19]{apers2019quantum}
Simon Apers and Alain Sarlette.
\newblock Quantum fast-forwarding: Markov chains and graph property testing.
\newblock {\em Quantum Information \& Computation}, 19(3-4):181--213, 2019.
\newblock \href {https://doi.org/10.26421/qic19.3-4-1}
  {\path{doi:10.26421/qic19.3-4-1}}.

\bibitem[BCI{\etalchar{+}}20]{battiston2020networks}
Federico Battiston, Giulia Cencetti, Iacopo Iacopini, Vito Latora, Maxime
  Lucas, Alice Patania, Jean-Gabriel Young, and Giovanni Petri.
\newblock Networks beyond pairwise interactions: Structure and dynamics.
\newblock {\em Physics Reports}, 874:1--92, 2020.
\newblock \href {https://doi.org/10.1016/j.physrep.2020.05.004}
  {\path{doi:10.1016/j.physrep.2020.05.004}}.

\bibitem[BGHS23]{bick2023higher}
Christian Bick, Elizabeth Gross, Heather~A Harrington, and Michael~T Schaub.
\newblock What are higher-order networks?
\newblock {\em SIAM Review}, 65(3):686--731, 2023.
\newblock \href {https://doi.org/10.1137/21m1414024}
  {\path{doi:10.1137/21m1414024}}.

\bibitem[BLH25]{balasubramanian2023exponential}
Shankar Balasubramanian, Tongyang Li, and Aram~W Harrow.
\newblock Exponential speedups for quantum walks in random hierarchical graphs.
\newblock {\em Communications in Mathematical Physics}, 406(9):209, 2025.
\newblock \href {https://doi.org/10.1007/s00220-025-05370-x}
  {\path{doi:10.1007/s00220-025-05370-x}}.

\bibitem[B{\v{S}}06]{buhrman2006quantum}
Harry Buhrman and Robert {\v{S}}palek.
\newblock Quantum verification of matrix products.
\newblock In {\em Proceedings of the seventeenth annual ACM-SIAM symposium on
  Discrete algorithm}, pages 880--889, 2006.
\newblock \href {https://doi.org/10.1145/1109557.1109654}
  {\path{doi:10.1145/1109557.1109654}}.

\bibitem[BSG{\etalchar{+}}24]{berry2024analyzing}
Dominic~W Berry, Yuan Su, Casper Gyurik, Robbie King, Joao Basso, Alexander
  Del~Toro Barba, Abhishek Rajput, Nathan Wiebe, Vedran Dunjko, and Ryan
  Babbush.
\newblock Analyzing prospects for quantum advantage in topological data
  analysis.
\newblock {\em PRX Quantum}, 5(1):010319, 2024.
\newblock \href {https://doi.org/10.1103/prxquantum.5.010319}
  {\path{doi:10.1103/prxquantum.5.010319}}.

\bibitem[CC24]{cade2021complexity}
Chris Cade and P~Marcos Crichigno.
\newblock Complexity of supersymmetric systems and the cohomology problem.
\newblock {\em Quantum}, 8:1325, 2024.
\newblock \href {https://doi.org/10.22331/q-2024-04-30-1325}
  {\path{doi:10.22331/q-2024-04-30-1325}}.

\bibitem[CCD{\etalchar{+}}03]{childs2003exponential}
Andrew~M Childs, Richard Cleve, Enrico Deotto, Edward Farhi, Sam Gutmann, and
  Daniel~A Spielman.
\newblock Exponential algorithmic speedup by a quantum walk.
\newblock In {\em Proceedings of the thirty-fifth annual ACM symposium on
  Theory of computing}, pages 59--68, 2003.
\newblock \href {https://doi.org/10.1145/780542.780552}
  {\path{doi:10.1145/780542.780552}}.

\bibitem[CK24]{crichigno2022clique}
Marcos Crichigno and Tamara Kohler.
\newblock Clique homology is qma 1-hard.
\newblock {\em Nature Communications}, 15(1):9846, 2024.
\newblock \href {https://doi.org/10.1038/s41467-024-54118-z}
  {\path{doi:10.1038/s41467-024-54118-z}}.

\bibitem[CNW10]{chiang2009efficient}
Chen-Fu Chiang, Daniel Nagaj, and Pawel Wocjan.
\newblock Efficient circuits for quantum walks.
\newblock {\em Quantum Information \& Computation}, 10(5):420--434, 2010.
\newblock \href {https://doi.org/10.26421/qic10.5-6-4}
  {\path{doi:10.26421/qic10.5-6-4}}.

\bibitem[DGG{\etalchar{+}}19]{dadras2019experimental}
Siamak Dadras, Alexander Gresch, Caspar Groiseau, Sandro Wimberger, and Gil~S
  Summy.
\newblock Experimental realization of a momentum-space quantum walk.
\newblock {\em Physical Review A}, 99(4):043617, 2019.
\newblock \href {https://doi.org/10.1103/physreva.99.043617}
  {\path{doi:10.1103/physreva.99.043617}}.

\bibitem[DS84]{doyle1984random}
Peter~G Doyle and J~Laurie Snell.
\newblock {\em Random walks and electric networks}, volume~22.
\newblock American Mathematical Soc., 1984.
\newblock \href {https://doi.org/10.5948/UPO9781614440222}
  {\path{doi:10.5948/UPO9781614440222}}.

\bibitem[DW22]{dey2022computational}
Tamal~Krishna Dey and Yusu Wang.
\newblock {\em Computational topology for data analysis}.
\newblock Cambridge University Press, 2022.
\newblock \href {https://doi.org/10.1017/9781009099950}
  {\path{doi:10.1017/9781009099950}}.

\bibitem[EH22]{edelsbrunner2022computational}
Herbert Edelsbrunner and John~L Harer.
\newblock {\em Computational topology: an introduction}.
\newblock American Mathematical Society, 2022.
\newblock \href {https://doi.org/10.1090/mbk/069} {\path{doi:10.1090/mbk/069}}.

\bibitem[EM23]{eidi2023irreducibility}
Marzieh Eidi and Sayan Mukherjee.
\newblock Irreducibility of markov chains on simplicial complexes, the spectrum
  of the discrete hodge laplacian and homology.
\newblock {\em arXiv preprint arXiv:2310.07912}, 2023.
\newblock \href {https://doi.org/10.48550/arXiv.2310.07912}
  {\path{doi:10.48550/arXiv.2310.07912}}.

\bibitem[FHM18]{fitzsimons2018post}
Joseph~F Fitzsimons, Michal Hajdu{\v{s}}ek, and Tomoyuki Morimae.
\newblock Post hoc verification of quantum computation.
\newblock {\em Physical review letters}, 120(4):040501, 2018.
\newblock \href {https://doi.org/10.1103/physrevlett.120.040501}
  {\path{doi:10.1103/physrevlett.120.040501}}.

\bibitem[GCD22]{gyurik2022towards}
Casper Gyurik, Chris Cade, and Vedran Dunjko.
\newblock Towards quantum advantage via topological data analysis.
\newblock {\em Quantum}, 6:855, 2022.
\newblock \href {https://doi.org/10.22331/q-2022-11-10-855}
  {\path{doi:10.22331/q-2022-11-10-855}}.

\bibitem[GKK19]{gheorghiu2019verification}
Alexandru Gheorghiu, Theodoros Kapourniotis, and Elham Kashefi.
\newblock Verification of quantum computation: An overview of existing
  approaches.
\newblock {\em Theory of computing systems}, 63:715--808, 2019.
\newblock \href {https://doi.org/10.1007/s00224-018-9872-3}
  {\path{doi:10.1007/s00224-018-9872-3}}.

\bibitem[GSK{\etalchar{+}}24]{gyurik2024quantum}
Casper Gyurik, Alexander Schmidhuber, Robbie King, Vedran Dunjko, and Ryu
  Hayakawa.
\newblock Quantum computing and persistence in topological data analysis.
\newblock {\em arXiv preprint arXiv:2410.21258}, 2024.
\newblock \href {https://doi.org/10.48550/arXiv.2410.21258}
  {\path{doi:10.48550/arXiv.2410.21258}}.

\bibitem[GSLW19]{gilyen2019quantum}
Andr{\'a}s Gily{\'e}n, Yuan Su, Guang~Hao Low, and Nathan Wiebe.
\newblock Quantum singular value transformation and beyond: exponential
  improvements for quantum matrix arithmetics.
\newblock In {\em Proceedings of the 51st Annual ACM SIGACT Symposium on Theory
  of Computing}, pages 193--204, 2019.
\newblock \href {https://doi.org/10.1145/3313276.3316366}
  {\path{doi:10.1145/3313276.3316366}}.

\bibitem[Hay22]{hayakawa2022quantum}
Ryu Hayakawa.
\newblock Quantum algorithm for persistent betti numbers and topological data
  analysis.
\newblock {\em Quantum}, 6:873, 2022.
\newblock \href {https://doi.org/10.22331/q-2022-12-07-873}
  {\path{doi:10.22331/q-2022-12-07-873}}.

\bibitem[HGRD25]{hayakawa2025computational}
Ryu Hayakawa, Casper Gyurik, Mahtab~Yaghubi Rad, and Vedran Dunjko.
\newblock Computational complexity of the homology problem with orientable
  filtration: Ma-completeness.
\newblock {\em arXiv preprint arXiv:2510.07014}, 2025.
\newblock \href {https://doi.org/10.48550/arXiv.2510.07014}
  {\path{doi:10.48550/arXiv.2510.07014}}.

\bibitem[HY20]{hoyer2020analysis}
Peter H{\o}yer and Zhan Yu.
\newblock Analysis of lackadaisical quantum walks.
\newblock {\em Quantum Information and Computation}, 20(13\&14):1138--1153,
  2020.
\newblock \href {https://doi.org/10.26421/qic20.13-14-4}
  {\path{doi:10.26421/qic20.13-14-4}}.

\bibitem[JZ23]{jeffery2023multidimensional}
Stacey Jeffery and Sebastian Zur.
\newblock Multidimensional quantum walks.
\newblock In {\em Proceedings of the 55th Annual ACM Symposium on Theory of
  Computing}, pages 1125--1130, 2023.
\newblock \href {https://doi.org/10.1145/3564246.3585158}
  {\path{doi:10.1145/3564246.3585158}}.

\bibitem[KK24]{king2023promise}
Robbie King and Tamara Kohler.
\newblock Gapped clique homology on weighted graphs is \({\text{qma}\_1}\)-hard
  and contained in qma.
\newblock {\em SIAM Journal on Computing}, 0(0):FOCS24--137--FOCS24--235, 2024.
\newblock \href {https://doi.org/10.1137/24M1710243}
  {\path{doi:10.1137/24M1710243}}.

\bibitem[KO20]{kaufman2020high}
Tali Kaufman and Izhar Oppenheim.
\newblock High order random walks: Beyond spectral gap.
\newblock {\em Combinatorica}, 40:245--281, 2020.
\newblock \href {https://doi.org/10.1007/s00493-019-3847-0}
  {\path{doi:10.1007/s00493-019-3847-0}}.

\bibitem[LGZ16]{lloyd2016quantum}
Seth Lloyd, Silvano Garnerone, and Paolo Zanardi.
\newblock Quantum algorithms for topological and geometric analysis of data.
\newblock {\em Nature communications}, 7(1):10138, 2016.
\newblock \href {https://doi.org/10.1038/ncomms10138}
  {\path{doi:10.1038/ncomms10138}}.

\bibitem[LST{\etalchar{+}}25a]{leditto2025quantum}
Caesnan~MG Leditto, Angus Southwell, Behnam Tonekaboni, Muhammad Usman, and
  Kavan Modi.
\newblock Quantum hodgerank: Topology-based rank aggregation on quantum
  computers.
\newblock {\em Physical Review Applied}, 23(6):L061005, 2025.
\newblock \href {https://doi.org/10.1103/m6nc-ypl7}
  {\path{doi:10.1103/m6nc-ypl7}}.

\bibitem[LST{\etalchar{+}}25b]{leditto2025topological}
Caesnan~MG Leditto, Angus Southwell, Behnam Tonekaboni, Gregory~AL White,
  Muhammad Usman, and Kavan Modi.
\newblock Topological signal processing on quantum computers for higher-order
  network analysis.
\newblock {\em Physical Review Applied}, 23(5):054054, 2025.
\newblock \href {https://doi.org/10.1103/physrevapplied.23.054054}
  {\path{doi:10.1103/physrevapplied.23.054054}}.

\bibitem[LT18]{luo2018up}
Xin Luo and Tatsuya Tate.
\newblock Up and down grover walks on simplicial complexes.
\newblock {\em Linear Algebra and its Applications}, 545:174--206, 2018.
\newblock \href {https://doi.org/10.1016/j.laa.2018.01.036}
  {\path{doi:10.1016/j.laa.2018.01.036}}.

\bibitem[MGB26]{mcardle2022streamlined}
Sam McArdle, Andr{\'a}s Gily{\'e}n, and Mario Berta.
\newblock A streamlined quantum algorithm for topological data analysis with
  exponentially fewer qubits.
\newblock {\em Quantum 10, 2058}, 2026.
\newblock \href {https://doi.org/10.22331/q-2026-04-10-2058}
  {\path{doi:10.22331/q-2026-04-10-2058}}.

\bibitem[MMS11]{milosavljevic2011zigzag}
Nikola Milosavljevi{\'c}, Dmitriy Morozov, and Primoz Skraba.
\newblock Zigzag persistent homology in matrix multiplication time.
\newblock In {\em Proceedings of the twenty-seventh Annual Symposium on
  Computational Geometry}, pages 216--225, 2011.
\newblock \href {https://doi.org/10.1145/1998196.1998229}
  {\path{doi:10.1145/1998196.1998229}}.

\bibitem[MN07]{magniez2007quantum}
Fr{\'e}d{\'e}ric Magniez and Ashwin Nayak.
\newblock Quantum complexity of testing group commutativity.
\newblock {\em Algorithmica}, 48(3):221--232, 2007.
\newblock \href {https://doi.org/10.1007/s00453-007-0057-8}
  {\path{doi:10.1007/s00453-007-0057-8}}.

\bibitem[MN13]{mischaikow2013morse}
Konstantin Mischaikow and Vidit Nanda.
\newblock Morse theory for filtrations and efficient computation of persistent
  homology.
\newblock {\em Discrete \& Computational Geometry}, 50:330--353, 2013.
\newblock \href {https://doi.org/10.1007/s00454-013-9529-6}
  {\path{doi:10.1007/s00454-013-9529-6}}.

\bibitem[MOS16]{matsue2016quantum}
Kaname Matsue, Osamu Ogurisu, and Etsuo Segawa.
\newblock Quantum walks on simplicial complexes.
\newblock {\em Quantum Information Processing}, 15:1865--1896, 2016.
\newblock \href {https://doi.org/10.1007/s11128-016-1247-6}
  {\path{doi:10.1007/s11128-016-1247-6}}.

\bibitem[MOS18]{matsue2018quantum}
Kaname Matsue, Osamu Ogurisu, and Etsuo Segawa.
\newblock Quantum search on simplicial complexes.
\newblock {\em Quantum Studies: Mathematics and Foundations}, 5:551--577, 2018.
\newblock \href {https://doi.org/10.1007/s40509-017-0144-8}
  {\path{doi:10.1007/s40509-017-0144-8}}.

\bibitem[MRTC21]{martyn2021grand}
John~M Martyn, Zane~M Rossi, Andrew~K Tan, and Isaac~L Chuang.
\newblock Grand unification of quantum algorithms.
\newblock {\em PRX Quantum}, 2(4):040203, 2021.
\newblock \href {https://doi.org/10.1103/prxquantum.2.040203}
  {\path{doi:10.1103/prxquantum.2.040203}}.

\bibitem[MS16]{mukherjee2016random}
Sayan Mukherjee and John Steenbergen.
\newblock Random walks on simplicial complexes and harmonics.
\newblock {\em Random structures \& algorithms}, 49(2):379--405, 2016.
\newblock \href {https://doi.org/10.1002/rsa.20645}
  {\path{doi:10.1002/rsa.20645}}.

\bibitem[MWW22]{memoli2022persistent}
Facundo M{\'e}moli, Zhengchao Wan, and Yusu Wang.
\newblock Persistent laplacians: Properties, algorithms and implications.
\newblock {\em SIAM Journal on Mathematics of Data Science}, 4(2):858--884,
  2022.
\newblock \href {https://doi.org/10.1137/21m1435471}
  {\path{doi:10.1137/21m1435471}}.

\bibitem[PR17]{parzanchevski2017simplicial}
Ori Parzanchevski and Ron Rosenthal.
\newblock Simplicial complexes: spectrum, homology and random walks.
\newblock {\em Random Structures \& Algorithms}, 50(2):225--261, 2017.
\newblock \href {https://doi.org/10.1002/rsa.20657}
  {\path{doi:10.1002/rsa.20657}}.

\bibitem[QMS24]{qiang2024review}
Xiaogang Qiang, Shixin Ma, and Haijing Song.
\newblock Quantum walk computing: Theory, implementation, and application.
\newblock {\em Intelligent Computing}, 3:0097, 2024.
\newblock \href {https://doi.org/10.34133/icomputing.0097}
  {\path{doi:10.34133/icomputing.0097}}.

\bibitem[Ral21]{rall2021faster}
Patrick Rall.
\newblock Faster coherent quantum algorithms for phase, energy, and amplitude
  estimation.
\newblock {\em Quantum}, 5:566, 2021.
\newblock \href {https://doi.org/10.22331/q-2021-10-19-566}
  {\path{doi:10.22331/q-2021-10-19-566}}.

\bibitem[Ros14]{rosenthal2014simplicial}
Ron Rosenthal.
\newblock Simplicial branching random walks and their applications.
\newblock {\em arXiv preprint arXiv:1412.5406}, 2014.
\newblock \href {https://doi.org/10.48550/arXiv.1412.5406}
  {\path{doi:10.48550/arXiv.1412.5406}}.

\bibitem[SBBK08]{somma2008quantum}
Rolando~D Somma, Sergio Boixo, Howard Barnum, and Emanuel Knill.
\newblock Quantum simulations of classical annealing processes.
\newblock {\em Physical review letters}, 101(13):130504, 2008.
\newblock \href {https://doi.org/10.1103/physrevlett.101.130504}
  {\path{doi:10.1103/physrevlett.101.130504}}.

\bibitem[SBH{\etalchar{+}}20]{schaub2020random}
Michael~T Schaub, Austin~R Benson, Paul Horn, Gabor Lippner, and Ali Jadbabaie.
\newblock Random walks on simplicial complexes and the normalized hodge
  1-laplacian.
\newblock {\em SIAM Review}, 62(2):353--391, 2020.
\newblock \href {https://doi.org/10.1137/18m1201019}
  {\path{doi:10.1137/18m1201019}}.

\bibitem[SL23]{schmidhuber2023complexity}
Alexander Schmidhuber and Seth Lloyd.
\newblock Complexity-theoretic limitations on quantum algorithms for
  topological data analysis.
\newblock {\em PRX Quantum}, 4(4):040349, 2023.
\newblock \href {https://doi.org/10.1103/prxquantum.4.040349}
  {\path{doi:10.1103/prxquantum.4.040349}}.

\bibitem[Sze04]{szegedy2004quantum}
Mario Szegedy.
\newblock Quantum speed-up of markov chain based algorithms.
\newblock In {\em 45th Annual IEEE symposium on foundations of computer
  science}, pages 32--41. IEEE, 2004.
\newblock \href {https://doi.org/10.1109/focs.2004.53}
  {\path{doi:10.1109/focs.2004.53}}.

\bibitem[SZS{\etalchar{+}}21]{schaub2021signal}
Michael~T Schaub, Yu~Zhu, Jean-Baptiste Seby, T~Mitchell Roddenberry, and
  Santiago Segarra.
\newblock Signal processing on higher-order networks: Livin’on the edge...
  and beyond.
\newblock {\em Signal Processing}, 187:108149, 2021.
\newblock \href {https://doi.org/10.1016/j.sigpro.2021.108149}
  {\path{doi:10.1016/j.sigpro.2021.108149}}.

\bibitem[TLF{\etalchar{+}}18]{tang2018experimental}
Hao Tang, Xiao-Feng Lin, Zhen Feng, Jing-Yuan Chen, Jun Gao, Ke~Sun, Chao-Yue
  Wang, Peng-Cheng Lai, Xiao-Yun Xu, Yao Wang, et~al.
\newblock Experimental two-dimensional quantum walk on a photonic chip.
\newblock {\em Science advances}, 4(5):eaat3174, 2018.
\newblock \href {https://doi.org/10.1126/sciadv.aat3174}
  {\path{doi:10.1126/sciadv.aat3174}}.

\bibitem[WNW20]{wang2020persistent}
Rui Wang, Duc~Duy Nguyen, and Guo-Wei Wei.
\newblock Persistent spectral graph.
\newblock {\em International journal for numerical methods in biomedical
  engineering}, 36(9):e3376, 2020.
\newblock \href {https://doi.org/10.1002/cnm.3376}
  {\path{doi:10.1002/cnm.3376}}.

\bibitem[Won15]{wong2015grover}
Thomas~G Wong.
\newblock Grover search with lackadaisical quantum walks.
\newblock {\em Journal of Physics A: Mathematical and Theoretical},
  48(43):435304, 2015.
\newblock \href {https://doi.org/10.1088/1751-8113/48/43/435304}
  {\path{doi:10.1088/1751-8113/48/43/435304}}.

\bibitem[WW25]{wei2023persistent}
Xiaoqi Wei and Guo-Wei Wei.
\newblock Persistent topological laplacians—a survey.
\newblock {\em Mathematics}, 13(2):208, 2025.
\newblock \href {https://doi.org/10.3390/math13020208}
  {\path{doi:10.3390/math13020208}}.

\end{thebibliography}

\appendix

\section{Overview of the classical random walks on simplicial complexes}
\label{app:classical}

In this section, we give an overview for the definitions of up and down random walks of~\cite{parzanchevski2017simplicial}~\cite{mukherjee2016random}. 
The results related to these walks are summarized in Table~\ref{tab:classical}. 
Note that we use the notations we introduce in Section~\ref{sec:pre:adjacency}, that are sometimes slightly different from the notations of~\cite{parzanchevski2017simplicial}~\cite{mukherjee2016random}.

The Markov transition matrix of the up random walk of~\cite{parzanchevski2017simplicial} is defined as 
\begin{equation}
\label{NE:eq:up}
P_{\sigma,\sigma'}=
\begin{cases}
    p & \sigma'=\sigma \\
    \frac{1-p}{(k+1) \mathrm{deg}(\sigma)} & {\sigma'}\sim_{\uparrow}\overline{\sigma} \\
    0 & otherwise.
\end{cases}
\end{equation}
Here, $p$ is the parameter of laziness. 
Note that $\sigma$ transits to $\sigma'$ s.t. $\overline{\sigma}\sim_{\uparrow}$ in this definition while we transit to $\sigma'$ s.t. $\sigma\sim_{\uparrow}$ in our definition (Definition~\ref{def:up_random_walk}).

The Markov transition matrix of the down walk of \cite{mukherjee2016random} is defined using an absorbing state $\Theta$ as
\begin{equation}
\label{NE:eq:down}
P_{\sigma,\sigma'}=
\begin{cases}
    p & \sigma\neq\Theta\ and\ \sigma'=\sigma \\
    \frac{1-p}{(M-1)(k+1)}&\sigma\neq\Theta\ and\  {\sigma'}\sim_{\downarrow}\overline{\sigma} \\
    \frac{1-p}{(M-1)(k+1)}&\sigma\neq\Theta\ and\  \sigma'\sim_{\downarrow}\sigma \\
    \beta & \sigma\neq\Theta\ and\ \sigma'=\Theta \\
    1 & \sigma=\sigma'=\Theta\\
    0 & otherwise,
\end{cases}
\end{equation}
where $M\coloneqq \max_{\sigma X^{k-1}} \mathrm{deg}(\sigma) $ is the maximal $(k-1)$-degree. 
This definition is also different from our definition (Definition~\ref{def:down_random_walk}).

Let $p^{\sigma_0}_t(\sigma)$ be a probability of being in $\sigma$ after $t$-steps of random walks starting from $\sigma_0$. 
Then, the expectation process is defined as 
$$
\mathcal{E}^{\sigma_0}_t\coloneqq 
p^{\sigma_0}_t(\sigma) - p^{\sigma_0}_t(\overline{\sigma})
$$
and the normalized expectation process is defines through normalization to this process.

\section{Verification of the promise clique homology problem}
\label{sec:weighted_quantum_walk}

In this section, we provide a verification algorithm for the following problem using quantum walks on simplicial complexes.
\begin{definition}[Promise clique homology~\cite{king2023promise}]
Fix functions $k: \mathbb{N} \rightarrow \mathbb{N}$ and $g: \mathbb{N} \rightarrow[0, \infty)$, with $g(n) \geq 1 / \operatorname{poly}(n), k(n) \leq n$. The input to the problem is a vertex-weighted graph $G$ on $n$ vertices
where the weights of vertices are between $1/\operatorname{poly}(n)$ to $1$. 
We denote the clique complex of $G$ by $X$.

The task is to decide whether:
\begin{itemize}
    \item YES: the $k(n)$-th homology group of $X$ is non-trivial. 
    \item NO: the $k(n)$-th homology group of $X$ is trivial.
\end{itemize}
given a promise that in NO cases, the combinatorial Laplacian $\Delta^k$ has minimum eigenvalue $\lambda_{\min }\left(\Delta^k\right) \geq$ $g(n)$.
\end{definition}

Verification of QMA problems has been important in quantum computational complexity. 
For example, a protocol for a certain QMA-type problem has led to a novel interactive protocol for the verification of quantum computing which is done by untrusted quantum computing devices~\cite{gheorghiu2019verification, fitzsimons2018post}. 
Motivated by potentially providing a new protocol for the verification of quantum computing based on quantum walks, we consider the task of the homology problem in a prover and verifier setting. 

We provide a QMA-verification algorithm based on the harmonic quantum walk on simplicial complexes.
We provide a verification algorithm for the promise clique problem as follows: 

\begin{enumerate}
    \item The prover sends an $n$-qubit state $\ket{w}$ to the verifier (which is supposed to in $\ker(\Delta_k)$ in the YES case). 
    \item    The verifier performs the block-measurement of Lemma~\ref{lemma:block_measurement} with respect to the projected unitary encoding of the projector onto $\ker(\Delta_k)$ of Theorem~\ref{thm:main} based on the harmonic quantum walk. 
    \item Finally, if the measurement outcome is $1$,  the verifier outputs YES; otherwise, outputs NO.
\end{enumerate}

In order to perform the block-measurement, we need to know the spectral gap of the combinatorial Laplacian. However, we do not know the spectral gap in the YES case. Instead, we perform the block-measurement that corresponds to the spectral gap of  $g(n)/2$.

\begin{theorem}
\label{thm:promise_clique_homology}
    The above algorithm can efficiently verify the promise clique homology problem with completeness and soundness. 
\end{theorem}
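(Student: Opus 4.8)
The plan is to reduce both completeness and soundness to a single spectral statement about the approximate projector $\tilde\Pi$ produced by applying QSVT (Lemma~\ref{lemma:rectangle}) to the harmonic quantum walk unitary $U^h$ of Theorem~\ref{thm:main}, and then to feed $\tilde\Pi$ into the block-measurement of Lemma~\ref{lemma:block_measurement}. The decisive point is that, although the true spectral gap $\lambda_k$ is \emph{unknown} in the YES case (the kernel is non-trivial but the smallest non-zero eigenvalue could be arbitrarily small), the promise fixes a threshold $g(n)$ in the NO case. Recall from Proposition~\ref{prop:unitaryencoding} that $V^h$ encodes $\Delta_k/(\sqrt2 K)$, so an eigenvalue $\lambda$ of $\Delta_k$ is mapped to $\lambda/(\sqrt2 K)$. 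I would run QSVT with the rectangle polynomial of Lemma~\ref{lemma:rectangle} calibrated to center $t$ and width $\delta$ both of order $g(n)/(\sqrt2 K)$, so that the resulting $\tilde\Pi$ is a $\Pi_k$-projected $(1,\alpha,\epsilon)$-unitary encoding of the projector onto the span of eigenvectors of $\Delta_k$ with eigenvalue at most $g(n)/2$. By construction $P(0)\ge 1-\epsilon$ while $P(\lambda/(\sqrt2 K))\le\epsilon$ whenever $\lambda\ge g(n)$; thus $\tilde\Pi$ is $\epsilon$-close in operator norm to the exact projector onto that low-energy eigenspace.

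For \textbf{completeness}, I would take the honest witness $\ket w\in\ker(\Delta_k)=\mathcal{H}_k$, which exists precisely because $H_k$ is non-trivial in the YES case. Since $\ket w$ lies in the zero-eigenspace, we have $\|\tilde\Pi\ket w\|\ge 1-\epsilon$, and Lemma~\ref{lemma:block_measurement} then gives an acceptance probability (outcome $1$) of at least $1-O(\epsilon)$. For \textbf{soundness}, the promise $\lambda_{\min}(\Delta_k)\ge g(n)$ guarantees that \emph{every} eigenvalue of $\Delta_k$ exceeds the threshold $g(n)/2$, hence $\|\tilde\Pi\|\le\epsilon$ on the whole $k$-chain space $C_k$. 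Consequently, for any state $\ket w$ the prover sends — which we may assume lies in the image of $\Pi_k$, since components orthogonal to it only decrease acceptance — the acceptance probability is $O(\epsilon)$. Taking $\epsilon$ a sufficiently small constant separates the two cases with a constant gap, and the usual in-place amplification then yields the standard completeness/soundness thresholds.

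For \textbf{efficiency}, the degree of the rectangle polynomial is $\mathcal{O}(\log(1/\epsilon)/\delta)=\mathcal{O}(K\log(1/\epsilon)/g(n))$, so the number of uses of $U^h$, $(U^h)^\dagger$, and $\mathrm C_\Pi\mathrm{NOT}$ is $\mathcal{O}(K\log(1/\epsilon)/g(n))$ — this is exactly the bound of Theorem~\ref{thm:main} with $\lambda_k$ replaced by the fixed $g(n)/2$. Since $g(n)\ge 1/\mathrm{poly}(n)$ by hypothesis and $K\in\mathrm{poly}(n)$ for vertex-weighted clique complexes whose weights lie in $[1/\mathrm{poly}(n),1]$, this is $\mathrm{poly}(n)$, and each $U^h$ can be synthesized with $\tilde{\mathcal O}(n^2\log(1/\epsilon))$ gates by Theorem~\ref{theorem:clique}. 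Thus the verifier runs in polynomial time.

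I expect the main obstacle to be the clean \emph{decoupling of the unknown YES-case gap from the verification threshold}: one must check that calibrating the rectangle function to $g(n)/2$ simultaneously (i) retains the exact zero eigenvalue in the YES case regardless of where the smallest non-zero eigenvalue of $\Delta_k$ happens to sit, and (ii) rejects the \emph{entire} spectrum in the NO case because it is uniformly bounded below by $g(n)$. A secondary, more routine point is bookkeeping the additive $O(\epsilon)$ contributions arising from both the $\epsilon$-approximation of the projector (the polynomial tails of $P$) and the diamond-norm guarantee $\|\Lambda_\Pi-\tilde\Lambda_\Pi\|_\diamond=O(\epsilon)$ of the block-measurement, together with confirming that any component of the prover's state outside $\mathrm{Im}(\Pi_k)$ can only help soundness and never hurts completeness.
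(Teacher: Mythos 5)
Your proposal is correct and follows essentially the same route as the paper: calibrate the QSVT rectangle-function threshold to $g(n)/2$ (since the YES-case gap is unknown but the NO-case promise supplies a uniform lower bound $g(n)$), accept via block-measurement, and get completeness from the honest kernel witness and soundness from the fact that the entire NO-case spectrum lies above the threshold. Your version is in fact somewhat more explicit than the paper's about the polynomial calibration, the handling of prover states outside $\mathrm{Im}(\Pi_k)$, and the error bookkeeping, but there is no substantive difference in approach.
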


\begin{proof}

{

From 
Theorem~\ref{thm:main}, 
the approximate kernel projector $\tilde{\Pi}_{ker}$ satisfies 
$$
|\tilde{\Pi}_{\mathcal{H}_k}-\mathrm{Proj}(\mathcal{H}_k)| \leq \epsilon.
$$
According to Lemma~\ref{lemma:block_measurement}, we can construct a quantum channel $\tilde{\Lambda}_{\Pi_{\mathcal{H}_k}}$ which is approximately equal to a channel $\Lambda_{\Pi_{\mathcal{H}_k}}$ that maps as 
    $$
    \ket{0}\otimes \ket{\psi} \rightarrow
    \ket{1}\otimes \Pi_{\mathcal{H}_k} \ket{\psi} + \ket{0} \otimes (I-\Pi_{\mathcal{H}_k}) \ket{\psi}, 
    $$
such that $\| \Lambda_{\Pi_{\mathcal{H}_k}} - \tilde{\Lambda}_{\Pi_{\mathcal{H}_k}} \|_\diamond \in \mathcal{O}(\epsilon)$. 
What we can actually implement is a projector that is close to the projector onto the space spanned by eigenstates of $\Delta_k$ with energy below $g(n)/2$.

Therefore, in the NO case, the probability of outputting $1$ is at most $\mathcal{O}(\epsilon)$ for any state provided by the prover. This is because there are no eigenstates of $\Delta_k$ with eigenvalues below $g(n)$, and the gap between $g(n)/2$ and $g(n)$ is at least inverse-polynomially large.
Similarly, in the YES case, for a state $\ket{w}\in \ker(\Delta_k)$ provided by an honest prover, the probability of outputting $1$ is at least $1-\mathcal{O}(\epsilon)$.

By choosing $\epsilon \in O(1/\exp(n))$, the procedure remains efficient. This concludes the proof.
}
\end{proof}

\end{document}